\newtheorem{theorem}{Theorem}
\newtheorem{lemma}[theorem]{Lemma}
\newtheorem{definition}[theorem]{Definition}
\newtheorem{example}[theorem]{Example}
\newtheorem{proposition}[theorem]{Proposition}
\newenvironment{proof}[1][Proof]{\noindent\textbf{#1: }}{\ \rule{0.5em}{0.5em}}
\title{Weighted paths between partitions}
\author{Giovanni Rossi\\
\footnotesize{Department of Computer Science and Engineering - DISI}\\
\footnotesize{Mura Anteo Zamboni 7, Bologna 40126, Italy, giovanni.rossi6@unibo.it}}
\begin{document}

\maketitle

\begin{abstract}
How to quantify the distance between any two partitions of a finite set is an important issue in statistical classification, whenever different clustering results
need to be compared. Developing from the traditional Hamming distance between subsets or cardinality of their symmetric difference, this work considers alternative
metric distances between partitions. With one exception, all of them obtain as minimum-weight paths in the undirected graph corresponding to the Hasse diagram of the
partition lattice. Firstly, by focusing on the atoms of the lattice, one well-known partition distance is recognized to be in fact the analog of the Hamming distance
between subsets, with weights on edges of the Hasse diagram determined through the number of atoms in the unique maximal join-decomposition of partitions. Secondly,
another partition distance known as ``variation of information'' is seen to correspond to a minimum-weight path with edge weights determined by the entropy of
partitions. These two distances are next compared in terms of their upper and lower bounds over all pairs of partitions that are complements of one another. What
emerges is that the two distances share the same minimizers and maximizers, while a much rawer behavior is observed for the partition distance which does not 
correspond to a minimum-weight path. The idea of measuring the distance between partitions by means of minimum-weight paths in the Hasse diagram is further explored
by considering alternative symmetric and order-preserving/inverting partition functions (such as the the rank, in the simplest case) for assigning weights to edges.
What matters most, in such a general setting, turns out to be whether the weighting function is supermodular or else submodular, as this makes any minimum-weight path
visit the meet or else the join of the two partitions, depending on order preserving/inverting. Finally, two appendices are devoted respectively to a definition of
Euclidean distance between fuzzy partitions and the consensus partition (combinatorial optimization) problem.

\textbf{Keywords:} partition lattice, symmetric function, Hamming distance, Hasse diagram, geodesic distance, indicator function, graph of a polytope.

\textbf{MSC numbers:} 05A18, 05C12.
\end{abstract}

\section{Introduction}
Partitions are key instruments in many applicative scenarios at the interface of computer science, artificial intelligence and engineering, including pattern recognition,
data mining and bioinformatics, while also being \textit{``of central importance in the study of symmetric functions, a class of functions that pervades mathematics in
general''} \cite[p. 39]{Knuth2005} (see also \cite[Chapter 5]{RotaWay2009}, \cite{RosasSagan2006} and \cite[Chapter 7]{Stanley2012EnuCom} on symmetric function theory).
Below, symmetric functions are employed to define metric distances between partitions, which in turn are useful when different clustering results need to be compared. In
statistical classification, partitions of a data set may indeed be referred to as ``clusterings'', although the latter term relates to a richer set of structures than the
former. The issue addressed here typically arises since a local search clustering algorithm generally provides different outputs when initialized with different candidate
solutions (or inputs). On the other hand, a chosen clustering algorithm shall allow for different parametrizations, each yielding different results for the same data.
Finally, alternative clustering algorithms commonly partition the same set in alternative ways. In all these cases, a distance measure is essential for assessing the
proximity between diverse partitions \cite{Meila2007,Mirkin2013,CentralPartition}.

The issue is attracting considerable attention since the mid 60s \cite{Lerman1981,Rand1971,Renier1965}. More recently, since measuring the distance between partitions of a
population is fundamental for sibling relationship reconstruction in bioinformatics, several contributions over the last decade adopted a combinatorial approach for studying
one specific such a distance measure, here denoted by MMD as it relies on \textit{maximum matching}
\cite{Berger-Wolf+al2007,Konovalov2006,Konovalov+al2005B,Konovalov+al2005A,Sheikh+al2010}. More precisely, MMD can be shown \cite{Almudevar+1999,Day1981,Gusfield2002} to be
computable via the assignment problem \cite{KorteVygen2002}. Also, in most recent years sibship reconstruction has been tackled by means of a further partition distance
measure \cite{Brown+2012}, obtained axiomatically from information theory \cite{Meila2007} and called \textsl{variation of information} VI.

In this work, entire families of metric distances between partitions are considered, the principle aim being to have consistency and generalizations in terms of order (i.e.
lattice) theory. In fact, the general leading idea is the same as in \cite{LeclercMondjardet+1986,Leclerc1993,LeclercMondjardetLatticialConsensus,Monjardet1981},
namely to define distances between elements that are (partially) comparable in terms of a binary order relation, although attention is not limited to posets (partially
ordered sets), distributive lattices and semi-lattices, but mainly extends to the geometric lattice of partitions. Metrics for distributive lattices are usually defined in
terms of valuations (or modular lattice functions, such as the rank or cardinality of subsets in the Boolean case, see below). Conversely, valuations of the partition lattice
are constant functions \cite{Aigner79}, and therefore useless for defining metrics. Thus, the method proposed here relies on super/submodular lattice functions (referred to
as lower/upper valuations in \cite{LeclercMondjardet+1986}).

The first goal is to reproduce the traditional Hamming distance between two subsets, given by the number of atoms of the subset lattice included in either one but not in both
(i.e. the cardinality of their symmetric difference, see \cite{Bollobas86}). Such a benchmark is extended to the geometric lattice of partitions by focusing on atoms and
join-decompositions of lattice elements \cite{Aigner79,Stern99}. While every subset admits a unique such a decomposition, involving a number of atoms equal to the cardinality
(or rank) of the subset, a generic partition admits different join-decompositions, most of which redundant. The number of atoms involved in the unique maximal
join-decomposition of a partition is here referred to as the \textit{size} of that partition, yielding a function taking positive integer values, like the rank. In fact, the
two coincide for subset lattices but differ crucially for partition lattices. Roughly speaking, replacing the rank with the size yields a (i.e. the) Hamming distance between
partitions, denoted by HD. Apart from the resulting consonance in terms of ordered structures, HD and VI share important characterizing axioms (see \cite{Meila2007}).
Computable through scalar products between Boolean vectors, without any algorithmic issue, HD has a large range, and thus fine measurement sensitivity too.

The traditional Hamming distance between two subsets of a $n$-set is also the length of a shortest path between them in the Hasse diagram of the Boolean lattice of subsets.
Such a diagram is in fact the graph of the polytope \cite{BronstedConvex,Branko2001} given by the $n$-dimensional unit hypercube $[0,1]^n$, thus it has $2^n$ vertices that
bijectively correspond to subsets, and an edge links any two vertices when the two corresponding subsets are comparable in terms of the covering relation (see
\cite{Bollobas86,AlgeGraph} and below). In order to have exactly the same for the Hamming distance between partitions, these latter must be seen to correspond bijectively to
those graphs on $n$ (labelled) vertices each of whose components is complete. More precisely, denoting by $K_N=(N,N_2)$ the complete graph on vertex set $N=\{1,\ldots ,n\}$,
with $N_2=\{\{i,j\}:1\leq i<j\leq n\}$, partitions correspond bijectively to those graphs $G=(N,E),E\subseteq N_2$ each of whose components is a maximal complete subgraph or
a clique, and the geometric lattice of partitions of $N$ is the so-called polygon matroid defined on the edges of $K_N$ \cite[pp. 54, 259]{Aigner79}. The associated Hasse
diagram is thus recognized to be the graph of a polytope strictly included in the $\binom{n}{2}$-dimensional unit hypercube $[0,1]^{\binom{n}{2}}$. Specifically, the
$2^{\binom{n}{2}}$-set $\{0,1\}^{\binom{n}{2}}$ of hypercube vertices identifies the $2^{\binom{n}{2}}$-set of distinct graphs on vertex set $N$, whereas linear dependence
\cite{Whitney1935} entails that partitions only span $\mathcal B_n<2^{\binom{n}{2}}$ hypercube vertices, where $\mathcal B_n$ is the \textit{Bell number} of partitions of a
$n$-set ($n>1$) \cite{Aigner79,GrahamKnuth+1994,Rota1964}. While the covering relation between subsets assigns a unit weight to every edge of the $n$-cube \cite{Sebo+04},
edges of the polytope of partitions must be weighted through the size, which matches precisely the number of edges of the $\binom{n}{2}$-cube that collapse into a unique
edge of the included polytope. With these weights, the Hamming distance HD between partitions (like between subsets) is the minimum weight of a path connecting them.

The analysis then continues by observing that the size may be replaced with any alternative symmetric and (strictly) order-preserving/inverting partition function, such as
rank, entropy, logical entropy \cite{EllermanLogicalEntropy,EllermanLogic} and co-size (see below). Then, polytope edges have weights obtained as the difference between the
greater and the smaller value taken by the chosen function on the associated endpoints. Accordingly, the distance between two partitions remains the minimum weight of a path
connecting them. In particular, if the function assigning weights to edges is order-preserving and supermodular (like the size) or else submodular (like the rank), then the
minimum-weight path between any two partitions visits their meet or else their join, respectively. Analog results obtain for order-inverting and symmetric functions which
are either supermodular or else submodular.

Section 2 outlines the needed background, with emphasis on lattice functions and Hamming distances in general, while Section 3 introduces the proposed Hamming distance between
partitions, including an axiomatic characterization. Section 4 is devoted to bounding both the Hamming and variation-of-information distances over all pairs of partitions that
are complements of one another. Section 5 frames distances as minimum-weight paths in the Hasse diagram. Section 6 considers two further functions assigning weights to edges,
namely logical entropy and co-size. Sections 7 and 8 are two appendices detailing respectively a definition of Euclidean distance between fuzzy partitions and an exact
solution for the consensus partition (combinatorial optimization) problem. Section 9 concludes the paper with some final remarks.

\section{Preliminaries}
Throughout this work, the general concern is with metric distances $d(x,y)$ between elements $x,y\in X$ of a poset $(X,\geqslant)$, i.e. a (finite) set $X$ endowed with a
partial order relation $\geqslant$. Additionally, $X$ shall also be endowed with the meet $\wedge$ and join $\vee$ operators, so that $(X,\wedge,\vee)$ is a (complete) lattice
(see \cite{DaveyPriestley}). The ordered structures to be considered are grounded on a finite set $N=\{1,\ldots ,n\}$, where integers $1,\ldots ,n$ possibly denote the indices
of a data set. In particular, attention is going to be placed on the Boolean lattice $(2^N,\cap,\cup)$ of subsets of $N$ ordered by inclusion $\supseteq$ and, mostly, on the
geometric lattice $(\mathcal P^N,\wedge,\vee)$ of partitions of $N$ ordered by coarsening $\geqslant$ (see \cite{Aigner79,Stern99}). Generic subsets and partitions are denoted
respectively by $A,B\in 2^N$ and $P,Q\in\mathcal P^N$. Recall that a partition $P=\{A_1,\ldots ,A_{|P|}\}$ is a collection of (non-empty) pair-wise disjoint subsets, called
blocks, whose union is $N$. For any $P,Q\in\mathcal P^N$, if $P\geqslant Q$, then every block $B\in Q$ is included in some block $A\in P$, i.e. $A\supseteq B$. Hence the bottom
partition is $P_{\bot}=\{\{1\},\ldots ,\{n\}\}$ (like the bottom subset is $\emptyset$), while the top one is $P^{\top}=\{N\}$ (like $N$ is the top subset). Also, among
partitions the meet $\wedge$ is the coarsest-finer-than operator, while the join $\vee$ is the finest-coarser-than operator. The number $|\mathcal P^N|=\mathcal B_n$ of
partitions of $N$ is defined recursively by $\mathcal B_0:=1$ and $\mathcal B_n=\sum_{0\leq k<n}\binom{n-1}{k}\mathcal B_k$ (see \cite{Aigner79,GrahamKnuth+1994,Rota1964} on
Bell numbers).

For all ordered pairs $(x,y)\in X\times X$ of poset elements, the associated interval or segment is $[x,y]=\{z:x\leqslant z\leqslant y\}\subset X$, and $y$ is said to cover $x$,
denoted by $y>^*x$, if $[x,y]=\{x,y\}$. The Hasse diagram of poset $(X,\geqslant)$ is the graph $G=(X,E)$ whose vertices are elements $x\in X$ and edges are given by the covering
relation, i.e. $E=\{\{x,y\}:[x,y]=\{x,y\}\}$. Although these edges are sometimes assumed to be directed, thereby also indicating what elemets are covered/covering, still in the
present setting they are more fuitfully regarded as undirected, for this allows to consider paths where edges may be used in both directions. In fact, the distance between any
two vertices in a graph is the length of any shorthest path between them. More generally, if the graph is weighted, meaning that every edge has an associated (strictly positive)
weight, then the distance between any two vertices is the weight of a lightest path between them, where the weight of a path is the sum over its edges of their weight.

In a lattice $(X,\wedge,\vee)$ with bottom element $x_{\bot}$, the set $X_{\mathcal A}=\{x:x>^*x_{\bot}\}$ of atoms consists of all lattice elements that cover the bottom one.
In atomic lattices, every element $x\in X$ admits a decomposition $x=a_1\vee\cdots\vee a_k$ as a join of atoms $a_1,\ldots ,a_k\in X_{\mathcal A}$. Both the Boolean lattice
$(2^N,\cap,\cup)$ of subsets of $N$ and the geometric lattice $(\mathcal P^N,\wedge,\vee)$ of partitions of $N$ are atomic. For the former, atoms are the $n$ singletons
$\{i\},i\in N$. For the latter, atoms are the $\binom{n}{2}$ partitions consisting of $n-1$ blocks, out of which $n-2$ are singletons while the remaining one is a pair. Most
importantly, every subset $A\in 2^N$ admits a unique join-decomposition, namely $A=\cup_{i\in A}\{i\}$. Conversely, partitions generally admit several join-decompositions.
However, every partition $P\in\mathcal P^N$ admits a unique maximal join-decomposition, which includes all atoms finer than $P$. In the sequel, a great deal of attention shall
be placed on such a number of atoms finer than any given partition, to be referred to as the size of partitions.

\subsection{Lattice functions}
In order to consider alternative weights over the edges of the Hasse diagram, it is necessary to deal with different lattice functions $f:X\rightarrow\mathbb R_+$. Firstly,
from a geometric perspective, $f\in\mathbb R^{|X|}_+$ is a point in a vector space. A well-known basis of this vector space is $\{\zeta_x:x\in X\}$, where
$\zeta_x(y)=\left\{\begin{array}{c}1\text{ if }x\leqslant y\\0\text{ if }x\not\leqslant y \end{array}\right.$ for all $y\in X$. Thus, any $f$ is a linear combination
$f=\sum_{x\in X}\zeta_x\mu^f(x)$ of basis elements, with coefficients $\mu^f(x),x\in X$ given by M\"obius inversion $\mu^f:X\rightarrow\mathbb R$, where this latter obeys the
following recursion: $\mu^f(x)=f(x)-\sum_{y<x}\mu^f(y)$ for all $x\in X$, hence $\mu^f(x_{\bot})=f(x_{\bot})$ and $f(x)=\sum_{y\leqslant x}\mu^f(y)$ for all $x\in X$ (see
\cite{Aigner79,RotaMobius,Stern99}).

A lattice function $f$ is said to be:
\begin{itemize}
\item strictly order-preserving if $f(x)>f(y)$ for all $x,y\in X$ such that $x>y$,
\item strictly order-inverting if $f(x)>f(y)$ for all $x,y\in X$ such that $x<y$,
\item supermodular if $f(x\vee y)+f(x\wedge y)-f(x)-f(y)\geq 0$ for all $x,y\in X$,
\item submodular if $f(x\vee y)+f(x\wedge y)-f(x)-f(y)\leq 0$ for all $x,y\in X$,
\item modular if $f(x\vee y)+f(x\wedge y)-f(x)-f(y)=0$ for all $x,y\in X$,
\item totally positive if $\mu^f(x)\geq 0$ for all $x\in X$.
\end{itemize}
\textbf{Observation:} if $f$ is totally positive, then it is supermodular. To see this, firstly note that if $x$ and $y$ are comparable, i.e. say $x\geqslant y$, then there is
nothing to show as $x\wedge y=y$ and $x\vee y=y$, and thus the inequality defining supermodularity is satisfied with equality. Apart from this trivial case, if $x$ and $y$ are
uncomparable, i.e. $x\not\geqslant y\not\geqslant x$, then substituting the general M\"obius inversion formula above, i.e. $f(x)=\sum_{y\leqslant x}\mu^f(y)$, into the inequality
defining supermodularity formula yields $f(x\vee y)+f(x\wedge y)-f(x)-f(y)=$
\begin{eqnarray*}
&=&\sum_{z\leqslant x\vee y}\mu^f(z)+\sum_{z\leqslant x\wedge y}\mu^f(z)-\sum_{z\leqslant x}\mu^f(z)-\sum_{z\leqslant y}\mu^f(z)=\\
&=&\sum_{x\wedge y<z\leqslant x\vee y}\mu^f(z)-\sum_{x\wedge y<z\leqslant x}\mu^f(z)-\sum_{x\wedge y<z\leqslant y}\mu^f(z)=\\
&=&\underset{x\not\geqslant z\not\leqslant y}{\sum_{x\wedge y<z\leqslant x\vee y}}\mu^f(z)\geq 0
\end{eqnarray*}
where of course $[x\wedge y,x]\cap[x\wedge y,y]=\{x\wedge y\}$ by definition of meet.

Further lattice functions to be considered are symmetric ones, i.e. those that are invariant under the action of the symmetric group $\mathcal S(N)$ consisting of all $n!$
permutations $\pi:N\rightarrow N$ (see \cite[p. 161]{Aigner79}). Symmetric functions are generally very important in mathematics; for reasons of space only essential facts are here
exposed, with focus on lattices $(2^N,\cap,\cup)$ and $(\mathcal P^N,\wedge,\vee)$. For any $A\in 2^N,\pi\in\mathcal S(N)$, let $\pi A=\{\pi^{-1}(i):i\in A\}$, where $j=\pi^{-1}(i)$
is the index mapped into the $i$-th position by $\pi$. A set function $v:2^N\rightarrow\mathbb R_+$ is symmetric if $v(A)=v(\pi A)$ for all $A\in 2^N,\pi\in\mathcal S(N)$. Thus, $v$
is symmetric if $v(A)=v(B)$ for all $A,B\in 2^N$ such that $|A|=|B|$. As for partitions, for every $P\in\mathcal P^N$ let $c^P=(c^P_1,\ldots,c^P_n)\in\mathbb Z_+$ be the class or
type of $P$ (see \cite{RotaMobius}), that is to say $c^P_k=|\{B:B\in P,|B|=k\}|,1\leq k\leq n$. For all $\{B_1,\ldots ,B_{|P|}\}=P\in\mathcal P^N$ and $\pi\in\mathcal S(N)$, let
$\pi P=\{\pi B_1,\ldots \pi B_{|P|}\}$. A partition function $h:\mathcal P^N\rightarrow\mathbb R_+$ is symmetric if $h(P)=h(\pi P)$ for all $P\in\mathcal P^N,\pi\in\mathcal S(N)$.
Thus, $h$ is symmetric if $h(P)=h(Q)$ for all $P,Q\in\mathcal P^N$ such that $c^P=c^Q$.

\subsection{Hamming distance between subsets}
First of all recall that measures of the distance between elements of any (i.e. possibly non-ordered) set are referred to as ``Hamming distances'' when these elements are represented
as arrays or matrices and the distance between two of them is the number of entries where their array or matrix representations differ. The issue introduced in Section 1, namely how to
measure a distance $d(P,Q)$ between any two partitions $P,Q\in\mathcal P^N$, is firstly addressed in the following Section 3 by reproducing the traditional Hamming distance $|A\Delta B|$
between subsets $A,B\in 2^N$, where $|A\Delta B|=|A\cup B|-|A\cap A|$. This distance measure can also be expressed as $|A\Delta B|=|A\backslash B|+|B\backslash A|=r(A\cup B)-r(A\cap B)$,
where $r:2^N\rightarrow\mathbb Z_+$ is the rank function, i.e. $r(A)=|A|$ for all $A\in2^N$. The essential combinatorial feature of $|A\Delta B|$ is that it counts how many atoms
$\{i\},i\in N$ of Boolean lattice $(2^N,\cap ,\cup)$ are included in either $A$ or else $B$ but not in both. Also, $|A\Delta B|$ is a Hamming distance since subsets $A,B\in 2^N$
are represented as Boolean $n$-vectors $\chi_A,\chi_B\in\{0,1\}^n$, with characteristic function $\chi_A:N\rightarrow\{0,1\}$ defined by $\chi_A(i)=1$ if $i\in A$ and $\chi_A(i)=0$ if
$i\in N\backslash A=A^c$, for all $A\in 2^N$. Thus, $|A\Delta B|=\sum_{i\in N}\left(\chi_A(i)-\chi_B(i)\right)^2$ is precisely the number of entries where $\chi_A$ and $\chi_B$ differ
\cite{Aigner79,Bollobas86}. Evidently, characteristic functions $\chi_A,A\in 2^N$ provide a bijection between the $2^n$-set of subsets $A\in 2^N$ and the vertices $\chi_A\in\{0,1\}^n$ of
the $n$-dimensional unit hypercube $[0,1]^n$. In fact, the graph of this latter polytope \cite{BronstedConvex,Branko2001} is the Hasse diagram of Boolean lattice $(2^N,\cap,\cup)$, the
two sharing the same vertices and edges, and $|A\Delta B|$ is the length of a shortest path connecting vertices $\chi_A$ and $\chi_B$. Clearly, a shortest path is also a minimum-weight
path as long as each edge has unit weight, which is precisely what happens when edges are weighted by the rank.

For any two points $p,q\in[0,1]^n$ in the unit $n$-cube, let $\langle p,q\rangle=\sum_{1\leq i\leq n}p_iq_i$ denote their scalar product. Since $\chi_N\in\{0,1\}^n$ is the $n$-vector all
of whose entries equal 1, for all $A\in 2^N$ it holds $r(A)=|A|=\langle\chi_A,\chi_N\rangle$. Three further expressions for the Hamming distance between subsets $A,B\in 2^N$ are
$|A\Delta B|=$
\begin{eqnarray}
=|A|+|B|-2|A\cap B|&=&\langle\chi_A,\chi_N\rangle+\langle\chi_B,\chi_N\rangle-2\langle\chi_A,\chi_B\rangle=\\
&=&\langle\chi_A,\chi_N\rangle+\langle\chi_B,\chi_N\rangle-2\langle\chi_{A\cap B},\chi_N\rangle=
\end{eqnarray}
\begin{equation*}
=2|A\cup B|-|A|-|B|=2[\langle\chi_A,\chi_N\rangle+\langle\chi_B,\chi_N\rangle-\langle\chi_A,\chi_B\rangle]-\langle\chi_A,\chi_N\rangle-\langle\chi_B,\chi_N\rangle\text .
\end{equation*}
Furthermore, the following two observations are immediately checked.
\begin{itemize}
\item $r:2^N\rightarrow\{0,1,2,\ldots,n\}$ is a strictly order-preserving, symmetric and modular lattice (i.e. set) function, and
\item $|\cdot\Delta\cdot|:2^N\times 2^N\rightarrow\{0,1,2,\ldots,n\}$ is a metric: for all $A,A',B\in 2^N$,
\begin{enumerate}
\item $|A\Delta B|=|B\Delta A|$,
\item $|A\Delta B|\geq 0$, with equality if and only if $A=B$,
\item $|A\Delta A'|+|A'\Delta B|\geq|A\Delta B|$, or triangle inequality.
\end{enumerate}
\end{itemize}

\section{Partition distances}
In a (simple) graph $G=(V,E)$ with vertex set $V=\{v_1,\ldots ,v_m\}$ the edge set $E\subseteq V_2=\{\{v_i,v_j\}:1\leq i<j\leq m\}$ is included in the $\binom{m}{2}$-set of unordered pairs
of vertices. As already mentioned, the complete graph on these $m$ (labelled) vertices is $K_m=(V,V_2)$, and the Hamming distance $HD(P,Q)$ between partitions defined in the sequel
reproduces $|A\Delta B|$ while keeping into account that partitions of $N$ correspond bijectively to those graphs with vertex set $V=N$ whose components are each a complete subgraph
\cite{Aigner79}.

The combinatorial analog of $|A\Delta B|$ in terms of partitions $P,Q$, namely the number of atoms of $(\mathcal P^N,\wedge,\vee)$ finer than either $P$ or $Q$ but not finer than both,
exists in the literature \cite{Meila2007,Rand1971}, but is commonly not recognised to be such an analog. Conversely, the name ``Hamming distance between partitions'' is often customarily
maintained for a metric obtained by representing partitions $P$ as Boolean matrices $M^P\in\{0,1\}^{n\times n}$, despite these latter correspond in fact to generic binary relations on $N$
\cite[p. 393]{Mirkin1996}. Since partitions only correspond to equivalence relations, it is readily seen there are $2^{n^2}-\mathcal B_n$ binary relations which are \textit{not} equivalence
relations, yielding both conceptual and quantitative ambiguities (detailed below). In addition, the metric obtained by representing partitions $P$ as matrices $M^P$ does not yield any
shortest path between vertices of the Hasse diagram of partitions. In general, it seems desirable that the distance between elements of a ordered set (such as $2^N$ and $\mathcal P^N$) is
measured in terms of the order relation, like $|A\Delta B|$ is specified in terms of $\supseteq$. That is to say, in formal notation,
$|A\Delta B|=|\{\{i\}:A\supseteq\{i\}\not\subseteq B\}|+|\{\{i\}:A\not\supseteq\{i\}\subseteq B\}|$.

There exist many partition distance measures available in the literature, \cite[Sections 10.2, 10.3, pp. 191-193]{DezaDeza2013}, \cite[Chapter 5]{Mirkin1996}
\cite{Day1981,HubertArabie1985,Warrens2008}. Towards a clear disambiguation between the so-called \textit{Hamming distance between (matrices representing) partitions}
\cite{Meila2007,Mirkin+1970,Mirkin+2008} mentioned above and what is proposed here, recall that a binary relation $\mathcal R$ on $N$ is a subset $\mathcal R\subseteq N\times N$ of
\textit{ordered} pairs $(i,j)$ of elements $i,j\in N$ (hence unordered pairs $\{i,j\}$ satisfy $\{i,j\}=\{j,i\}$, while $(i,j)\neq(j,i)$ for ordered ones). The collection of all
such binary relations is a Boolean lattice $(2^{N\times N},\cap ,\cup)$. If symmetry $(i,j)\in\mathcal R\Rightarrow (j,i)\in\mathcal R$ and  \textit{transitivity}
$(i,j),(j,i')\in\mathcal R\Rightarrow (i,i')\in\mathcal R$ hold, then $\mathcal R$ is an \textit{equivalence} relation, or a partition of $N$ into equivalence classes: $\supseteq$-maximal
subsets $A\in 2^N$ such that $(i,j),(j,i)\in\mathcal R$ for all $i,j\in A$ are precisely its blocks. A binary relation $\mathcal R$ may be represented as a Boolean matrix
$M^{\mathcal R}\in\{0,1\}^{n\times n}$ with entries $M^{\mathcal R}_{ij}=1$ if $(i,j)\in\mathcal R$ and $M^{\mathcal R}_{ij}=0$ if $(i,j)\not\in\mathcal R$. Now let two equivalence
relations $\mathcal R^P,\mathcal R^Q$ have associated partitions $P,Q$ and representing matrices $M^{\mathcal R^P},M{^{\mathcal R^Q}}$. The distance $d(\mathcal R^P,\mathcal R^Q)$ between
subsets $\mathcal R^P,\mathcal R^Q\in 2^{N\times N}$ can be computed as
$d(\mathcal R^P,\mathcal R^Q)=|\mathcal R^P\Delta\mathcal R^Q|=|\mathcal R^P\cup\mathcal R^Q|-|\mathcal R^P\cap\mathcal R^Q|$. This is the number of 1s in matrix
$M^{\mathcal R^P\Delta\mathcal R^Q}=M^{\mathcal R^P}+M^{\mathcal R^Q}$ modulo 2. While providing a distance between partitions $P$ and $Q$, this is in fact the traditional Hamming
distance between certain subsets $R^P,R^Q\in 2^{N\times N}$, while generic such subsets $\mathcal R\in2^{N\times N}$ correspond to partitions only in very special cases, as
lattice $(2^{N\times N},\cap ,\cup)$ contains $2^{n^2}-\mathcal B_n$ elements, or binary relations, that do not correspond to partitions, or equivalence relations. The argument
also applies when partitions are represented as Boolean $n\times n$-matrices through the complement $\bar{\mathcal R}$ of equivalence relations $\mathcal R$, known as
\textit{apartness relations} in computer science \cite{EllermanLogicalEntropy,EllermanLogic}, i.e. $\bar{\mathcal R}^P=(N\times N)\backslash\mathcal R^P$ (this is detailed below).

The point is that in finite sets such as $2^N,\mathcal P^N$ and $2^{N\times N}$ where there is no ``natural'' metric (like the Euclidean norm in $\mathbb R^m$), the distance between elements
$x$ and $y$ must be quantified, in some way, by the number of elements $z$ between $x$ and $z$, where ``between'' means that $z$ must be comparable, in terms of the order relation, with
$x$ and/or $y$. To achieve this, in the present setting, consider that the partition lattice $(\mathcal P^N,\wedge,\vee)$ is a matroid (see \cite{Aigner79,Stern99} and above). However
regarded, it is necessarily embedded into a larger subset lattice, with which some elements are shared while some others are not. Apart from binary relations just described, a na\"ive
example comes from noticing that partitions $P$ are collections of subsets, i.e. $P\in 2^{2^N}$, and thus the distance between $P$ and $Q$ might be computed as the Hamming distance
$|P\Delta Q|$ between elements of subset lattice $(2^{2^N},\cap ,\cup)$, i.e. the number of subsets $A\in 2^N$ that are blocks of either one but not both. Again, there are really many (i.e.
$2^{2^n}-\mathcal B_n$) set systems (or collections $\mathcal S\in 2^{2^N}$ of subsets) that do not correspond to partitions. This feature is maintained even when $P$ and $Q$ are decomposed
as joins of atoms, for they generally admit several such join-decompositions \cite[Chapter II]{Aigner79}. Yet, when regarded from this perspective partition lattice
$(\mathcal P^N,\wedge ,\vee)$ is seen to be included in subset lattice $(2^{N_2},\cap ,\cup)$, with the two sharing the same $\binom{n}{2}$ atoms. In fact, $2^{N_2}$ is the \textit{minimal}
Boolean lattice including the partition lattice. Accordingly, the Hamming distance between partitions HD proposed below relies precisely on representing partitions as Boolean
$\binom{n}{2}$-vectors, although only $\mathcal B_n<2^{\binom{n}{2}}$ distinct such vectors correspond to partitions. In particular, HD is the traditional Hamming distance $|E\Delta E'|$
between edge sets $E,E'\in 2^{N_2}$ of graphs on vertex set $N$, with these latter corresponding to partitions only when in both graphs $G=(N,E),G'=(N,E')$ each component is a complete
subgraph.

\subsection{Hamming distance between partitions}
In combinatorial theory, both $(2^N,\cap,\cup)$ and $(\mathcal P^N,\wedge,\vee)$ are geometric lattices \cite[p. 54]{Aigner79}. As such, they are atomic, meaning that every
element is decomposable as a join of atoms (see above). The rank function $r:\mathcal P^N\rightarrow\mathbb Z_+$ of the partition lattice is $r(P)=n-|P|$, with height
$r(P^{\top})=n-1$ and $r(P_{\bot})=0$ for the top and bottom elements, respectively. As already outlined, atoms are immediately above $P_{\bot}$, with rank $1$, in the associated
Hasse diagram \cite[p. 889]{Meila2007}, where coarser partitions occupy upper levels. Thus, atoms are those partitions consisting of $n-1$ blocks, namely $n-2$ singletons and one
pair. These $\binom{n}{2}$ pairs $\{i,j\}\in N_2$ are the same atoms as in Boolean lattice $(2^{N_2},\cap ,\cup)$. Notationally, it is now convenient to let $[ij]\in\mathcal P^N$
be the atom where the unique $2$-cardinal block is pair $\{i,j\}\in[ij]$ (this is denoted by $\pi_{xy}$ in \cite[p. 150]{LeclercMondjardetLatticialConsensus}, where $x,y$ are elements of
the partitioned set while $\pi$ denotes the generic partition).

In order to have a combinatorially congruhent reproduction of the Hamming distance between partitions, let $\mathcal P^N_{\mathcal A}=\{[ij]:1\leq i<j\leq n\}$ be the $\binom{n}{2}$-set of atoms
of the partition lattice, with isomorphism $\mathcal P^N_{\mathcal A}\cong N_2$. The analog of characteristic function $\chi_A$ is \textit{indicator function}
$I_P:\mathcal P^N_{\mathcal A}\rightarrow\{0,1\}$, defined by
\begin{equation*}
I_P([ij])=\left\{\begin{array}{c}1\text{ if }P\geqslant [ij]\\
0\text{ if }P\not\geqslant [ij] \end{array}\right .
\text{ for all }P\in\mathcal P^N,[ij]\in\mathcal P^N_{\mathcal A}\text .
\end{equation*}  
In words, if pair $\{i,j\}$ is included in some block $A$ of $P$, i.e. $\{i,j\}\subseteq A\in P$, then partition $P$ is coarser than atom $[ij]$, and the corresponding position
$I_P([ij])$ of indicator array $I_P$ has entry $1$. Otherwise, that position is $0$. For the top partition $P^{\top}=\{N\}$, indicator function $I_{P^{\top}}$ is the
$\binom{n}{2}$-vector with all entries equal to 1. For the bottom partition $P_{\bot}$, analogously $I_{P_{\bot}}\in\{0,1\}^{\binom{n}{2}}$ is the $\binom{n}{2}$-vector all of whose
entries equal 0. The number $s(P)=|\{[ij]:[ij]\leqslant P\}|$ of atoms finer than any partition $P$ is \cite{Rossi2011} the \textit{size} $s:\mathcal P^N\rightarrow\mathbb Z_+$
mentioned in Section 1, i.e.
\begin{equation*}
s(P)=\sum_{A\in P}\binom{|A|}{2}=\sum_{1\leq k\leq n}c^P_k\binom{k}{2}=\langle I_P,I_{P^{\top}}\rangle\text .
\end{equation*}
While the cardinality $|A|=\langle\chi_A,\chi_N\rangle$ of subsets takes every integer value between $0$ and $n$, the size $s(P)=\langle I_P,I_{P^{\top}}\rangle$ of partitions
does not the same between $0$ and $\binom{n}{2}$. Minimally, this is already observable for $N=\{1,2,3\}$, as there are $\mathcal B_3=5$ partitions: the finest
$\{\{1\},\{2\},\{3\}\}$ and coarsest $\{1,2,3\}$ ones, together with the $\binom{3}{2}=3$ atoms $[12]=\{\{1,2\},\{3\}\}$, $[13]=\{\{1,3\},\{2\}\}$ and $[23]=\{\{2,3\},\{1\}\}$.
Thus, there is no partition with size equal to $2$, as $[12]\vee[23]=[12]\vee[13]=[13]\vee[23]=\{1,2,3\}=[12]\vee[13]\vee[23]$. Available sizes of partitions of a $n$-set, for
$1\leq n\leq 7$, are in Table 1 below.
\begin{table}[htbp]
\caption{\textsl{Available sizes of partitions of a $n$-set, $1\leq n\leq 7$.}}
\label{tab: available sizes}
\begin{center}
\begin{tabular}{|c|c|c|}
\hline
$n$ & $\{s(P):P\in\mathcal P^N\}$ (available sizes)\\
\hline
1& $\{0\}$\\
\hline
2 & $\{0,1\}$\\
\hline
3 & $\{0,1,3\}$\\
\hline
4 & $\{0,1,2,3,6\}$\\
\hline
5 & $\{0,1,2,3,4,6,10\}$\\
\hline
6 & $\{0,1,2,3,4,6,7,10,15\}$\\
\hline
7 & $\{0,1,2,3,4,5,6,7,9,10,11,15,21\}$\\
\hline
\end{tabular}
\end{center}
\end{table}

Both lattices $\mathcal P^N$ and $2^{N_2}$ are atomic, with every element $P\in\mathcal P^N$ and $E\in 2^{N_2}$ admitting a decomposition as a join of atoms. Yet, while subsets
$E\in 2^{N_2}$ (or edge sets of graphs with vertex set $N$) admit a unique such a decomposition, namely $E=\cup_{\{i,j\}\in E}\{i,j\}$, partitions generally admit several such
decompositions $P=[ij]_1\vee\cdots\vee[ij]_k$. For $n=3$ as above, the coarsest partition $\{1,2,3\}$ decomposes either as the join of any two atoms, or else as the join of all
the three available atoms at once. In particular, the rank $r(P)$ of $P$ is the \textit{minimum} number of atoms involved in a join-decomposition of $P$, while the size $s(P)$ is
the \textit{maximum} number of atoms involved in such a decomposition. Hence, the coarsest partition $\{1,2,3\}$ of a $3$-cardinal set has rank $r(\{1,2,3\})=3-1=2$ and size
$s(\{1,2,3\})=3=\binom{3}{2}$. 

The rank $r(P)$ of partitions is well-known to be strictly order-preserving, symmetric and submodular, while the size $s(P)$ is strictly order-preserving, symmetric and
supermodular. This is shown below.

\begin{lemma}
The size is a strictly order-preserving partition function: if $P>Q$, then $s(P)>s(Q)$, for all $P,Q\in\mathcal P^N$.
\end{lemma}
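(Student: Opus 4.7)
The plan is to exploit the characterization $s(P)=|\{[ij]:[ij]\leqslant P\}|$ directly, and to show that passing from $Q$ to a strictly coarser $P$ can only add atoms below, never remove them, with at least one new atom necessarily appearing.

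First I would establish the inclusion $\{[ij]:[ij]\leqslant Q\}\subseteq\{[ij]:[ij]\leqslant P\}$ whenever $P\geqslant Q$. This is immediate from transitivity of $\geqslant$: if $[ij]\leqslant Q$ then $\{i,j\}$ is contained in some block $B\in Q$, and since $P\geqslant Q$ this block $B$ is contained in some block $A\in P$, so $\{i,j\}\subseteq A$ and hence $[ij]\leqslant P$. Taking cardinalities gives $s(P)\geq s(Q)$.

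Next I would produce a witness atom showing strict inclusion when $P>Q$. Since $P\neq Q$ but $P\geqslant Q$, refinement of $P$ by $Q$ must split at least one block: there exists some $A\in P$ that contains at least two distinct blocks $B_1,B_2\in Q$. Choose $i\in B_1$ and $j\in B_2$. Then $\{i,j\}\subseteq A\in P$, so $[ij]\leqslant P$; but $\{i,j\}$ is not contained in any single block of $Q$ (as $i$ and $j$ lie in different blocks), so $[ij]\not\leqslant Q$. Hence the inclusion of atom-sets is strict, and consequently $s(P)>s(Q)$.

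The argument is essentially a single clean observation once the atom-characterization of $s$ is in hand, so there is no real obstacle; the only thing to be careful about is the direction of the coarsening relation and the bookkeeping between ``block of $Q$ contained in block of $P$'' and ``atom finer than $P$ but not finer than $Q$''. Alternatively, one could argue via the formula $s(P)=\sum_{A\in P}\binom{|A|}{2}$ together with the standard identity $\binom{a+b}{2}>\binom{a}{2}+\binom{b}{2}$ for $a,b\geq 1$, which is strict precisely because $ab>0$; this gives the same conclusion by induction on the number of merges needed to transform $Q$ into $P$, but the atom-counting argument is more transparent and aligns with the geometric/lattice-theoretic theme of the paper.
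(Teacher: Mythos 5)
Your proof is correct. The monotone inclusion $\{[ij]:[ij]\leqslant Q\}\subseteq\{[ij]:[ij]\leqslant P\}$ follows from transitivity exactly as you say, and your witness atom is legitimate: since $P\geqslant Q$ forces every block of $Q$ to sit inside a block of $P$, the assumption $P\neq Q$ guarantees some $A\in P$ containing two distinct blocks $B_1,B_2\in Q$, and any pair $\{i,j\}$ with $i\in B_1$, $j\in B_2$ gives an atom finer than $P$ but not finer than $Q$. The paper argues differently: it writes each block of $P$ as a union of blocks of $Q$ and computes the exact increment in $s$ caused by merging two blocks, namely $\binom{|B|+|B'|}{2}-\binom{|B|}{2}-\binom{|B'|}{2}=|B|\,|B'|>0$ — precisely the ``alternative'' you sketch at the end. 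The two routes buy slightly different things: the paper's computation yields quantitative information (the size jump across a merge equals the product of the merged block cardinalities), whereas your atom-counting argument is more structural, needs no formula for $s$ beyond its definition as a count of atoms, and sits more naturally with the lattice-theoretic viewpoint the rest of the paper develops. Either is a complete proof.
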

\begin{proof}
If $P>Q$, then every $A\in P$ is the union of some $B_1,\ldots ,B_{k_A}\in Q$, i.e. $A=B_1\cup\cdots\cup B_{k_A}$, with $k_A>1$ for at least one $A\in P$. The
union $B\cup B'$ of any $B,B'\in Q$ increases the size by
\begin{equation*}
\binom{|B|+|B'|}{2}-\left(\binom{|B|}{2}+\binom{|B'|}{2}\right)=|B||B'|\text ,
\end{equation*}
which is strictly positive as blocks are non-empty.
\end{proof}

In order to reproduce expressions (1-2) of Section 2.2 above, Hamming distance HD between partitions has to count the number of atoms finer than either one of any two partitions
but not finer than both. Thus, in terms of cardinalities of subsets of atoms, distance $HD:\mathcal P^N\times\mathcal P^N\rightarrow\mathbb Z_+$ is given by
\begin{equation*}
HD(P,Q)=|\{[ij]:P\geqslant [ij]\not\leqslant Q\}|+|\{[ij]:P\not\geqslant [ij]\leqslant Q\}|\text .
\end{equation*}
The size and the indicator function allow to obtain HD as follows:
\begin{equation}
HD(P,Q)=s(P)+s(Q)-2s(P\wedge Q)=\langle I_P,I_{P^{\top}}\rangle+\langle I_Q,I_{P^{\top}}\rangle-2\langle I_P,I_Q\rangle\text.
\end{equation}
Also note that $P\wedge Q=\underset{P\geqslant [ij]\leqslant Q}{\vee}[ij]$, and this is the maximal decomposition of $P\wedge Q$ as a join of atoms, namely that involving
$s(P\wedge Q)$ atoms. Therefore,
\begin{equation}
HD(P,Q)=\langle I_P,I_{P^{\top}}\rangle+\langle I_Q,I_{P^{\top}}\rangle-2\langle I_{P\wedge Q},I_{P^{\top}}\rangle\text.
\end{equation}
In view of expressions (1-4), there seems to remain no doubt that, from a combinatorial perspective, $HD(P,Q)$ is in fact the faithful translation of the traditional Hamming
distance $|A\Delta B|$ from subsets $A,B$ to partitions $P,Q$.

\subsection{Two further partition distances}

Two non-Hamming partition distances are now briefly introduced, since they provide a term of comparison for the following sections and also in view of the recent literature in
bioinformatics cited in Section 1. Any subset $A$ has a unique complement $A^c=N\backslash A$. For all partitions $P$ and all non-empty subsets $A\neq\emptyset$, let
$P^A=\{B\cap A:B\in P,\emptyset\neq B\cap A\}$ denote the partition of $A$ induced by $P$. Maximum matching distance $MMD(P,Q)$ between partitions $P,Q$ is
\begin{equation}
MMD(P,Q)=\min\{|A^c|:\emptyset\subset A\subseteq N,P^A=Q^A\}\text .
\end{equation}
This is the minimum number of elements $i\in N$ that must be deleted in order for the two residual induced partitions to coincide. Also, $MMD(P,Q)$ \textit{``is the minimum number
of elements that must be moved between clusters of $P$ so that the resulting partition equals $Q$''} \cite[p. 160]{Gusfield2002}. It is computable as a maximum matching or
assignment problem \cite{Day1981}, \cite[chapter 11]{KorteVygen2002}. In a graph a matching is a set of pairwise disjoint edges, i.e. the endpoints are all different vertices. Now consider the
bipartite graph $G=(P\cup Q,E)$ with $|P|+|Q|$ vertices, one for each block of each partition, and join any two of them $A\in P$ and $B\in Q$ with an edge $\{A,B\}\in E$ if
$A\cap B\neq\emptyset$. In addition, let $|A\cap B|$ be the weight of the edge. Then, determining $MMD(P,Q)$ amounts to find a maximum-weight matching $E^*$ in $G$, that is one
where the sum $\sum_{(A,B)\in E^*}|A\cap B|$ of edge weights is maximal. In fact, the minimum number $MMD(P,Q)$ of elements that must be removed for the two residual partitions to
coincide is the sum $\sum_{(A,B)\in E^*}|A\Delta B|$ over all selected edges of the cardinality of the symmetric difference between the associated endpoints.

Another important measure of the distance between any two partitions $P$ and $Q$ is the variation of information $VI(P,Q)$, obtained axiomatically from information theory (see
\cite[Expressions (15)-(22), pages 879-80]{Meila2007}). Entropy $e(P)=-\sum_{A\in P}\frac{|A|}{n}\log\left(\frac{|A|}{n}\right)
=-\sum_{1\leq k\leq n}c^P_k\frac{k}{n}\log\left(\frac{k}{n}\right)$ of partitions $P$ (binary logarithm) enables to measure the distance between $P$ and $Q$ as
\begin{equation}
VI(P,Q)=2e(P\wedge Q)-e(P)-e(Q)\text ,
\end{equation}
Notice that while the range of MMD is $\{0,1,\ldots ,n-1\}\subset\mathbb Z_+$, VI ranges in a finite subset of interval $[0,\log n]\subset\mathbb R_+$. Most importantly,
the entropy $e(P)$ of partitions $P$ is strictly order-inverting, symmetric and submodular, with $e(P_{\bot})=\log(n)$ and $e(P^{\top})=0$. To see submodularity, simply consider
$N=\{1,2,3\}$ as before, and set $P=[12]$ and $Q=[23]$, yielding $P\wedge Q=P_{\bot}$ and $P\vee Q=P^{\top}$. Then, $e(P\vee Q)+e(P\wedge Q)-e(P)-e(Q)=$
\begin{eqnarray*}
&=&-1\log(1)-3\left(\frac{1}{3}\log\left(\frac{1}{3}\right)\right)+2\left(\frac{2}{3}\log\left(\frac{2}{3}\right)+\frac{1}{3}\log\left(\frac{1}{3}\right)\right)=\\
&=&\frac{4}{3}-\log(3)=1.\overline 3-1.585<0.
\end{eqnarray*}
Finally observe that $-e(\cdot)$, in turn, conversely is strictly order-preserving, symmetric and supermodular. It can also be anticipated that VI is in the broad class of metric
distances defined in the sequel, but MMD is not.

\subsection{HD and VI: axioms}
Following \cite{Meila2007}, attention is now placed on those axioms that characterize both partition distance measures HD and VI. An alternative axiomatic characterization
of HD appears in \cite{MirkinCherny1970}. The following proposition may be compared with \cite[pp. 880-881, Property 1]{Meila2007}.
\begin{proposition}
HD is a \textsl{metric}: for all $P,P',Q\in\mathcal P^N$,
\begin{enumerate}
\item $HD(P,Q)=HD(Q,P)$,
\item $HD(P,Q)\geq 0$, with equality if and only if $P=Q$,
\item $HD(P,P')+HD(P',Q)\geq HD(P,Q)$, i.e. triangle inequality.
\end{enumerate}
\end{proposition}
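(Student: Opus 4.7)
The plan is to recognize $HD$ as the pullback of the ordinary Hamming distance on $\{0,1\}^{\binom{n}{2}}$ under the indicator embedding $P\mapsto I_P$, and then verify the three metric axioms using the explicit formulas (3) and (4) already derived.

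First I would rewrite $HD(P,Q)$ in the pointwise form
\[
HD(P,Q)\;=\;\sum_{[ij]\in\mathcal P^N_{\mathcal A}}\bigl(I_P([ij])-I_Q([ij])\bigr)^2,
\]
which follows immediately from the counting description $HD(P,Q)=|\{[ij]:P\geqslant[ij]\not\leqslant Q\}|+|\{[ij]:P\not\geqslant[ij]\leqslant Q\}|$ together with the fact that the indicator values lie in $\{0,1\}$. This reduces the three claims to properties of the classical Hamming distance on $\{0,1\}^{\binom{n}{2}}$, provided one shows that the map $P\mapsto I_P$ is injective.

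For symmetry (item 1), the sum above is manifestly invariant under swapping $P$ and $Q$; equivalently, expression (3) is symmetric in $P,Q$ since $P\wedge Q=Q\wedge P$. For non-negativity and definiteness (item 2), each summand is a non-negative integer, so $HD(P,Q)\geq 0$; and equality holds precisely when $I_P([ij])=I_Q([ij])$ for every atom $[ij]$. At that point the key observation is that every partition equals the join of the atoms it dominates (its maximal join-decomposition, discussed in Section 2), so $I_P=I_Q$ forces $P=Q$. This is the only non-formal step among the first two axioms and it is immediate from atomicity of $(\mathcal P^N,\wedge,\vee)$.

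For the triangle inequality (item 3), which I expect to be the main point, the cleanest route is to apply the triangle inequality pointwise to each coordinate of the indicator vectors: for any three $a,b,c\in\{0,1\}$ one checks $(a-c)^2\leq (a-b)^2+(b-c)^2$ by inspection (only the two bits need to be considered). Summing over the $\binom{n}{2}$ atoms and using the pointwise formula above with $P'$ in the middle yields $HD(P,P')+HD(P',Q)\geq HD(P,Q)$. Alternatively, one may expand using (3) and reduce the triangle inequality to the claim $s(P\wedge P')+s(P'\wedge Q)\leq s(P')+s(P\wedge Q)$, which follows by counting atoms: every atom $[ij]\leqslant P\wedge Q$ is counted on both sides, every atom $[ij]\leqslant P'$ but not below $P\wedge Q$ contributes at most once to the left side and exactly once to the right, and the remaining atoms below $P\wedge P'$ or $P'\wedge Q$ but not below $P'$ cannot occur. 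The pointwise Boolean argument is the shorter one and is the approach I would write up.
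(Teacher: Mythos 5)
Your proof is correct, and for the substantive part (the triangle inequality) it takes a genuinely different route from the paper. The paper proves item 3 by reducing the difference $HD(P,P')+HD(P',Q)-HD(P,Q)$ to $2[s(P')-s(P\wedge P')-s(P'\wedge Q)+s(P\wedge Q)]$, arguing that this quantity is minimized over $P'$ at $P'=P\vee Q$, and then invoking supermodularity of the size, which it establishes via total positivity of the M\"obius inversion $\mu^s$ (the observation of Section 2.1). Your argument instead identifies $HD$ as the pullback of the classical Hamming metric on $\{0,1\}^{\binom{n}{2}}$ along the injective map $P\mapsto I_P$ (injectivity being exactly the maximal join-decomposition $P=\vee_{[ij]\leqslant P}[ij]$), so that the triangle inequality follows coordinatewise from $(a-c)^2\leq(a-b)^2+(b-c)^2$ on bits; your alternative atom-count of $s(P\wedge P')+s(P'\wedge Q)\leq s(P')+s(P\wedge Q)$ is the same idea and is also valid. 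What your route buys is brevity and rigor: it bypasses the paper's optimization-over-$P'$ step, which is the least tight part of its argument, and it makes the claim that $HD$ is ``the'' Hamming distance literally true rather than analogical. What the paper's route buys is the supermodularity of $s$ as an explicit byproduct, which is reused later (Propositions 9 and 13 and the minimum-weight-path characterization $HD=\delta_s$); with your proof that fact would still need to be established separately. Your treatment of items 1 and 2 is fine and essentially equivalent to the paper's, which derives definiteness from strict order-preservation of $s$ (Lemma 1) rather than from injectivity of $I$.
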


\begin{proof}
The first condition is obvious. In view of lemma 1 above, the second one is also immediate as $\min\{s(P),s(Q)\}\geq s(P\wedge Q)$. In fact, $HD(P,Q)$ is the sum
$[s(P)-s(P\wedge Q)]+[s(Q)-s(P\wedge Q)]$ of two positive integers, while $\underset{P\neq Q}{\min}\text{ }HD(P,Q)=HD(P_{\bot},[ij])=1=s([ij])$ (for any atom $[ij]$). Concerning
triangle inequality, difference $HD(P,P')+HD(P',Q)-HD(P,Q)=$
\begin{equation*}
=2[s(P')-s(P\wedge P')-s(P'\wedge Q)+s(P\wedge Q)]
\end{equation*}
must be shown to be positive for all triplets $P,P',Q\in\mathcal P^N$. For any $P,Q\in\mathcal P^N$, size $s(P\wedge Q)$ is given, and thus $s(P')-[s(P\wedge P')+s(P'\wedge Q)]$
has to be minimized by suitably choosing $P'$. Firstly, sum $s(P\wedge P')+s(P'\wedge Q)$ is maximized when both $P\wedge P'=P$ (or $P'\geqslant P$) and $P'\wedge Q=Q$ (or
$P'\geqslant Q$) hold. Secondly, if $Q\leqslant P'\geqslant P$, then the whole difference is minimized when $P'=P\vee Q$. Thus, HD satisfies triangle inequality as long as the
size satisfies supermodularity: $s(P\vee Q)-s(P)-s(Q)+s(P\wedge Q)\geq 0\text{ for all }P,Q\in\mathcal P^N$. The simplest way to see that this is indeed the case is by focusing on
M\"obius inversion of lattice (or more generally poset) functions (see \cite{Aigner79,RotaMobius} and above). By definition, the size $s(\cdot)$ has M\"obius inversion
$\mu^s:\mathcal P^N\rightarrow\{0,1\}$ given by $\mu^s(P)=1$ if $P$ is an atom (i.e. $P=[ij]\in\mathcal P^N_{\mathcal A}$ or $r(P)=1$), and $\mu^s(P)=0$ otherwise. In fact,
$s(P)=\sum_{Q\leqslant P}\mu^s(Q)$ for all $P\in\mathcal P^N$. The size thus satisfies a sufficient (but not necessary) condition for supermoduarity, in that its M\"obius
inversion takes only positive values (see Section 2.1). This completes the proof.
\end{proof}

Triangle inequality is satisfied with equality by both HD and VI as long as $P'=P\wedge Q$ (for VI, see \cite[pp. 883, 888]{Meila2007} Properties 6 and 10(A.2)).

\begin{proposition}
HD satisfies horizontal collinearity:
\begin{equation*}
HD(P,P\wedge Q)+HD(P\wedge Q,Q)=HD(P,Q)\text{ for all }P,Q\in\mathcal P^N\text .
\end{equation*}
\end{proposition}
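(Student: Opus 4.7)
The plan is to reduce the identity to an arithmetic simplification using formula (3), namely $HD(X,Y) = s(X) + s(Y) - 2s(X\wedge Y)$. The only auxiliary fact needed is absorption: since $P \wedge Q \leqslant P$ and $P\wedge Q \leqslant Q$, one has $P \wedge (P\wedge Q) = P\wedge Q$ and $(P\wedge Q)\wedge Q = P\wedge Q$. These are immediate from the definition of meet.

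With this in hand, I would compute each term on the left-hand side directly. Applying (3) to $(P, P\wedge Q)$ and invoking the absorption identity yields
\begin{equation*}
HD(P, P\wedge Q) = s(P) + s(P\wedge Q) - 2\, s(P\wedge Q) = s(P) - s(P\wedge Q).
\end{equation*}
The symmetric computation gives $HD(P\wedge Q, Q) = s(Q) - s(P\wedge Q)$. Summing the two yields $s(P) + s(Q) - 2\, s(P\wedge Q) = HD(P,Q)$, which is the desired equality.

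I would also add a brief conceptual remark to make the result transparent: in the atomistic picture underlying HD, the atoms $[ij]$ finer than $P\wedge Q$ are exactly those finer than \emph{both} $P$ and $Q$. Hence the atoms contributing to $HD(P,P\wedge Q)$ are precisely those finer than $P$ but not $Q$, while those contributing to $HD(P\wedge Q,Q)$ are precisely those finer than $Q$ but not $P$. These two sets of atoms are disjoint, and their union is exactly the set of atoms counted by $HD(P,Q)$.

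There is no real obstacle here: the statement is essentially the observation that $P\wedge Q$ lies on a geodesic between $P$ and $Q$ in the weighted Hasse diagram, and once formula (3) and the absorption identity are invoked the proof collapses to one line. The only thing to be careful about is not to confuse $HD(P, P\wedge Q)$, which under absorption loses its factor of $2$ on $s(P\wedge Q)$, with the general formula; writing the two terms out in full avoids any sign error.
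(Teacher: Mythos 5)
Your proof is correct and is essentially identical to the paper's: both reduce $HD(P,P\wedge Q)$ and $HD(P\wedge Q,Q)$ to $s(P)-s(P\wedge Q)$ and $s(Q)-s(P\wedge Q)$ via formula (3) and absorption, then sum. The added remark about disjoint sets of atoms is a nice clarification but does not change the argument.
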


\begin{proof}
$HD(P,P\wedge Q)+HD(P\wedge Q,Q)=[s(P)-s(P\wedge Q)]+[s(Q)-s(P\wedge Q)]$ as well as $HD(P,Q)=s(P)+s(Q)-2s(P\wedge Q)$.
\end{proof}

Briefly aticipating the forthcoming analysis, it may be noted that horizontal collinearity may well be conceived in terms of the join, rather than the meet, of any two partitions,
since it is not hard to define distances $d:\mathcal P^N\times\mathcal P^N\rightarrow\mathbb R_+$ satisfying triangle inequality with equality when $P'=P\vee Q$; that is to say,
$d(P,P\vee Q)+d(P\vee Q,Q)=d(P,Q)$ for all $P,Q\in\mathcal P^N$. This is in fact the so-called $B_{\vee}$ ``betweenness'' relation proposed in \cite[p. 176]{Monjardet1981}.

Collinearity also applies to distances between partitions $P,Q$ that are comparable, i.e. either $P\geqslant Q$ or $Q\geqslant P$. Firstly consider the case involving the top
$P^{\top}$ and bottom $P_{\bot}$ elements (for VI, see \cite[p. 888]{Meila2007} property 10(A.1)).

\begin{proposition}
HD satisfies vertical collinearity:
\begin{equation*}
HD(P_{\bot},P)+HD(P,P^{\top})=HD(P_{\bot},P^{\top})\text{ for all }P\in\mathcal P^N\text .
\end{equation*}
\end{proposition}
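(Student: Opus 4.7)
The plan is to derive all three distances in closed form using the formula from equation (3), namely $HD(P,Q) = s(P) + s(Q) - 2s(P \wedge Q)$, and then verify that they sum correctly.

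First, I would record the three relevant meets. Since $P_{\bot}$ is the bottom element and $P^{\top}$ the top of $\mathcal{P}^N$, we have $P_{\bot} \wedge P = P_{\bot}$, $P \wedge P^{\top} = P$, and $P_{\bot} \wedge P^{\top} = P_{\bot}$. Next I would record the sizes at the extremes: $s(P_{\bot}) = 0$, since no atom $[ij]$ is finer than $P_{\bot}$, and $s(P^{\top}) = \binom{n}{2}$, since every atom $[ij]$ is finer than $\{N\}$.

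Substituting into the formula gives $HD(P_{\bot}, P) = s(P)$, $HD(P, P^{\top}) = \binom{n}{2} - s(P)$, and $HD(P_{\bot}, P^{\top}) = \binom{n}{2}$. Adding the first two yields $s(P) + \binom{n}{2} - s(P) = \binom{n}{2} = HD(P_{\bot}, P^{\top})$, which is exactly the claimed identity.

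There is essentially no obstacle: the whole argument is a cancellation enabled by the two extreme values $s(P_{\bot}) = 0$ and $s(P^{\top}) = \binom{n}{2}$ together with the trivial meet identities involving $P_{\bot}$ and $P^{\top}$. If anything is worth emphasizing for the reader, it is the interpretation: vertical collinearity holds here because the extremal partitions make the $s(P \wedge Q)$ correction in (3) collapse, so that $HD$ restricted to the chain $P_{\bot} \leqslant P \leqslant P^{\top}$ behaves as the difference of sizes along that chain.
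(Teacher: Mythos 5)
Your proof is correct and follows essentially the same route as the paper: both substitute the trivial meets and the extreme values $s(P_{\bot})=0$, $s(P^{\top})=\binom{n}{2}$ into $HD(P,Q)=s(P)+s(Q)-2s(P\wedge Q)$ and observe the cancellation. Your write-up is just a slightly more explicit version of the paper's one-line computation.
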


\begin{proof}
$HD(P_{\bot},P)+HD(P,P^{\top})=s(P)+s(P^{\top})-s(P)=s(P^{\top})$ independently from $P$, as well as $HD(P_{\bot},P^{\top})=s(P^{\top})=\binom{n}{2}$.
\end{proof}

Vertical collinearity may be generalized for arbitrary comparable partitions $P^{\top}\geqslant P>Q\geqslant P_{\bot}$, in that $HD(Q,P')+HD(P',P)=HD(Q,P)$ for all $P'\in[Q,P]$,
where $[Q,P]=\{P':Q\leqslant P'\leqslant P\}$ is an interval or segment \cite{RotaMobius} of $(\mathcal P^N,\wedge ,\vee)$ (see above). In fact, this is precisely the
``interval betweenness'' property considered in \cite[p. 179]{Monjardet1981} (for valuations of distributive lattices).

\section{Distances between complementary partitions}
The distance between the bottom and top elements in vertical collinearity leads to regard such lattice elements as complements, thereby focusing on the distance between other,
generic complements. Maintaining the traditional Hamming distance between subsets as the fundamental benchmark, it must be taken into account that the subset and partition
lattices are very different in terms of complementation. In particular, every subset $A\in 2^N$ has a unique complement $A^c$, and the distance between any two such complements
equals the distance between the bottom and top elements, i.e. $|A\Delta A^c|=n=|N\Delta\emptyset|$ for all $A\in 2^N$. Conversely, partitions $P$ generally have several and quite
different complements \cite{Aigner79}, which are all those $Q$ such that $P\wedge Q=P_{\bot}$ as well as $P\vee Q=P^{\top}$. In statistical classification, partitions $P,Q$
satisfying only the former condition, i.e. $P\wedge Q=P_{\bot}$, are commonly referred to as ``dual partitions'' and investigated as those where the addjusted Rand index ARI
\cite{HubertArabie1985} takes negative values; see \cite[pp. 237-238, 389]{Mirkin1996}, \cite[pp. 429-430]{KovalevaMirkin} and \cite{SchreiderSharov1982}. Apart from this,
concerning complementation and partition distances MMD, VI and HD, the former measures the distance between any two complements $P,Q$ solely through their cardinalities $|P|,|Q|$,
while VI and HD provide a fine distinction between different complements, and also agree on which are closer and which are remoter. The issue may be exemplified with
$N=\{1,\ldots ,7\}$ and partitions $P=123|456|7$ and $P^*=147|2|3|5|6$ and $P_*=1|2|34|5|67$ (where vertical bar $|$ separates blocks). Both $P^*$ and $P_*$ are complements of
$P$, that is $P\wedge P^*=P\wedge P_*=P_{\bot}$ and $P\vee P^*=P\vee P_*=P^{\top}$. Distances MMD, VI and HD are:
\begin{eqnarray*}
MMD(P,P_*)=4&=&4=MMD(P,P^*)\text ,\\
VI(P,P_*)=\frac{6\log 6-2}{7}\simeq 1.93&<&1.95\simeq\frac{4\log 9+2\log 3-1}{7}=VI(P,P^*)\text ,\\
HD(P,P_*)=8&<&9=HD(P,P^*)\text .
\end{eqnarray*}
Concerning MMD, this examples generalizes as follows.

\begin{proposition}
For any two complementary partitions $P,Q\in\mathcal P^N$,
\begin{equation}
MMD(P,Q)=\max\{r(P),r(Q)\}\text .
\end{equation}
\end{proposition}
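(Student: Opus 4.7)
The plan is to translate the claim into a statement about maximum matchings in the bipartite intersection graph $G=G(P,Q)$ already introduced in the paragraph preceding the proposition: vertices are the blocks of $P$ and $Q$, and each edge $\{A,B\}$ carries weight $|A\cap B|$ whenever $A\cap B\neq\emptyset$. Since complementarity forces $P\wedge Q=P_{\bot}$, every intersection is empty or a singleton, so the maximum-weight matching collapses to the maximum-cardinality matching $\nu(G)$, and the author's identification of $MMD$ as a maximum-weight matching gives $MMD(P,Q)=n-\nu(G)$. Using $\max\{r(P),r(Q)\}=n-\min\{|P|,|Q|\}$, the proposition is then equivalent to the clean combinatorial statement $\nu(G)=\min\{|P|,|Q|\}$.

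The upper bound $\nu(G)\leq\min\{|P|,|Q|\}$ is the trivial bound on matchings and it uses only $P\wedge Q=P_{\bot}$: any set $A\subseteq N$ realizing $P^A=Q^A$ has at most one element per block of $P$, since two elements of $A$ in a common $P$-block would, by $P^A=Q^A$, also share a $Q$-block, contradicting $|A'\cap B'|\leq 1$ for every pair of blocks. Symmetrically $|A|\leq |Q|$, so $|A|\leq\min\{|P|,|Q|\}$ and $MMD(P,Q)\geq\max\{r(P),r(Q)\}$.

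For the reverse bound, I would assume WLOG $|P|\leq|Q|$ and construct an explicit matching saturating $P$. By Hall's theorem this reduces to verifying $|N_G(S)|\geq|S|$ for every $S\subseteq P$. Here the second complementarity hypothesis $P\vee Q=P^{\top}$ enters: it is equivalent to connectedness of $G$, because the blocks of $P\vee Q$ correspond to the connected components of $G(P,Q)$. Moreover, $P\wedge Q=P_{\bot}$ gives $\deg_G(A)=|A|$ and $\deg_G(B)=|B|$, whence $|E(G)|=n$ and thus $|P|+|Q|\leq n+1$; these strong numerical constraints are what one must leverage together with connectedness.

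The hardest step is precisely to deduce Hall's condition on the smaller side from connectedness together with the singleton-intersection constraint. A natural route is proof by contradiction: take $S\subseteq P$ minimally violating Hall, coarsen $P$ by merging $S$ and $Q$ by merging $N_G(S)$ each into a single block, and argue that the resulting coarsened bipartite graph is either disconnected or forces a Hall inequality the other way, contradicting $P\vee Q=P^{\top}$. Once Hall is established, the saturating matching directly produces $A\subseteq N$ of size $|P|$ with $P^A=Q^A$ consisting of singletons, yielding $\nu(G)\geq\min\{|P|,|Q|\}$ and completing the proof.
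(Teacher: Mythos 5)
Your reduction to the statement $\nu(G)=\min\{|P|,|Q|\}$ is faithful to the paper's setup, and your upper bound $\nu(G)\leq\min\{|P|,|Q|\}$ (hence $MMD(P,Q)\geq\max\{r(P),r(Q)\}$) is correct. But the reverse inequality --- the step you yourself flag as ``the hardest'' and only sketch --- is a genuine gap, and in fact it cannot be closed: Hall's condition on the smaller side does \emph{not} follow from connectedness plus the singleton-intersection constraint, and the proposition as stated is false. Take $N=\{1,2,3,4,5\}$, $P=123|4|5$ and $Q=145|2|3$. Then $P\wedge Q=P_{\bot}$ (all pairwise block intersections are singletons) and $P\vee Q=P^{\top}$ (the blocks $\{1,2,3\}$ and $\{1,4,5\}$ overlap in $1$), so $P$ and $Q$ are complementary and $\max\{r(P),r(Q)\}=\max\{2,2\}=2$. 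Yet in the bipartite graph $G$ the set $S=\{\{4\},\{5\}\}\subseteq P$ has $N_G(S)=\{\{1,4,5\}\}$, so Hall fails, $\nu(G)=2<3=\min\{|P|,|Q|\}$, and $MMD(P,Q)=5-2=3$. One can also check this directly from the definition $MMD(P,Q)=\min\{|A^c|:P^A=Q^A\}$: any $A$ with $P^A=Q^A$ meets $\{1,2,3\}$ in at most one element and $\{1,4,5\}$ in at most one element, forcing $|A|\leq 2$ and hence $MMD(P,Q)=3\neq 2$.

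For what it is worth, the paper's own proof has exactly the same hole, only hidden: it asserts that a maximum matching has $\sum_{A\in P\vee Q}\min\{|P^A|,|Q^A|\}$ edges, but the justification offered (``each block can be the endpoint of at most one edge'') only shows this quantity is an \emph{upper} bound on $\nu(G)$, not that it is attained. Your write-up is more honest in isolating the missing lemma, but the lemma is false, so the proposed coarsening/contradiction route to Hall's condition cannot succeed. What survives in general is only the inequality $MMD(P,Q)\geq\max\{r(P),r(Q)\}$ for complementary partitions; equality needs an extra hypothesis ensuring Hall's condition on the smaller side (it happens to hold in the paper's Section 4 example, but not for all complementary pairs).
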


\begin{proof}
If $P\wedge Q=P_{\bot}$, then every edge $\{A,B\}\in E\subset P\times Q$ of the bipartite graph $G=(P\cup Q,E)$ defined in Section 2 above has unit weight $1=|A\cap B|$. Hence, a
maximum-weight matching simply is one including the maximum number of feasible edges. Such a number is $\sum_{A\in P\vee Q}\min\{|P^A|,|Q^A|\}$, because each block (of either
partition) can be the endpoint of at most one edge included in a matching. Also, the number of elements $i\in N$ that must be deleted for the two residual partitions to coincide
is $\sum_{A\in P\vee Q}(|A|-\min\{|P^A|,|Q^A|\})$. On the other hand, $P\vee Q=P^{\top}$ entails
\begin{equation*}
\sum_{A\in P\vee Q}(|A|-\min\{|P^A|,|Q^A|\})=n-\min\{|P|,|Q|\}=\max\{r(P),r(Q)\}
\end{equation*}
as desired.
\end{proof}

As shown by the above example, a partition generally has different complements with different classes. The set of complements of any partition $P$ is denoted by
$\mathcal{CO}(P)=\{Q:P\wedge Q=P_{\bot},P\vee Q=P^{\top}\}$.
A \textit{modular element} of the partition lattice \cite{Aigner79,Stanley1971,Stern99} is any $P\in\mathcal P^N$ where all blocks are singletons apart from only one, at most,
i.e. $\sum_{1<k\leq n}c_k(P)\leq 1$. The sublattice $\mathcal P^N_{mod}\subseteq\mathcal P^N$ consisting of modular elements contains the bottom and top elements, together with
all partitions of the form $\{A\}\cup P_{\bot}^{A^c}$ with $1<|A|<n$, where $P^{A^c}_{\bot}$ is the finest partition of $A^c$. Hence there are $2^n-n$ modular partitions (with
$\mathcal P^N_{mod}=\mathcal P^N$ for $n\leq 3$).
Here, the main link between modular elements and complementation is that an element is modular if and only if no two of its complements are comparable
\cite[Theorem 1]{Stanley1971}. Therefore, if $P\not\in\mathcal P^N_{mod}$, then there are $Q,Q'\in\mathcal{CO}(P)$ such that $Q>Q'$. It seems thus important that the distance
between $P$ and $Q$ differs from the distance between $P$ and $Q'$. The following result bounds the Hamming distance HD between a partition and any of its complements. 

\begin{proposition}
For all $P\in\mathcal P^N$, if $Q\in\mathcal{CO}(P)$, then
\begin{equation*}
s(P)+|P|-1\leq HD(P,Q)\leq s(P)+\binom{|P|}{2}\text ,
\end{equation*}
where the upper bound is always tight, while the lower one is tight only if
\begin{equation*}
c_1(P)\leq 2+\sum_{1<k\leq n}(k-2)c_k(P)\text .
\end{equation*}
\end{proposition}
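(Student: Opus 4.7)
The plan is to exploit the identity $HD(P,Q)=s(P)+s(Q)-2s(P\wedge Q)$ from (3): since $Q\in\mathcal{CO}(P)$ forces $P\wedge Q = P_\bot$ and thus $s(P\wedge Q)=0$, both inequalities reduce to bounds on $s(Q)$ alone, namely $|P|-1\le s(Q)\le\binom{|P|}{2}$.

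For the lower bound I would form the multigraph $H$ on vertex set $P$ whose edges are the atoms $[ij]\le Q$, each such atom contributing an edge between the (distinct, by $P\wedge Q=P_\bot$) blocks of $P$ containing $i$ and $j$. By construction $|E(H)|=s(Q)$, and the condition $P\vee Q = P^\top$ is exactly the connectedness of $H$, so $s(Q)\ge |P|-1$. For the upper bound I would explicitly construct an achieving complement: pick a transversal $B^*$ of $P$ (one element from each block) and let $Q^* = \{B^*\}\cup\{\{i\}:i\in N\setminus B^*\}$. Since $|A\cap B^*|=1$ for every $A\in P$ and the remaining blocks of $Q^*$ are singletons, $P\wedge Q^* = P_\bot$; since $B^*$ meets every block of $P$, $P\vee Q^*=P^\top$. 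Thus $Q^*\in\mathcal{CO}(P)$ with $s(Q^*)=\binom{|P|}{2}$, giving $HD(P,Q^*)=s(P)+\binom{|P|}{2}$ and establishing the tightness claim.

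For the tightness condition on the lower bound I would analyze when $s(Q)=|P|-1$ is attainable. Equality forces $H$ to be a spanning tree, which in turn forbids any block of $Q$ of size $\ge 3$ (whose $\binom{k}{2}\ge 3$ in-clique edges would create a cycle in $H$); so $Q$ consists only of pair-blocks and singletons. Writing $d_A$ for the tree-degree of $A\in P$, the constraints are $1\le d_A\le |A|$, the upper one holding because each pair-block of $Q$ incident to $A$ consumes one of the $|A|$ available elements of $A$, together with $\sum_A d_A = 2(|P|-1)$. Since any such degree sequence is realized by some tree (e.g.\ via Pr\"ufer codes), feasibility reduces to the sum inequality $2(|P|-1)-c_1(P)\le n-c_1(P)$, i.e.\ $n\ge 2|P|-2$; substituting $n=\sum_k kc_k(P)$ and $|P|=\sum_k c_k(P)$ and isolating $c_1(P)$ yields exactly $c_1(P)\le 2+\sum_{k>1}(k-2)c_k(P)$. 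The main obstacle I anticipate is the upper bound inequality itself: the construction above only proves achievability, and establishing $s(Q)\le\binom{|P|}{2}$ for \emph{every} $Q\in\mathcal{CO}(P)$ requires more than the obvious per-block estimate $|B|\le |P|$ (which only limits individual summands of $\sum_{B\in Q}\binom{|B|}{2}$, not their total), so some additional argument — perhaps exploiting the join condition to forbid the simultaneous presence of too many transversal-sized blocks — seems to be needed.
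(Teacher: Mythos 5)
Your reduction to bounding $s(Q)$ via $HD(P,Q)=s(P)+s(Q)$, your spanning-multigraph argument for the lower bound, and your degree-sequence/Pr\"ufer analysis of when $s(Q)=|P|-1$ is attainable are all correct, and they are essentially the paper's own argument made rigorous: the paper likewise reduces to $HD(P,Q)=s(P)+s(Q)$, argues that a complement must supply $|P|-1$ ``connecting'' atoms whose size is minimized when the corresponding pairs are disjoint, and derives the tightness condition by counting the elements of non-singleton blocks available to be matched into pairwise disjoint pairs --- your inequality $2(|P|-1)\le n$ rearranges to exactly its $c_1(P)\leq 2+\sum_{1<k\leq n}(k-2)c_k(P)$. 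Your transversal $Q^*$ is the paper's modular complement $P^*=\{B\}\cup P_{\bot}^{B^c}$, so the achievability half of the upper bound also matches.

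The gap you flag in the upper bound is genuine, and your suspicion that the obvious estimates do not suffice is well founded: no additional argument can close it, because the inequality $s(Q)\le\binom{|P|}{2}$ for \emph{all} $Q\in\mathcal{CO}(P)$ is false. Take $N=\{1,2,3,4\}$, $P=12|34$ and $Q=13|24$: then $P\wedge Q=P_{\bot}$ and $P\vee Q=P^{\top}$, so $Q\in\mathcal{CO}(P)$, yet $HD(P,Q)=s(P)+s(Q)=2+2=4>3=s(P)+\binom{|P|}{2}$. (Similarly $P=12|34|56$, $Q=135|246$ gives $HD=9$ against a claimed bound of $6$.) The paper's proof has the same hole: it tacitly ranges only over complements that are joins of exactly $|P|-1$ atoms, i.e.\ complements of rank $|P|-1$ (equivalently, with $n-|P|+1$ blocks), and within that restricted class the maximum size is indeed $\binom{|P|}{2}$, attained by your $Q^*$; but complements with fewer blocks can have strictly larger size, since a $k$-block of $Q$ contributes $\binom{k}{2}$ rather than $k-1$ atoms. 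So your construction correctly shows the value $s(P)+\binom{|P|}{2}$ is attained, but as an upper bound over all of $\mathcal{CO}(P)$ the statement needs either a restriction on $Q$ or a corrected bound.
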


\begin{proof}
Firstly note that if $Q\in\mathcal{CO}(P)$, then $HD(P,Q)=s(P)+s(Q)$. Hence,
\begin{equation*}
\min\{s(Q):Q\in\mathcal{CO}(P)\}\leq HD(P,Q)-s(P)\leq\max\{s(Q):Q\in\mathcal{CO}(P)\}\text .
\end{equation*}
Any complement of partition $P=\{A_1,\ldots ,A_{|P|}\}$ has join-decompositions minimally involving $|P|-1$ atoms $[ij]_1,\ldots ,[ij]_{|P|-1}\in\mathcal P^N_{\mathcal A}$, with
associated pairs $\{i,j\}_m\in N_2$ such that $|A_m\cap\{i,j\}_m|=1=|A_{m+1}\cap\{i,j\}_m|,1\leq m<|P|$. Considering the upper bound first, observe that size
$s([ij]_1\vee\cdots\vee[ij]_{|P|-1})$ attains its maximum when $|\{i,j\}_m\cap\{i,j\}_{m+1}|=1$ for all $1\leq m<|P|-1$, in which case
$s([ij]_1\vee\cdots\vee [ij]_m)=\binom{m+1}{2}$ for all $1\leq m<|P|$. This bound is tight because such a complement $P^*=[ij]_1\vee\cdots\vee[ij]_{|P|-1}$ always exists, whatever
the class $c(P)$ of $P$. In fact, $P^*\in\mathcal P^N_{mod}$ has $n-|P|+1$ blocks, out of which $n-|P|$ are singletons, while the remaining one $B\in P^*$ is $|P|$-cardinal and
satisfies $|B\cap A|=1$ for all $A\in P$, i.e. $P^*=\{B\}\cup P_{\bot}^{B^c}$. Thus $s(P^*)=\binom{|P|}{2}$.

Turning to the lower bound, observe that size $s([ij]_1\vee\cdots\vee[ij]_{|P|-1})$ attains its minimum, ideally, when $\{i,j\}_m\cap\{i,j\}_{m'}=\emptyset$ for all
$1\leq m<m'<|P|$, in which case $s([ij]_1\vee\cdots\vee [ij]_m)=m$ for all $1\leq m<|P|$. Yet, this is not always possible because each block $A\in P$ can have non-empty
intersection with a number of pair-wise disjoint pairs $\{i,j\}_m,1\leq m<|P|$ which is bounded above by $|A|$, entailing that the constraint is given by the number $c_1(P)$
of singletons $\{i\}\in P$. Specifically, nesting together $\sum_{1<k\leq n}c_k(P)$ non-singleton blocks requires $\sum_{1<k\leq n}c_k(P)-1$ pairs $\{i,j\}_m$. If these latter
have to be pair-wise disjoint, then the maximum number of elements $j\in N$ in non-singleton blocks available to match (into pair-wise disjoint pairs) those elements
$\{i\}\in P$ in singletons is $\sum_{1<k\leq n}kc_k(P)-2\left(\sum_{1<k\leq n}c_k(P)-1\right)$.
\end{proof}

In words, if the number $c^P_1$ of singleton blocks of partition $P$ exceeds the number $2+\sum_{1<k\leq n}(k-2)c_k(P)$ of elements $j\in N$ available to match, into pair-wise
disjoint pairs, those elements $\{i\}\in P$ in singletons, then basically a complement $Q$ of $P$, in order to yield the top partition $P^{\top}$ through the join $P\vee Q$, must
necessarily consist of blocks larger than pairs. Of course, in the limit, if the blocks of $P=P_{\bot}$ are all singletons, then the unique complement $Q=P^{\top}$ has to be the
coarsest or top partition, consisting of a unique block. Thus, the greater the number $c^P_1$ of singleton blocks of $P$, the fewer and larger the blocks $B\in Q$ of a complement
$Q$ of $P$ have to be. In this view, the following Proposition 6 shows that the more the cardinality $|B|$ of these blocks $B\in Q$ is evenly distributed, the lower the size
$s(Q)$ of the complement $Q$. In fact, on any level $\mathcal P^N_{k}=\{P:P\in\mathcal P^N,|P|=n-k\},0\leq k<n$ of the partition lattice, the size attains its maximum value on
modular partitions (consisting of $n-k-1$ singletons and one $k+1$-cardinal block) and its minimum value on those partitions each of whose block has cardinality between
$\lfloor\frac{n}{n-k}\rfloor$ and $\lceil\frac{n}{n-k}\rceil$, where $\lfloor\alpha\rfloor$ is the floor of $\alpha$, i.e. the greatest integer $\leq\alpha$, while
$\lceil\alpha\rceil$ is the ceiling of $\alpha$, i.e. the smallest integer $\geq\alpha$, for $\alpha\in\mathbb R_+$. As detailed by Proposition 8 next, the opposite occurs for the
entropy of partitions.

\begin{proposition}
If $P\in\mathcal P^N$ satisfies $2+\sum_{1<k\leq n}(k-2)c_k(P)<c_1(P)$, then
\begin{eqnarray*}
\underset{P_*\in\mathcal{CO}(P)}{\min} s(P_*)&=&\left[\theta(P)\left(\left\lfloor\frac{n}{\theta(P)}\right\rfloor+1\right)-n\right]
\binom{\left\lfloor\frac{n}{\theta(P)}\right\rfloor}{2}\\
&+&\left(n-\theta(P)\left\lfloor\frac{n}{\theta(P)}\right\rfloor\right)\binom{\left\lceil\frac{n}{\theta(P)}\right\rceil}{2}\text ,
\end{eqnarray*}
where $\theta(P)=1+\sum_{1<k\leq n}c_k(P)(k-1)$.
\end{proposition}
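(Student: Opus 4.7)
The plan is to split the argument into three stages: (a) reducing the minimization of $HD(P,Q)$ over $\mathcal{CO}(P)$ to a pure minimization of $s(Q)$; (b) solving the resulting arithmetic optimization via convexity; and (c) realizing the arithmetic optimum as an actual complement of $P$.

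First, for any $Q\in\mathcal{CO}(P)$, identity (3) combined with $P\wedge Q=P_{\bot}$ (so $s(P\wedge Q)=0$) yields $HD(P,Q)=s(P)+s(Q)$, so it suffices to minimize $s(Q)=\sum_{B\in Q}\binom{|B|}{2}$. The complement conditions impose two structural constraints: $P\wedge Q=P_{\bot}$ forces $|A\cap B|\leq 1$ for all $A\in P,B\in Q$, whence $|B|\leq|P|$; and the semimodular rank inequality of the geometric lattice $\mathcal{P}^N$ applied to the pair $(P,Q)$ gives $r(P)+r(Q)\geq r(P\vee Q)+r(P\wedge Q)=n-1$, so $|Q|\leq n-|P|+1=\theta(P)$.

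Second, for fixed $m=|Q|$ with $\sum_B|B|=n$, strict convexity of $x\mapsto\binom{x}{2}$ forces the minimum of $\sum\binom{|B|}{2}$ to be attained exactly when the block sizes lie in $\{\lfloor n/m\rfloor,\lceil n/m\rceil\}$; writing $r=n-m\lfloor n/m\rfloor$, this minimum equals $(m-r)\binom{\lfloor n/m\rfloor}{2}+r\binom{\lceil n/m\rceil}{2}$. Splitting any block of size $\geq 2$ into two nonempty pieces of sizes $k_1,k_2$ decreases $\sum\binom{|B|}{2}$ by exactly $k_1 k_2>0$, so this minimum is strictly decreasing in $m$. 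Therefore, if the balanced $\theta(P)$-block configuration is realizable as a complement, substituting $m=\theta(P)$ gives the right-hand side of the proposition.

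Finally, I must realize this optimum. A partition $Q$ with $P\wedge Q=P_{\bot}$ corresponds bijectively to a proper $|Q|$-coloring of the disjoint-clique graph $\bigsqcup_{A\in P}K_{|A|}$, and the balanced block-size condition is equitability of the coloring. Since every $A\in P$ satisfies $|A|\leq n-(|P|-1)=\theta(P)$, an equitable $\theta(P)$-coloring exists; and $\lceil n/\theta(P)\rceil\leq|P|$ holds because this is equivalent to $(|P|-1)(n-|P|)\geq 0$. The remaining and subtlest requirement, $P\vee Q=P^{\top}$, amounts to connectivity of the bipartite incidence graph between $P$ and $Q$; this is the main obstacle. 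Here the hypothesis $c_1(P)>2+\sum_{1<k}(k-2)c_k(P)$ enters decisively: it rearranges to $2|P|>n+2$, which both rules out the all-pair-atom complements of Proposition 5's lower bound and supplies sufficient singletons in $P$ to guarantee connectivity. I would conclude by an explicit round-robin construction: first place the elements of each non-singleton $A\in P$ in $|A|$ distinct color classes chosen among those of currently smallest load, then distribute the singletons of $P$ to reach the prescribed balanced sizes, arranging that every color class receives an element from at least one non-singleton $P$-block so that all color classes are joined to one another through these shared $P$-blocks.
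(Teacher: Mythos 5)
Your stages (a) and (b) follow the same strategy as the paper's proof and are in fact more carefully argued: the bound $|Q|\leq n-|P|+1=\theta(P)$ via semimodularity of the rank is cleaner than the paper's counting of nesting pairs, and the explicit monotonicity-in-$m$ argument (splitting a block decreases the size by $k_1k_2>0$) closes a step the paper leaves implicit, namely why the minimizing complement must have the \emph{maximal} number of blocks rather than fewer. The identification $\theta(P)=n-|P|+1$ and the reformulation of the hypothesis as $2|P|>n+2$ are both correct.

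The gap is in stage (c), the realization of a balanced $\theta(P)$-block complement, which is precisely the step that carries the whole burden of the ``$\leq$'' direction. Your stated sufficient condition --- ``every color class receives an element from at least one non-singleton $P$-block'' --- does not imply $P\vee Q=P^{\top}$: with, say, three non-singleton pairs and $\theta(P)=4$, the assignment $A_1\to\{1,2\}$, $A_2\to\{3,4\}$, $A_3\to\{1,2\}$ covers every class yet leaves $\{A_1,A_3,B_1,B_2\}$ and $\{A_2,B_3,B_4\}$ in separate components of the intersection graph. Worse, your ``currently smallest load'' greedy produces exactly this failure, since $A_2$ is steered toward the untouched classes $3,4$. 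The correct construction must \emph{chain}: order the non-singleton blocks and make each one after the first reuse exactly one previously occupied class while opening $|A|-1$ new ones. This is forced to work out exactly because $\sum_{|A|>1}(|A|-1)=\theta(P)-1$ is precisely the number of links needed to span the $\theta(P)$ classes; equivalently, $\sum_{A\in P,B\in Q}|A\cap B|=n=(|P|+|Q|)-1$ shows the intersection graph of any $\theta(P)$-block complement must be a tree, so there is no slack. One then checks that chaining in a path loads each class with at most two elements from non-singleton blocks, which is absorbed by the target sizes since the hypothesis gives $\lfloor n/\theta(P)\rfloor\geq 2$, and the singletons of $P$ fill the remaining slots. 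To be fair, the paper's own proof does not carry out this realization at all, but since it is the only nontrivial existence claim in the proposition, the step as you wrote it would fail and needs the repair above.
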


\begin{proof}
If $2+\sum_{1<k\leq n}(k-2)c_k(P)<c_1(P)$, then the above proof of Proposition 5 entails that the maximum number $\max\{|Q|:Q\in\mathcal{CO}(P)\}$ of blocks of a complement of
$P$ is $\theta(P):=1+\sum_{1<k\leq n}c_k(P)(k-1)$. On the other hand, for $0<m\leq n$, among $m$-cardinal partitions $Q$ of a $n$-set the size is minimized when
$|B|\in\left\{\left\lfloor\frac{n}{m}\right\rfloor,\left\lceil\frac{n}{m}\right\rceil\right\}$ for all $B\in Q$. Bound $\underset{P_*\in\mathcal{CO}(P)}{\min} s(P_*)$ above is
the size of a $\theta(P)$-cardinal partition $P_*$ with $|B|\in\left\{\left\lfloor\frac{n}{\theta(P)}\right\rfloor,\left\lceil\frac{n}{\theta(P)}\right\rceil\right\}$ for all
$B\in P_*$. In particular, the number of $\left\lfloor\frac{n}{\theta(P)}\right\rfloor$-cardinal blocks is
$\theta(P)\left(\left\lfloor\frac{n}{\theta(P)}\right\rfloor+1\right)-n$, while the number of $\left\lceil\frac{n}{\theta(P)}\right\rceil$-cardinal blocks is
$n-\theta(P)\left\lfloor\frac{n}{\theta(P)}\right\rfloor$.
\end{proof}
 
\begin{proposition}
Among complements $Q\in\mathcal{CO}(P)$ of any $P\in\mathcal P^N$, HD and VI have common minimizers, i.e.
$\underset{Q\in\mathcal{CO}(P)}{\arg\min}\text{ }HD(P,Q)=\underset{Q\in\mathcal{CO}(P)}{\arg\min}\text{ }VI(P,Q)$,
and common maximizers, i.e. $\underset{Q\in\mathcal{CO}(P)}{\arg\max}\text{ }HD(P,Q)=\underset{Q\in\mathcal{CO}(P)}{\arg\max}\text{ }VI(P,Q)$.
\end{proposition}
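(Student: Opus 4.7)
The plan is to exploit the complementarity hypothesis to reduce both distance measures to simple functions of $s(Q)$ and $e(Q)$ alone, and then appeal to Schur-convexity together with the structure of $\mathcal{CO}(P)$. Substituting $P\wedge Q=P_{\bot}$ (so $s(P\wedge Q)=0$ and $e(P\wedge Q)=\log n$) into formulas (3) and (6) gives
\begin{equation*}
HD(P,Q)=s(P)+s(Q),\qquad VI(P,Q)=2\log n-e(P)-e(Q),
\end{equation*}
for every $Q\in\mathcal{CO}(P)$. With $P$ fixed the $P$-dependent terms are constants, so the claim reduces to the two set-equalities
\begin{equation*}
\underset{Q\in\mathcal{CO}(P)}{\arg\min}\,s(Q)=\underset{Q\in\mathcal{CO}(P)}{\arg\max}\,e(Q),\qquad\underset{Q\in\mathcal{CO}(P)}{\arg\max}\,s(Q)=\underset{Q\in\mathcal{CO}(P)}{\arg\min}\,e(Q).
\end{equation*}

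Observe next that $s(Q)=\sum_{B\in Q}\binom{|B|}{2}$ and $ne(Q)=n\log n-\sum_{B\in Q}|B|\log|B|$ depend on $Q$ only through its class $c^Q$, and that $s$ and $-e$ are both separable Schur-convex functions of the block-size vector, being built from the strictly convex univariate functions $x\mapsto\binom{x}{2}$ and $x\mapsto x\log x$. In particular, if two feasible classes $c^{Q_1}$ and $c^{Q_2}$ are comparable in the majorization order with $c^{Q_1}$ strictly majorizing $c^{Q_2}$, then simultaneously $s(Q_1)>s(Q_2)$ and $e(Q_1)<e(Q_2)$. Thus $s$ and $-e$ induce the same ordering on majorization-comparable classes.

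The combinatorial core of the argument is to show that the set $\{c^Q:Q\in\mathcal{CO}(P)\}$ of feasible complement-classes has a unique majorization-minimal element and a unique majorization-maximal element (up to permutation of blocks). The minimal one is precisely the $\theta(P)$-block balanced class identified in Proposition 6, with $\theta(P)=1+\sum_{k>1}c^P_k(k-1)$ the maximum number of blocks any complement can have and sizes in $\{\lfloor n/\theta(P)\rfloor,\lceil n/\theta(P)\rceil\}$. The maximal one is the class obtained by packing blocks of size as close to $|P|$ as the constraints $|B\cap A|\leq 1$ (forced on every $A\in P$) and $P\vee Q=P^{\top}$ allow. Uniqueness on each side is to be argued by a block-split (resp.\ block-merge) exchange: given any feasible class different from the extremum, one produces a strictly less- (resp.\ more-) majorized feasible class by moving one element between two blocks of $Q$ while preserving both $|B\cap A|\leq 1$ and connectivity of the bipartite $P$-$Q$ incidence graph encoding $P\vee Q=P^{\top}$. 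Granted this uniqueness, strict Schur-convexity of $s$ and $-e$ forces both $\arg\min s=\arg\max e$ and $\arg\max s=\arg\min e$ to be exactly the set of complements with the extremal class.

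The main obstacle is the uniqueness step on the maximal side. Splitting a block (for the minimal side) is safe for the meet condition and only potentially breaks connectivity, which can be restored by a local choice of splitting element. Merging on the maximal side is more delicate: one must simultaneously avoid creating two elements of the same $A\in P$ inside a single block of $Q$ and avoid disconnecting the incidence graph that realizes $P\vee Q=P^{\top}$. A case analysis on the structure of the non-extremal complement — distinguishing whether the donating block is a singleton, a pair transversal, or a larger transversal — suffices, but this is where the real combinatorial work lies.
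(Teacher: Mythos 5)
Your reduction is exactly the paper's: for $Q\in\mathcal{CO}(P)$ one has $HD(P,Q)=s(P)+s(Q)$ and $VI(P,Q)=2\log n-e(P)-e(Q)$, so the claim becomes $\arg\min s=\arg\max e$ and $\arg\max s=\arg\min e$ over $\mathcal{CO}(P)$; the paper then simply asserts that $e(Q)$ is minimized at the modular, fewest-block complement (where $s$ is maximized) and maximized at the most-block, balanced complement (where $s$ is minimized). Your majorization/Schur-convexity framing is a genuinely sharper way to organize this second step, because it isolates the one fact that would actually make the argument work: that the set of feasible complement classes has a unique majorization-greatest and a unique majorization-least element.

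However, that is precisely the step you do not prove, and it is not a formality that can be waved through. On majorization-incomparable classes the two Schur-convex functions $s$ and $-e$ need not agree: for instance the classes $(3,1,1,1)$ and $(2,2,2)$ of a $6$-set are incomparable, satisfy $s=3$ for both (same $\sum_i b_i^2$), yet have different entropy ($3\log 3\neq 6$ for $\sum_i b_i\log b_i$). So if both occurred as extremal complement classes of some $P$, the $\arg\max$ of $HD$ would be strictly larger than that of $VI$, and the proposition would fail. Your proposed block-split/block-merge exchange argument is the right kind of tool, but as you yourself note the merge direction (preserving $|B\cap A|\leq 1$ for every $A\in P$ and connectivity of the $P$--$Q$ incidence graph simultaneously) is exactly where the combinatorial content lies, and the case analysis is only announced, not carried out. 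Until that exchange argument is actually executed, the proof is incomplete; to be fair, the paper's own proof asserts the identification of the extremal classes with no more justification, so your version at least makes the missing lemma explicit.
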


\begin{proof}
Firstly, $Q\in\mathcal{CO}(P)$ entails $VI(P,Q)=2\log n-e(P)-e(Q)$. Thus, $VI(P,Q)$ is minimized or else maximized when $e(Q)$ is, respectively, maximized or else minimized. On
the other hand, if $P\in\mathcal P^N_{mod}$, then all complements $Q\in\mathcal{CO}(P)$ have same rank. Otherwise, as already observed, there are comparable complements, i.e. with
different rank. Therefore, in general, among complements $Q\in\mathcal{CO}(P)$ entropy $e(Q)$ is minimized when $|Q|$ is minimized and, in addition, $Q\in\mathcal P^N_{mod}$. This
is precisely where size $s(Q)$ is maximized. Similarly, $e(Q)$ is maximized when $|Q|$ is maximized and, in addition,
$|B|\in\left\{\left\lfloor\frac{n}{|Q|}\right\rfloor,\left\lceil\frac{n}{|Q|}\right\rceil\right\}$ for all $B\in Q$. This is where $s(Q)$ is minimized.
\end{proof}

\section{Minimum-weight paths between partitions}
This section provides an analysis similar, in spirit, to that provided in \cite[Section 3]{Monjardet1981}, although the generic posets and semilattices considered there are replaced
here with the geometric lattice of partitions. Similarly, the covering graph becomes the graph $\mathbb G$ of polytope $\mathbb P_N$ below, and despite posets lack the join and meet
operators, still \cite{Monjardet1981} defines upper/lower valuations, which correspond to sub/supermodular partition functions in the present setting. Apart from these
differences, still the general idea to define metrics through weighted paths in the graph induced by the covering relation is the same. In fact, 
Hamming distance $|E\Delta E'|$ between edge sets $E,E'\in 2^{N_2}$ is the length of a shortest path between vertices $\chi_E,\chi_{E'}\in\{0,1\}^{\binom{n}{2}}$ of the
$\binom{n}{2}$-dimensional unit hypercube $[0,1]^{\binom{n}{2}}$, where $\chi_E:N_2\rightarrow\{0,1\}$ is the characteristic function defined in Section 2, i.e.
$\chi_E(\{i,j\})=1$ if $\{i,j\}\in E$ and 0 otherwise. Recall that a polytope naturally defines a graph with its same vertices and edges \cite[p. 93]{BronstedConvex}, and the
hypercube is perhaps the main example of polytope. In particular, the graph of hypercube $[0,1]^{\binom{n}{2}}$ is the Hasse diagram of Boolean lattice $(2^{N_2},\cap ,\cup)$, for its
edges correspond to the covering relation, that is to say $\{E,E'\}$ is an edge of the hypercube if either $E\supset E',|E|=|E'|+1$ or else the converse, i.e.
$E'\supset E,|E'|=|E|+1$.

Clearly, a shortest path is a minimum-weight path as long as every edge has unit weight. This simple observation is the starting point toward an analog view of the Hamming
distance HD between partitions, namely as the weight of a minimum-weight path in the associated Hasse diagram when edge weights are determined by the size. More generally, if edge
weights are determined by a symmetric and strictly order preserving/inverting partition function (like rank or entropy), then minimum-weight paths across edges of the
Hasse diagram equivalently yield well-defined metric distances.
In this view, consider the convex hull $co.hu(\{I_P:P\in\mathcal P^N\})=\mathbb P_N$ whose extreme points
\cite{BronstedConvex,Branko2001} are all the $\mathcal B_n$ Boolean $\binom{n}{2}$-vectors defined by the indicator functions $I_P,P\in\mathcal P^N$ of partitions. Note that
$\mathbb P_N$ is a (so-called ``hull onest'') 0/1-polytope that might be included in the classifying literature \cite{Aicholzer+96,Ziegler2000} as a type in and of itself. Here, it may be
referred to as ``the polytope of partitions'', since its graph $\mathbb G=(\mathcal P^N,\mathbb E)$ basically is the Hasse diagram of partition lattice $(\mathcal P^N,\wedge,\vee)$.
Specifically, edges correspond to the covering relation between partitions, i.e. $\{P,Q\}\in\mathbb E$ if either $[Q,P]=\{P,Q\}$ or else $[P,Q]=\{P,Q\}$ (see above). Let
$P\gtrdot Q\Leftrightarrow[Q,P]=\{P,Q\}$ denote the covering relation between partitions, while $ex(\mathbb P_N)=\{I_P:P\in\mathcal P^N\}$ is the set of extreme points or
vertices of $\mathbb P_N$. For $N=\{1,2,3\}$, polytope $\mathbb P_N$ is strictly included in $[0,1]^3$ and its five vertices are $(0,0,0)$, $(1,0,0)$, $(0,1,0)$, $(0,0,1)$ and
$(1,1,1)$. Thus, vertices $(1,1,0)$, $(1,0,1)$ and $(0,1,1)$ of $[0,1]^3$ are excluded from $ex(\mathbb P_N)$, as they correspond to those 3 graphs with vertex set
$\{1,2,3\}$ whose edge set is 2-cardinal. That is, $2^{\binom{n}{2}}-\mathcal B_n$ is precisely the number of graphs with vertex set $N$ that do not coincide with their closure
(see Sections 1 and 2). Geometrically, for $N=\{1,2,3\}$, polytope $\mathbb P_N$ is the union of a lower tetrahedron, whose volume is $0.1\overline 6$, and an upper tetrahedron,
whose volume is $0.\overline 3$, hence the whole volume is $0.5$. The former is $co.hu((0,0,0),(1,0,0),(0,1,0),(0,0,1))$, while the latter is $co.hu((1,0,0),(0,1,0),(0,0,1),(1,1,1))$.
Thus $\mathbb P_N$ is the polyhedron obtained as the intersection of 6 half-spaces, namely those three above the hyperplanes each including one of the three facets (different from
unit simplex $co.hu((1,0,0),(0,1,0),(0,0,1))$) of the lower tetrahedron, and those three below the hyperplanes each including one of the three facets (again different from the unit
simplex) of the upper tetrahedron.
Although this situation for $N=\{1,2,3\}$ is quite simple, still for generic $N=\{1,\ldots ,n\}$ the associated polytope $\mathbb P_N$ is more complex. When $n=4$, for example,
$\mathbb P_N\subset[0,1]^6$ is the convex hull of the 15 vertices corresponding to the rows of Table 2 below, with columns indexed, from left to right, by the $\binom{4}{2}=6$ atoms
[12], [13], [14], [23], [24] and [34] of $\mathcal P^N$. Corresponding partitions are in the far left column, with vertical bar $|$ separating blocks.

\begin{table}[htbp]
\caption{\textsl{Extreme points of 0/1-polytope $\mathbb P_N\subset[0,1]^{\binom{4}{2}}$ for $N=\{1,2,3,4\}$}}
\label{tab: extreme points}
\smallskip
\begin{center}
\begin{tabular}{|c|c|c|c|c|c|c|}
\hline
$P\in\mathcal P^N$ $\downarrow$ ; $[ij]\in\mathcal P^N_{\mathcal A}$ $\rightarrow$ & [12] & [13] & [14] & [23] & [24] & [34]\\
\hline
$P\bot=1|2|3|4$ &0 & 0 & 0 & 0 & 0 & 0 \\
\hline
$[12]=12|3|4$ & 1 & 0 & 0 & 0 & 0 & 0\\
\hline
$[13]=13|2|4$ & 0 & 1 & 0 & 0 & 0 & 0\\
\hline
$[14]=14|2|3$ & 0 & 0 & 1 & 0 & 0 & 0\\
\hline
$[23]=1|23|4$ & 0 & 0 & 0 & 1 & 0 & 0\\
\hline
$[24]=1|24|3$ & 0 & 0 & 0 & 0 & 1 & 0\\
\hline
$[34]=1|2|34$ & 0 & 0 & 0 & 0 & 0 & 1\\
\hline
$12|34$ & 1 & 0 & 0 & 0 & 0 & 1\\
\hline
$13|24$ & 0 & 1 & 0 & 0 & 1 & 0\\
\hline
$14|23$ & 0 & 0 & 1 & 1 & 0 & 0\\
\hline
$123|4$ & 1 & 1 & 0 & 1 & 0 & 0\\
\hline
$124|3$ & 1 & 0 & 1 & 0 & 1 & 0\\
\hline
$134|2$ & 0 & 1 & 1 & 0 & 0 & 1\\
\hline
$1|234$ & 0 & 0 & 0 & 1 & 1 & 1\\
\hline
$P\top=1234$ & 1 & 1 & 1 & 1 & 1 & 1\\
\hline
\end{tabular}
\end{center}
\end{table}

As for weights on edges $\{P,Q\}\in\mathbb E$ of (covering) graph $\mathbb G$, let $\mathbb F\subset\mathbb R^{\mathcal B_n}$ be the vector space of strictly
order-preserving/inverting and symmetric partition functions $f:\mathcal P^N\rightarrow\mathbb R$. As already mentioned entropy, rank and size are in $\mathbb F$, and the former
is order-inverting, while the latter two are order-preserving. Given any $f\in\mathbb F$, define weights
$w_f:\mathbb E\rightarrow\mathbb R_{++}$ on edges $\{P,Q\}\in\mathbb E$ by
\begin{equation*}
w_f(\{P,Q\})=\max\{f(P),f(Q)\}-\min\{f(P),f(Q)\}\text .
\end{equation*}
For all pairs $P,Q\in\mathcal P^N$, let $Path(P,Q)$ contain all $P-Q$-paths in graph $\mathbb G$, noting that this latter is highly connected (or dense), as every partition
$P$ is covered by $\binom{|P|}{2}$ partitions $Q$ and covers $\sum_{A\in P}\left(2^{|A|-1}-1\right)$ partitions $Q'$, hence $|Path(P,Q)|\gg1$ for all $P,Q$. Recall that a path
$p(P,Q)\in Path(P,Q)$ is a subgraph $p(P,Q)=(V^p_{P,Q},E^p_{P,Q})\subset\mathbb G$ where
\begin{eqnarray*}
V^p_{P,Q}&=&\{P=P_0,P_1,\ldots ,P_m=Q\}\text{ and}\\
E^p_{P,Q}&=&\{\{P_0,Q_0\},\{P_1,Q_1\},\ldots ,\{P_{m-1},Q_{m-1}\}\}\text ,
\end{eqnarray*}
with $P_{k+1}=Q_k$ for $0\leq k<m$. Also, the weight of a path $p(P,Q)$ is
\begin{equation*}
w_f(p(P,Q))=\sum_{0\leq k<m}w_f(\{P_k,Q_k\})\text .
\end{equation*}

\begin{definition}
Minimum-$f$-weight partition distance $\delta_f:\mathcal P_N\times\mathcal P_N\rightarrow\mathbb R_+$ is
\begin{equation}
\delta_f(P,Q):=\underset{p(P,Q)\in Path(P,Q)}{\min}w_f(p(P,Q))\text{ for all }f\in\mathbb F\text .
\end{equation}
\end{definition}

\begin{proposition}
For all $f\in\mathbb F$ and all $P,Q\in\mathcal P^N$, every minimum-$f$-weight $P-Q$-path visits $P\wedge Q$ or $P\vee Q$ or both; that is to say, if path $p(P,Q)$ satisfies
$w_f(p(P,Q))=\delta_f(P,Q)$, then $V^p_{P,Q}\cap\{P\wedge Q,P\vee Q\}\neq\emptyset$.
\end{proposition}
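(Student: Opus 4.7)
The plan is to reduce to the strictly order-preserving case (the order-inverting case is symmetric, handled for instance by replacing $f$ with $c-f$ for a suitably large constant $c$, which is strictly order-preserving and symmetric) and to bound the weight of an arbitrary $P$--$Q$-path from below, with the lower bounds forced to be tight at $P\vee Q$ or $P\wedge Q$ exactly when $p$ is minimum-weight. Throughout, assume $f$ is strictly order-preserving. Since every covering step $\{P_k,P_{k+1}\}\in\mathbb{E}$ yields $f(P_{k+1})-f(P_k)$ of sign matching the lattice direction, any path $p=(P_0,\ldots,P_m)$ satisfies
\[
w_f(p)=U(p)+D(p)=(f(Q)-f(P))+2D(p),
\]
where $U(p)$ and $D(p)$ are the total $f$-gain on ascending and descending covering steps respectively, so minimizing $w_f(p)$ is equivalent to minimizing $D(p)$.

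I would first exhibit two canonical paths: a descent from $P$ along a saturated chain to $P\wedge Q$ followed by an ascent to $Q$, of weight $W_\wedge=f(P)+f(Q)-2f(P\wedge Q)$; and an ascent to $P\vee Q$ followed by a descent to $Q$, of weight $W_\vee=2f(P\vee Q)-f(P)-f(Q)$. Both are valid $P$--$Q$-paths because $\mathcal{P}^N$ is ranked with saturated chains between comparable elements, so $\delta_f(P,Q)\leq\min(W_\vee,W_\wedge)$. Next, for any path $p$ let $M=\max_k f(P_k)$ be attained at some $P_{k^*}$; applying the triangle inequality to the $f$-sub-sequences from $P$ to $P_{k^*}$ and from $P_{k^*}$ to $Q$ gives $w_f(p)\geq 2M-f(P)-f(Q)$, with equality iff the $f$-sequence is monotone on each sub-path. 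By strict order-preservation, $f$-monotonicity is equivalent to lattice-monotonicity, so equality forces $p$ to ascend through coverings to $P_{k^*}$ and then descend, hence $P_{k^*}\geq P\vee Q$. The dual bound $w_f(p)\geq f(P)+f(Q)-2m$ with $m=\min_k f(P_k)$ attained at $P_{k_*}$ is tight exactly when $p$ first descends to $P_{k_*}$ and then ascends, forcing $P_{k_*}\leq P\wedge Q$.

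The crux is to argue that at least one of these two tightness conditions holds for minimum-weight $p$, equivalently that $p$ has at most one interior pivot (local extremum of the $f$-trace). The tool is a pivot-reduction: if $R_i$ is a local minimum sandwiched between consecutive pivot-neighbors (or endpoints) $R_{i-1}$ and $R_{i+1}$ along $p$, then $R_i\leq R_{i-1}\wedge R_{i+1}$, and replacing the sub-segment $R_{i-1}\to\cdots\to R_i\to\cdots\to R_{i+1}$ by a descent through $R_{i-1}\wedge R_{i+1}$ produces a new sub-segment of weight $f(R_{i-1})+f(R_{i+1})-2f(R_{i-1}\wedge R_{i+1})$, which by strict order-preservation is at most the original $f(R_{i-1})+f(R_{i+1})-2f(R_i)$, with equality iff $R_i=R_{i-1}\wedge R_{i+1}$; a dual move handles local maxima via the join. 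By minimality of $p$, every pivot must already be saturated in this sense, and a short induction shows that the identities $R_i=R_{i-1}\wedge R_{i+1}$ and $R_j=R_{j-1}\vee R_{j+1}$ propagate along the path until the pivot structure collapses, so that some local max coincides with $P\vee Q$ or some local min coincides with $P\wedge Q$.

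Once $p$ has at most one pivot, the tightness analysis concludes the proof: a single local-max pivot $R$ satisfies $R\geq P\vee Q$ and, from $w_f(p)\leq W_\vee$, $f(R)\leq f(P\vee Q)$, so strict order-preservation forces $R=P\vee Q$; dually, a single local-min pivot equals $P\wedge Q$; and a pivotless (monotone) path has $P,Q$ comparable and trivially visits both the meet and the join as its endpoints. The main obstacle is the pivot-reduction and collapse argument of the third paragraph: without assuming sub- or supermodularity of $f$, one must carefully verify that iterated saturation of pivot identities in the geometric, atomic, semimodular partition lattice truly collapses the pivot structure, ruling out configurations in which a minimum-weight path carries two or more non-degenerate pivots, none of which coincides with $P\vee Q$ or $P\wedge Q$.
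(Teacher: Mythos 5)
You have put your finger on exactly the right spot, but the gap you flag in your third paragraph is not a verification left undone: it is a genuine obstruction, and for general $f\in\mathbb F$ the pivot structure of a minimum-weight path need not collapse. Take $n=4$, $P=[12]$, $Q=[34]$, so that $P\wedge Q=P_{\bot}$ and $P\vee Q=12|34$, and consider the path with vertex sequence
\begin{equation*}
12|3|4,\quad 123|4,\quad 13|2|4,\quad 134|2,\quad 1|2|34,
\end{equation*}
each consecutive pair being a covering pair. This path is already saturated in your sense ($13|2|4=123|4\wedge 134|2$, $123|4=[12]\vee[13]$, $134|2=[13]\vee[34]$), so your replacement moves change nothing; yet it has two local maxima, neither equal to $P\vee Q$, and one local minimum different from $P\wedge Q$. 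Now let $f$ be the symmetric, strictly order-preserving function with $f(P_{\bot})=0$, $f=10$ on atoms, $f=10.1$ on the class with one triple and one singleton, $f=11$ on the class with two pairs, and $f(P^{\top})=12$ (this $f$ is neither supermodular nor submodular). Every edge of the Hasse diagram then weighs at least $0.1$; the first and last edges of any $P$--$Q$-path each weigh at least $0.1$, with equality only when they lead to partitions of the triple-plus-singleton class, and since two distinct such partitions are incomparable, at least two further edges are needed between them. Hence $\delta_f(P,Q)=0.4$, attained exactly by the four-edge paths of the displayed form, all of which avoid both $P\wedge Q$ (a path through $P_{\bot}$ costs at least $20$) and $P\vee Q$ (a path through $12|34$ costs at least $2$). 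So no refinement of the collapse induction can succeed: the proposition needs super- or submodularity of $f$ as an additional hypothesis, which is precisely what Propositions 9--11 go on to assume.

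For what it is worth, your first two paragraphs (the decomposition $w_f(p)=f(Q)-f(P)+2D(p)$, the two canonical paths, and the bounds $w_f(p)\geq 2M-f(P)-f(Q)$ and $w_f(p)\geq f(P)+f(Q)-2m$ with their tightness characterizations) are correct and are more careful than the paper's own argument, which handles the incomparable case by asserting that any $P$--$Q$-path visits some $P'$ with $P'>P,Q$ or $P'<P,Q$; the path displayed above refutes that assertion as well, so the published proof has exactly the hole you isolate. Your pivot-reduction does become sound once $f$ is supermodular (resp.\ submodular): then replacing the up-down segment through a local maximum (resp.\ the down-up segment through a local minimum) by the meet-route (resp.\ join-route) of its neighbouring pivots is weight-non-increasing by the modularity inequality rather than by monotonicity alone, the number of pivots drops until one remains, and your tightness analysis identifies the surviving pivot with $P\wedge Q$ (resp.\ $P\vee Q$). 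Read that way, your argument yields the distance formulas of Propositions 9 and 10, but Proposition 8 itself should be restated with the modularity hypothesis added.
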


\begin{proof}
If $P,Q$ are comparable, say $P\geqslant Q$, then $\{P\vee Q,P\wedge Q\}\subseteq V^p_{P,Q}$ for \textit{all} paths $p(P,Q)\in Path(P,Q)$, in that $P=P\vee Q$ and $Q=P\wedge Q$;
in particular, if $P\gtrdot Q$, then the unique minimum-$f$-weight $P-Q$ path consists of vertices $P$ and $Q$ together with the edge $\{P,Q\}\in\mathbb E$ linking them. On the
other hand, if $P,Q$ are not comparable, i.e. $P\not\geqslant Q\not\geqslant P$, then any path $p(P,Q)$ visits some vertex $P'$ comparable with both $P,Q$, and either $P'>P,Q$ or
else $P,Q>P'$. Hence $p(P,Q)=p(P,P')\cup p(P',Q)$, with $E^p_{P,P'}\cap E^p_{P',Q}=\emptyset$, for some $P-P'$-path $p(P,P')$ and some $P'-Q$-path $p(P',Q)$, entailing that the weight
of such a $p(P,Q)$ is $w_f(p(P,Q))=w_f(p(P,P'))+w_f(p(P',Q))$. Finally, since $f$ is strictly order-preserving/inverting and symmetric, $P'=P\vee Q$ minimizes
$w_f(p(P,P'))+w_f(p(P',Q))$ over all partitions $P'>P,Q$ while $P'=P\wedge Q$ minimizes $w_f(p(P,P'))+w_f(p(P',Q))$ over all $P'<P,Q$.
\end{proof}

Whether a minimum-$f$-weight path visits the join or else the meet of any two incomparable partitions clearly depends on $f$. A generic $f\in\mathbb F$ may have associated
minimum-weight paths visiting the meet of some incomparable partitions $P,Q$ and the join of some others $P',Q'$. In fact, whether minimum-weight paths awlays visit the meet or
else the join of any two incomparable partitions depends on whether $f$ or else $-f$ is supermodular. As already observed, if $f$ is supermodular, then $-f$ is submodular, i.e.
$-f(P\wedge Q)-f(P\vee Q)\leq -f(P)-f(Q)$ (and viceversa).

\begin{proposition}
For any strictly order-preserving $f\in\mathbb F$, if $f$ is supermodular, then the minimum-$f$-weight partition distance is
\begin{equation*}
\delta_f(P,Q)=f(P)+f(Q)-2f(P\wedge Q)\text ,
\end{equation*}
while if $f$ is submodular, then the minimum-$f$-weight partition distance is
\begin{equation*}
\delta_f(P,Q)=2f(P\vee Q)-f(P)-f(Q)\text .
\end{equation*}
\end{proposition}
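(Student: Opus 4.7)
The plan is to reduce the problem, via Proposition 9, to comparing the two candidate values associated with going through $P\wedge Q$ versus through $P\vee Q$, and then to decide between them using super/submodularity of $f$.

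First I would establish a small auxiliary fact: for any two comparable partitions $R\geqslant R'$, the minimum-$f$-weight $R$-$R'$-path has weight exactly $f(R)-f(R')$. The lower bound follows from the triangle inequality for absolute values: every $R$-$R'$-path $R=S_0,S_1,\ldots,S_\ell=R'$ has weight $\sum_{k}w_f(\{S_k,S_{k+1}\})=\sum_{k}|f(S_k)-f(S_{k+1})|\geq |f(R)-f(R')|$, and strict order-preservation of $f$ gives $|f(R)-f(R')|=f(R)-f(R')$. The upper bound is attained by any descending maximal chain in the interval $[R',R]$, since along such a chain the $f$-values are monotone, so the edge weights telescope. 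Note that symmetry of $f$ is not needed for this step, only strict monotonicity.

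Applying this auxiliary fact to the two sub-paths produced by Proposition 9, a minimum-weight path that pivots at $P\wedge Q$ has total weight $[f(P)-f(P\wedge Q)]+[f(Q)-f(P\wedge Q)]=f(P)+f(Q)-2f(P\wedge Q)$, while one pivoting at $P\vee Q$ has total weight $[f(P\vee Q)-f(P)]+[f(P\vee Q)-f(Q)]=2f(P\vee Q)-f(P)-f(Q)$. Since by Proposition 9 any minimum-weight $P$-$Q$-path visits at least one of $P\wedge Q,P\vee Q$, the distance $\delta_f(P,Q)$ equals the minimum of these two expressions. Their difference is
\[
\bigl[2f(P\vee Q)-f(P)-f(Q)\bigr]-\bigl[f(P)+f(Q)-2f(P\wedge Q)\bigr]=2\bigl[f(P\vee Q)+f(P\wedge Q)-f(P)-f(Q)\bigr],
\]
which is nonnegative when $f$ is supermodular and nonpositive when $f$ is submodular. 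Hence in the supermodular case the meet formula $f(P)+f(Q)-2f(P\wedge Q)$ is the smaller one, and in the submodular case the join formula $2f(P\vee Q)-f(P)-f(Q)$ is the smaller one, as claimed.

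I expect no serious obstacle here: Proposition 9 has already done the structural heavy lifting by isolating the two possible pivots, and the auxiliary monotone-chain lemma is the only place where strict order-preservation is really used (without strictness, a detour through equal-valued partitions could be free and the telescoping argument would need care). The remainder is simple algebra together with the defining inequality of super/submodularity.
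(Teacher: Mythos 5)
Your proof is correct and follows essentially the same route as the paper: Proposition 9 reduces the problem to comparing the weight of a path pivoting at $P\wedge Q$ with one pivoting at $P\vee Q$, and super/submodularity decides which of $f(P)+f(Q)-2f(P\wedge Q)$ and $2f(P\vee Q)-f(P)-f(Q)$ is smaller. You are in fact more explicit than the paper, which records only the two inequality chains and leaves implicit the telescoping-chain lemma (that the minimum-weight path between comparable partitions $R\geqslant R'$ has weight exactly $f(R)-f(R')$, attained on a maximal chain); your auxiliary lemma supplies that missing step cleanly.
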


\begin{proof}
Supermodularity entails
\begin{equation*}
2f(P\vee Q)-f(P)-f(Q)\geq f(P\vee Q)-f(P\wedge Q)\geq f(P)+f(Q)-2f(P\wedge Q)\text ,
\end{equation*}
whereas submodularity entails
\begin{equation*}
2f(P\vee Q)-f(P)-f(Q)\leq f(P\vee Q)-f(P\wedge Q)\leq f(P)+f(Q)-2f(P\wedge Q)\text ,
\end{equation*}
for all $P,Q\in\mathcal P^N$.
\end{proof}

\begin{proposition}
For any strictly order-inverting $f\in\mathbb F$, if $f$ is supermodular, then the minimum-$f$-weight partition distance is
\begin{equation*}
\delta_f(P,Q)=f(P)+f(Q)-2f(P\vee Q)\text ,
\end{equation*}
while if $f$ is submodular, then the minimum-$f$-weight partition distance is
\begin{equation*}
\delta_f(P,Q)=2f(P\wedge Q)-f(P)-f(Q)\text .
\end{equation*}
\end{proposition}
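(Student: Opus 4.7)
The plan is to mirror the proof of Proposition 9, interchanging the roles of $P\wedge Q$ and $P\vee Q$ since $f$ is now strictly order-inverting rather than order-preserving. Proposition 8 guarantees that every minimum-$f$-weight $P$–$Q$ path passes through at least one of $P\wedge Q$ and $P\vee Q$, so $\delta_f(P,Q)$ is the smaller of the two candidate weights obtained by routing through the meet and through the join.

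First I would verify that for any comparable pair $R\leqslant S$, the minimum weight of an $R$–$S$ path equals $|f(S)-f(R)|$. The lower bound is the triangle inequality applied to the telescoping sum $\sum_k|f(P_{k+1})-f(P_k)|\geq|f(P_m)-f(P_0)|$, which holds for every path, even one that leaves the interval $[R,S]$. Equality is attained by any saturated chain inside $[R,S]$, whose existence follows from the fact that $(\mathcal P^N,\wedge,\vee)$ is graded. Applied with $f$ order-inverting, a shortest $P$–$(P\wedge Q)$ path has weight $f(P\wedge Q)-f(P)>0$, a shortest $P$–$(P\vee Q)$ path has weight $f(P)-f(P\vee Q)>0$, and analogously for the $Q$-halves.

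Second, combining the two halves at the meet and at the join produces the two candidate route weights
\begin{equation*}
w_\wedge \;=\; 2f(P\wedge Q)-f(P)-f(Q), \qquad w_\vee \;=\; f(P)+f(Q)-2f(P\vee Q),
\end{equation*}
and a direct subtraction yields
\begin{equation*}
w_\wedge - w_\vee \;=\; 2\bigl[f(P\wedge Q)+f(P\vee Q)-f(P)-f(Q)\bigr],
\end{equation*}
which is nonnegative when $f$ is supermodular and nonpositive when $f$ is submodular. Hence the supermodular case selects $w_\vee$, giving $\delta_f(P,Q)=f(P)+f(Q)-2f(P\vee Q)$, while the submodular case selects $w_\wedge$, giving $\delta_f(P,Q)=2f(P\wedge Q)-f(P)-f(Q)$.

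There is no real obstacle; the argument is a routine sign-flip of Proposition 9. The only conceptual point to keep in mind is that order-inversion assigns larger values to finer partitions, so monotone chains rising in the lattice \emph{decrease} $f$, which is precisely why the roles of meet and join swap relative to Proposition 9 while the super/submodular selection rule is preserved.
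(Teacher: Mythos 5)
Your proof is correct and follows essentially the same route as the paper: appeal to Proposition 8 to reduce to the two candidate routes through $P\wedge Q$ and $P\vee Q$, then use super/submodularity to decide which of $2f(P\wedge Q)-f(P)-f(Q)$ and $f(P)+f(Q)-2f(P\vee Q)$ is smaller. Your telescoping argument justifying that the minimum weight between comparable partitions equals the difference of $f$-values is a welcome explicit step that the paper leaves implicit, but it does not change the structure of the argument.
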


\begin{proof}
Supermodularity entails
\begin{equation*}
2f(P\wedge Q)-f(P)-f(Q)\geq f(P\wedge Q)-f(P\vee Q)\geq f(P)+f(Q)-2f(P\vee Q)\text ,
\end{equation*}
whereas submodularity entails
\begin{equation*}
2f(P\wedge Q)-f(P)-f(Q)\leq f(P\wedge Q)-f(P\vee Q)\leq f(P)+f(Q)-2f(P\vee Q)\text ,
\end{equation*}
for all $P,Q\in\mathcal P^N$.
\end{proof}

Since the size $s$ is supermodular (see Proposition 1) and order-preserving, $HD$ is the minimum-$s$-weight partition distance, i.e. $HD(P,Q)=\delta_s(P,Q)$ for all $P,Q$. On the
other hand, the rank $r$ is submodular \cite[pp. 259, 265, 274]{Aigner79} and order-preserving, hence $\delta_r(P,Q)=2r(P\vee Q)-r(P)-r(Q)=|P|+|Q|-2|P\vee Q|$ is the
minimum-$r$-weight partition distance. In particular, $w_r(\{P,Q\})=1$ for all edges $\{P,Q\}\in\mathbb E$, and therefore $\delta_r$ is in fact the shortest-path distance. This
is detailed below by means of Example 2. Finally, entropy $e$ is order-inverting and submodular, hence the minimum-$e$-weight distance $\delta_e$ is the VI distance
$VI(P,Q)=2e(P\wedge Q)-e(P)-e(Q)$, as shown in Example 1 hereafter. Propositions 9 and 10 are summarized in Table 3 below.

\begin{example}
\textbf{Entropy-based minimum-weight path distance:} for any two atoms $[ij],[ij']\in\mathcal P^N_{\mathcal A}$ such that $\{i,j\}\cap\{i,j'\}=\{i\}$, the VI distance is
\begin{equation*}
VI([ij],[ij'])=2e([ij]\wedge [ij'])-e([ij])-e([ij'])=2\log n-2\left(\log n-\frac{2}{n}\right)=\frac{4}{n}\text ,
\end{equation*}
and this is indeed the minimum-$e$-weight distance. On the other hand,
\begin{equation*}
e([ij])+e([ij'])-2e([ij]\vee[ij'])=2\left(\log n-\frac{2}{n}\right)-2\left(\log n-\frac{3}{n}\log 3\right)=
\end{equation*}
$=\frac{2}{n}(3\log 3-2)$. In fact, $\frac{4}{n}=VI([ij],[ij'])<\frac{2}{n}(3\log 3-2)$ as $4<3\log 3$.
\end{example}

\begin{example}
\textbf{Rank-based shortest path distance:} let $N=\{1,2,3,4,5,6,7\}$ and consider partitions $P=135|27|46$ and $Q=1|23|47|56$ (with vertical bar $|$ separating blocks as in Table
2 above). Then, $P\wedge Q=1|2|3|4|5|6|7=P_{\bot}$ as well as $P\vee Q=1234567=P^{\top}$. Accordingly,
\begin{equation*}
\delta_r(P,Q)=2r(P\vee Q)-r(P)-r(Q)=|P|+|Q|-2|P\vee Q|=3+4-2=5
\end{equation*}
\begin{equation*}
\text{while }r(P)+r(Q)-2r(P\wedge Q)=2|P\wedge Q|-|P|-|Q|=14-3-4=7
\end{equation*}
as $|P|+|Q|-2|P\vee Q|=5$ is the length of a shortest path between $P$ and $Q$. Such a path visits $P\vee Q=P^{\bot}$ and for instance may be across edges
\begin{equation*}
\{P,12357|46\},\{12357|46,P^{\top}\},\{P^{\top},123|4567\},\{123|4567,1|23|4567\}
\end{equation*}
and finally $\{1|23|4567,Q\}$ of $\mathbb P_N$ (or equivalently of Hasse diagram $\mathbb G$ introduced above). On the other hand, a shortest $P-Q$-path \textit{forced to visit}
$P\wedge Q=P_{\bot}$ has length 7 and for instance may be across edges
\begin{equation*}
\{P,1|35|27|46\},\{1|35|27|46,1|2|35|46|7\},\{1|2|35|46|7,1|2|3|46|5|7\},
\end{equation*}
\begin{equation*}
\{1|2|3|46|5|7,P_{\bot}\},\{P_{\bot},1|23|4|5|6|7\},\{1|23|4|5|6|7,1|23|47|5|6\}
\end{equation*}
and finally $\{1|23|47|5|6,Q\}$. Note that the rank assigns to every edge $\{P,Q\}$ of $\mathbb P_N$ unit weight $w_r(P,Q)=1$, and thus $\delta_r$ is indeed the shortest path distance.
\end{example}

\begin{table}[htbp]
\caption{\textsl{$\delta_f(P,Q)$ for $f$ symmetric, strictly order preserving/inverting, super/submodular.}}
\label{tab: minimum-weight distances}
\begin{center}
\begin{tabular}{|c|c|c|c|}
\hline
$f$ symmetric & $f$ strictly order-preserving & $f$ strictly order-inverting\\
\hline
$f$ supermodular& $f(P)+f(Q)-2f(P\wedge Q)$ & $f(P)+f(Q)-2f(P\vee Q)$\\
\hline
$f$ submodular & $2f(P\vee Q)-f(P)-f(Q)$ & $2f(P\wedge Q)-f(P)-f(Q)$\\
\hline
\end{tabular}
\end{center}
\end{table}

\section{Distinctions, co-atoms and fields}
A further measure of partition entropy, called logical entropy, has been recently proposed \cite{EllermanLogicalEntropy} in terms of distinctions, i.e. \textit{ordered} pairs
$(i,j)\in N\times N$ (see Section 2). In statistical classification, the same concept is also referred to as the ``Gini coefficient''
\cite[pp. 53-54, 247-250, 257, 334]{Mirkin2013}. If distinctions are replaced with unordered pairs $\{i,j\}\in N_2$, then \textit{mutatis mutandis} the non-normalized logical
entropy of partitions $P$ is the analog of $\binom{n}{2}-s(P)$, providing a further minimum-weight partition distance. Furthermore, since in information theory partitions are
generally evaluated by means of order-inverting functions, the approach developed thus far may be applied to the upside-down Hasse diagram of the partition lattice, with co-atoms
(or dual atoms \cite{RotaMobius}) in place of atoms. In this way, the distance between partitions is the distance between the associated fields of subsets.

A partition $P$ distinguishes between $i\in N$ and $j\in N\backslash i$ if $i\in A\in P$ while $j\in B\in P$ with $A\neq B$, and the set of such distinctions has been
recently proposed as the logical analog of the complement of $P$, with the (normalized) number of distinctions providing a novel measure of the (logical) entropy of partitions
\cite{EllermanLogicalEntropy,EllermanLogic}. In particular, this is achieved through apartness binary relations $\mathcal R^c$, wich are the complement of equivalence relations
$\mathcal R$ (see again Section 2). In terms of atoms $[ij]\in\mathcal P^N_{\mathcal A}$ of the partition lattice, the logical entropy $h:\mathcal P^N\rightarrow\mathbb R_+$ of
partitions \cite[p. 127]{EllermanLogicalEntropy} is
\begin{equation}
h(P)=\frac{2|\{[ij]:P\not\geqslant [ij]\}|}{n^2}=\frac{2\left(\binom{n}{2}-s(P)\right)}{n^2}=\frac{n(n-1)-2s(P)}{n^2}\text ,
\end{equation}
with $h(P^{\top})=0=s(P_{\bot})$ and $h(P_{\bot})=\frac{n-1}{n}=\frac{2s(P^{\top})}{n^2}$.

\begin{proposition}
The logical entropy-based minimum-weight distance $\delta_h$ is
\begin{equation*}
\delta_h(P,Q)=2h(P\wedge Q)-h(P)-h(Q)\text{ for all }P,Q\in\mathcal P^N\text .
\end{equation*}
\end{proposition}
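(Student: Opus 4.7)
The plan is to deduce Proposition 11 as an immediate application of Proposition 10, by observing that the logical entropy $h$ is simply an order-reversing affine transformation of the size $s$. Concretely, equation (9) may be rewritten as
\begin{equation*}
h(P)=\frac{n-1}{n}-\frac{2}{n^2}s(P)\text ,
\end{equation*}
so $h=c-\alpha s$ with $c=\frac{n-1}{n}\in\mathbb R$ and $\alpha=\frac{2}{n^2}>0$.

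From this representation, the three properties of $h$ needed to invoke Proposition 10 transfer from the corresponding properties of $s$ established earlier in the paper. First, $s$ is symmetric (it depends only on the class $c^P$), hence so is $h$. Second, $s$ is strictly order-preserving by Lemma 1: $P>Q\Rightarrow s(P)>s(Q)$, so multiplying by the negative coefficient $-\alpha$ gives $h(P)<h(Q)$, i.e.\ $h$ is strictly order-inverting. Third, $s$ is supermodular (as shown in the proof of Proposition 1 using its totally positive M\"obius inversion), so $-s$ is submodular, and adding a constant preserves submodularity; hence $h$ is submodular. Thus $h\in\mathbb F$.

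With these properties in hand, Proposition 10 applies directly to $f=h$ in the order-inverting, submodular case, yielding
\begin{equation*}
\delta_h(P,Q)=2h(P\wedge Q)-h(P)-h(Q)\text{ for all }P,Q\in\mathcal P^N\text ,
\end{equation*}
as claimed. Geometrically, one may note as a sanity check that the edge weights induced by $h$ are just a constant rescaling of those induced by $s$, namely $w_h(\{P,Q\})=\frac{2}{n^2}w_s(\{P,Q\})$, so the same paths are optimal up to the factor $\frac{2}{n^2}$; however, because $h$ reverses the order, the minimum-weight $P$–$Q$-path visits the join $P\vee Q$ rather than the meet $P\wedge Q$ (compare with $HD=\delta_s$, which visits the meet).

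There is no real obstacle here beyond the routine verification of the three lattice-function properties; the content of Proposition 11 is essentially that logical entropy fits into the framework of Section 5 via its affine relationship to the size, and the reduction to Proposition 10 is mechanical.
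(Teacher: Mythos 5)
Your main argument is correct and is essentially the paper's own proof: both rewrite $h$ as an affine, order-reversing transformation of the size $s$, transfer symmetry, strict order-inversion and submodularity from the corresponding properties of $s$, and then invoke Proposition 10 in the order-inverting/submodular case.

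One remark in your closing ``sanity check'' is wrong, however, and contradicts both the formula you just derived and your own (correct) observation that $w_h(\{P,Q\})=\frac{2}{n^2}w_s(\{P,Q\})$. Since the edge weights induced by $h$ are a positive constant multiple of those induced by $s$, exactly the same paths are optimal for $\delta_h$ as for $\delta_s=HD$; hence the minimum-$h$-weight $P$--$Q$-path visits the \emph{meet} $P\wedge Q$, not the join. This is consistent with Table 3: for $f$ strictly order-inverting the expression $2f(P\wedge Q)-f(P)-f(Q)$ is precisely the weight of a monotone path descending from $P$ to $P\wedge Q$ and ascending to $Q$ (each leg telescopes to $f(P\wedge Q)-f(P)$ and $f(P\wedge Q)-f(Q)$ because $f$ increases as one moves down the lattice). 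Whether the optimal path visits the meet or the join is governed by the pairing of order-preservation/inversion with super/submodularity, not by order-inversion alone; indeed $\delta_h=\frac{2}{n^2}\,HD$. This does not affect the validity of the proof of the proposition itself.
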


\begin{proof}
Logical entropy $h$ satisfies $h\in\mathbb F$ and is strictly order-inverting. Also, apart from constant terms, $h$ varies with $-s$, which is submodular because $s$ is
supermodular. That is to say,
\begin{eqnarray*}
h(P)+h(Q)&=&\frac{2}{n}\left(n-1-\frac{s(P)+s(Q)}{n}\right)\text ,\\
h(P\wedge Q)+h(P\vee Q)&=&\frac{2}{n}\left(n-1-\frac{s(P\wedge Q)+s(P\vee Q)}{n}\right)\text .
\end{eqnarray*}
Thus $s(P\wedge Q)+s(P\vee Q)\geq s(P)+s(Q)\Rightarrow h(P\wedge Q)+h(P\vee Q)\leq h(P)+h(Q)$ and the desired conclusion follows from Proposition 10.
\end{proof}

A field of subsets is a set system $\mathcal F\subseteq 2^N$ closed under union, intersection and complementation, i.e. $A\cap B,A\cup B,A^c\in\mathcal F$ for all
$A,B\in\mathcal F$. Every partition $P\in\mathcal P^N$ generates the field $\mathcal F_P=2^P$ containing all subsets $B\in 2^N$ obtained as the union of blocks $A\in P$, with
$\mathcal F_{P_{\bot}}=2^N$ and $\mathcal F_{P^{\top}}=\{\emptyset,N\}$. There are $2^{n-1}-1$ minimal fields (generated by partitions) that strictly include
$\mathcal F_{P^{\top}}$; they are those $\mathcal F_A=\mathcal F_{A^c}=\{\emptyset,A,A^c,N\}$ with $\emptyset\subset A\subset N$.
On the other hand,
2-cardinal partitions $\{A,A^c\}\in\mathcal P^N$ are the co-atoms \cite{Aigner79} (or dual atoms \cite{RotaMobius}) of partition lattice $(\mathcal P^N,\wedge,\vee)$ ordered
by coarsening. In fact, in information theory finer partitions are generally more valuable than coarser ones, and thus attention is placed on order-inverting partition functions.
In this view, the partition lattice is often dealt with as ordered by refinement and thus with the upside-down Hasse diagram. Accordingly,
a distance between partitions also obtains by counting co-atoms rather than atoms. To this end, define the co-size
$cs:\mathcal P^N\rightarrow\mathbb Z_+$ by $cs(P)=|\{\{A,A^c\}:P\leqslant \{A,A^c\}\}|$, with $cs(P_{\bot})=2^{n-1}-1$ and $cs(P^{\top})=0$. In words, $cs(P)$ is
the number of co-atoms coarser than $P$. 

\begin{proposition}
The minimum-$cs$-weight partition distance is
\begin{equation*}
\delta_{cs}(P,Q)=cs(P)+cs(Q)-2cs(P\vee Q)\text{ for all }P,Q\in\mathcal P^N\text .
\end{equation*}
\end{proposition}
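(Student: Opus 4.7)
The plan is to reduce this to Proposition 10 by verifying that the co-size $cs$ satisfies the hypotheses of its strictly-order-inverting case: namely, that $cs\in\mathbb F$, that $cs$ is strictly order-inverting, and that $cs$ is supermodular. Once these are in place, Proposition 10 applied to $f=cs$ delivers the stated formula directly.

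First I would record a closed form for $cs$. Every co-atom is a two-block partition $\{A,A^c\}$, and $P\leqslant\{A,A^c\}$ holds exactly when each block of $P$ is contained either in $A$ or in $A^c$. Hence a co-atom coarser than $P$ is precisely a non-trivial bipartition of the $|P|$-set of blocks of $P$ (with $A$ and $A^c$ identified), so $cs(P)=2^{|P|-1}-1$. Symmetry (dependence only on the type of $P$, through $|P|$) and strict order-inversion ($P>Q$ forces $|P|<|Q|$, hence $cs(P)<cs(Q)$) are now immediate, placing $cs$ in $\mathbb F$ as a strictly order-inverting function.

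The technical heart is supermodularity. Rather than manipulating the closed formula via convexity, I would argue at the level of the supporting sets $\mathcal C(P):=\{\{A,A^c\}:P\leqslant\{A,A^c\}\}$, for which $cs(P)=|\mathcal C(P)|$. Two set-theoretic facts follow straight from the definitions of meet and join in $(\mathcal P^N,\wedge,\vee)$: first, $\mathcal C(P\vee Q)=\mathcal C(P)\cap\mathcal C(Q)$, since by the universal property of the join a co-atom lies above $P\vee Q$ if and only if it lies above both $P$ and $Q$; second, $\mathcal C(P\wedge Q)\supseteq\mathcal C(P)\cup\mathcal C(Q)$, since any co-atom above $P$ (or above $Q$) automatically sits above $P\wedge Q\leqslant P,Q$. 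Ordinary inclusion-exclusion then yields
\begin{equation*}
cs(P)+cs(Q)=|\mathcal C(P)\cap\mathcal C(Q)|+|\mathcal C(P)\cup\mathcal C(Q)|\leqslant cs(P\vee Q)+cs(P\wedge Q),
\end{equation*}
which is precisely supermodularity. With $cs\in\mathbb F$ strictly order-inverting and supermodular, Proposition 10 immediately gives the claimed $\delta_{cs}(P,Q)=cs(P)+cs(Q)-2cs(P\vee Q)$.

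The main obstacle I anticipate is precisely this supermodularity step; the inclusion-exclusion argument above keeps it short and transparent, but an alternative route via convexity of $x\mapsto 2^{x-1}$ combined with the submodularity of the rank (equivalently, $|P|+|Q|\leqslant|P\wedge Q|+|P\vee Q|$) is also viable, at the cost of noticeably more bookkeeping than the direct set-counting approach.
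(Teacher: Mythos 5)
Your proposal is correct and follows essentially the same route as the paper: verify that $cs\in\mathbb F$ is strictly order-inverting and supermodular, then invoke Proposition 10. The only cosmetic difference is that the paper certifies supermodularity by noting that the M\"obius inversion from above of $cs$ is the $\{0,1\}$-indicator of co-atoms (the dual of the totally-positive observation of Section 2.1), whereas you unwind that same fact into an explicit inclusion--exclusion on the sets $\mathcal C(P)$ of co-atoms above $P$ --- a slightly more self-contained rendering of the identical argument.
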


\begin{proof}
Denote by $\hat\mu^{cs}:\mathcal P^N\rightarrow\mathbb Z$ the M\"obius inversion from above \cite{Aigner79,RotaMobius} of the co-size, with
$cs(P)=\sum_{Q\geqslant P}\hat\mu^{cs}(Q)$ for all $P$. By definition, $\hat\mu^{cs}(P)=1$ if $|P|=2$ and 0 otherwise. Like for the size in Proposition 1, this entails
supermodularity, i.e. $cs(P\wedge Q)+cs(P\vee Q)\geq cs(P)+cs(Q)$. Furthermore, $cs\in\mathbb F$ is order-inverting. Therefore,
\begin{equation*}
cs(P)+cs(Q)-2cs(P\vee Q)\leq cs(P\wedge Q)-cs(P\vee Q)\leq 2cs(P\wedge Q)-cs(P)-cs(Q)
\end{equation*}
for all $P,Q\in\mathcal P^N$.
\end{proof}

Denote by $(\Im ,\sqcap ,\sqcup)$ the lattice whose elements are the $\mathcal B_n$ fields of subsets $\mathcal F_P$ generated by partitions $P\in\mathcal P^N$, ordered by
inclusion $\supseteq$. The meet and join are, respectively, $\mathcal F_P\sqcap\mathcal F_Q=\mathcal F_{P\vee Q}$ and $\mathcal F_P\sqcup\mathcal F_Q=\mathcal F_{P\wedge Q}$.
The set of atoms is the collection $\{\mathcal F_{\{A,A^c\}}:\emptyset\subset A\subset N\}$ of minimal fields; that is to say,
$\mathcal F_P=\underset{\{A,A^c\}\geqslant P}{\sqcup}\mathcal F_{\{A,A^c\}}$ for all $\mathcal F_P\in\Im$. Therefore, $\delta_{cs}(P,Q)$ may also be regarded as an analog of
the traditional Hamming distance between subsets:
\begin{eqnarray*}
\delta_{cs}(P,Q)&=&|\{\{A,A^c\}:\mathcal F_{\{A,A^c\}}\subseteq\mathcal F_P\}|+|\{\{A,A^c\}:\mathcal F_{\{A,A^c\}}\subseteq\mathcal F_Q\}|+\\
&-&2|\{\{A,A^c\}:\mathcal F_{\{A,A^c\}}\subseteq(\mathcal F_Q\cap\mathcal F_P)\}|\text .
\end{eqnarray*}
In words, this is the number of minimal fields $\mathcal F_{\{A,A^c\}}$ included in either $\mathcal F_P$ or else in $\mathcal F_Q$, but not in both.

\section{Appendix: Euclidean distance between fuzzy partitions}
The leading idea of this section is to propose a measure of the distance between fuzzy partitions, like in \cite{Brouwer2009}. Together with theoretical worthiness, from an applicative
perspective this distance is useful for comparing alternative results of objective function-based fuzzy clustering algorithms (such as the fuzzy
C-means, see \cite{FuzzyCmeansBook2008,Valente+07} for a comprehensive treatment). More precisely, these algorithms usually rely on local search methods, and their output takes the
form of a membership matrix, where rows and columns are indexed by data and clusters, respectively. For a given data set, the chosen algorithm typically outputs different membership
matrices depending on alternative initial candidate solutions and/or parametrizations, and these varying outputs are commonly ranked through a validity index
(see \cite{WangZhang2007} for a recent overview). A key input is the desired number of clusters, which is not chosen autonomously through optimization, but is conversely maintained
fixed over the search. Conceiving several runs for each reasonable number of clusters, a common situation is thus one where alternative outputs score best on the chosen validity index.
Then, the proposed distance measure allows to compare these outputs, each with a different number of clusters and with highest validity score for that number.

Fuzzy clusterings are collections $A_1,\ldots ,A_m\subseteq N$ of subsets of $N$ endowed with $n$ membership distributions $x_{il},1\leq i\leq n,1\leq l\leq m$, where $x_{il}\in[0,1]$
quantifies the membership of $i\in N$ in $A_l,1\leq l\leq m$. A fuzzy clustering thus is a $m$-collection of fuzzy subsets $(x_{1l},\ldots ,x_{nl})\in[0,1]^n,1\leq l\leq m$ of $N$
\cite{FuzzySubset}, and $m\in\{1,\ldots ,2^n-1\}$ since \textit{every} non-empty subset $A_l\neq\emptyset$ may have an associated fuzzy subset $(x_{1l},\ldots ,x_{nl})\in[0,1]^n$.
Membership matrices $\textbf x\in[0,1]^{n\times m}$ satisfy $\sum_{1\leq l\leq m}x_{il}=1$ for all $i\in N$. The traditional Euclidean distance $d(x,x')$ between
$x=(x_1,\ldots ,x_n),x'=(x'_1,\ldots ,x'_n)\in[0,1]^n$ simply is $d(x,x')=\sqrt{\sum_{1\leq i\leq n}\left(x_i-x'_i\right)^2}$, i.e. the $\ell_2$ norm in $\mathbb R^n$. For measuring the
same distance $d(\textbf x,\textbf x')$ between $\textbf x\in[0,1]^{n\times m}$ and $\textbf x'\in[0,1]^{n\times m'}$ it must be $m=m'$, in which case
$d(\textbf x,\textbf x')=\sqrt{\sum_{1\leq i\leq n,1\leq l\leq m}(x_{il}-x'_{il})^2}$. Yet, as already observed, very likely there are fuzzy clusterings with high scores
in terms of the chosen validity index such that $m\neq m'$. In this view, the proposed method is dimension-free, i.e. regardless of whether $m=m'$ or $m\neq m'$.

When considering that the $n$ singletons $\{i\}$, $i\in N$
are the atoms of Boolean lattice $(2^N,\cap,\cup)$, a fuzzy subset is readily seen to consist, in fact, of $n$ memberships $x_i\in[0,1],1\leq i\leq n$ indexed by the $n$ atoms. In
this view, from a combinatorial
perspective fuzzy elements of atomic lattices may be defined to be collections of [0,1]-memberships, one for each atom. Insofar as lattice theory is concerned, fuzzy partitions may thus be
regarded as points in the 0/1-polytope $\mathbb P_N$ introduced in Section 5, with variables $y_{[ij]_k}$
indexed by atoms $[ij]_k\in\mathcal P^N_{\mathcal A},1\leq k\leq\binom{n}{2}$.

\begin{definition}
A fuzzy partition is any $y=\left(y_{[ij]_1},\ldots ,y_{[ij]_{\binom{n}{2}}}\right)\in\mathbb P_N$,
and $y=I_P\in ex(\mathbb P_N)$ is in fact non-fuzzy (or hard), while $y\in\mathbb P_N\backslash ex(\mathbb P_N)$ is properly fuzzy.
\end{definition}

A fuzzy partition thus is a point in the polytope $\mathbb P_N\subset[0,1]^{\binom{n}{2}}$ included in the $\binom{n}{2}$-cube, with axes indexed by atoms $[ij]\in\mathcal P^N_{\mathcal A}$, and a
non-fuzzy partition $P\in\mathcal P^N$ corresponds to a vertex of $\mathbb P_N$ identified by indicator function $I_P$. 

Denote by $\mathbb M_N=\underset{1\leq m\leq 2^n-1}{\cup}[0,1]^{n\times m}$ the set of all membership matrices. For $\textbf x=[x_{il}]_{1\leq i\leq n,1\leq l\leq m}\in\mathbb M_N$,
let $A_1,\ldots ,A_m$ be the associated subsets, i.e. $x_{il}$ is the membership of $i$ in $A_l$, while $\sum_{1\leq l\leq m}x_{il}=1$ for all $i$. Thus, for instance, if $m=1$, then
$x_{i1}=1$ for $1\leq i\leq n$.

\begin{proposition}
Mapping $\eta:\mathbb M_N\rightarrow[0,1]^{\binom{n}{2}}$ defined by
\begin{equation}
\eta_{[ij]}(\textbf x)=\underset{i,j\in A_l}{\sum_{1\leq l\leq m}}(x_{il}\cdot x_{jl})\text{ for all }[ij]\in\mathcal P^N_{\mathcal A}
\end{equation}
satisfies: (i) if $x_{il}\in\{0,1\}$ for all $i,l$, then $\eta(\textbf x)\in ex(\mathbb P_N)$, and (ii) $\eta(\textbf x)\in\mathbb P_N$.
\end{proposition}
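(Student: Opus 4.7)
The plan is to read $\eta_{[ij]}(\textbf x) = \sum_{l=1}^m x_{il}x_{jl}$ probabilistically (the constraint $i,j\in A_l$ in the statement is vacuous, since $x_{il}=0$ whenever $i\notin A_l$). Because $\sum_{l=1}^m x_{il}=1$ with $x_{il}\geq 0$, each row of $\textbf x$ is a probability distribution on $\{1,\ldots,m\}$; sampling independently, for each $i\in N$, a column $f(i)$ according to the $i$th row yields a random function $f\colon N\to\{1,\ldots,m\}$ with joint law $\Pr[f]=\prod_{i\in N} x_{i,f(i)}$, whose non-empty fibres form a (random) partition $P_f\in\mathcal P^N$. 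The identity $\sum_f \Pr[f]=\prod_i\sum_l x_{il}=1$ confirms this is a well-defined probability distribution on $\mathcal P^N$.

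For part (i), if every $x_{il}\in\{0,1\}$ then the row-sum condition forces exactly one $1$ per row, hence a deterministic $l(i)$ with $x_{i,l(i)}=1$. Then $\eta_{[ij]}(\textbf x)=x_{i,l(i)}x_{j,l(i)}$ equals $1$ when $l(i)=l(j)$ and $0$ otherwise, which is precisely $I_P([ij])$ for the partition $P$ whose blocks are the non-empty fibres of $l$. Hence $\eta(\textbf x)=I_P\in ex(\mathbb P_N)$, settling (i).

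For part (ii), the key atomwise identity is
\begin{equation*}
\Pr[f(i)=f(j)] \;=\; \sum_{l=1}^{m} x_{il}x_{jl} \;=\; \eta_{[ij]}(\textbf x),
\end{equation*}
immediate from the independence of the rows of $\textbf x$, and $\Pr[f(i)=f(j)]$ is exactly the average of $I_{P_f}([ij])\in\{0,1\}$ over the law of $f$. Summing this identity across the $\binom{n}{2}$ atom coordinates yields
\begin{equation*}
\eta(\textbf x) \;=\; \sum_{f\colon N\to\{1,\ldots,m\}} \Pr[f]\, I_{P_f} \;=\; \sum_{P\in\mathcal P^N}\Bigl(\sum_{f:\,P_f=P}\Pr[f]\Bigr)\, I_P,
\end{equation*}
which exhibits $\eta(\textbf x)$ as an explicit convex combination of the vertices $\{I_P:P\in\mathcal P^N\}=ex(\mathbb P_N)$ of $\mathbb P_N$. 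Thus $\eta(\textbf x)\in\mathbb P_N$.

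No serious obstacle is expected: part (i) is the degenerate specialization of this construction to a point mass on a single $f$, and the only substantive check in part (ii) is the pair-correlation identity $\Pr[f(i)=f(j)]=\sum_l x_{il}x_{jl}$, which follows at once from the independence of the coordinates of $f$. The whole argument is essentially a restatement of the fact that $\eta$ is the symmetric bilinear form $\textbf x\mapsto\textbf x\textbf x^{\!\top}$ read off on the pair coordinates $\{i,j\}\in N_2$, and that the image of the set of stochastic matrices under this map is the convex hull of its vertex images.
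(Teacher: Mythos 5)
Your proof is correct, and for part (ii) it takes a genuinely different route from the paper's. The paper exhibits $\eta(\textbf x)$ as a convex combination of vertices via a greedy recursion on the atom coordinates: it starts at $P^{\top}$ with coefficient $\min_{[ij]}\eta_{[ij]}(\textbf x)$, then repeatedly selects the atom with the next-smallest value and attaches the increment to a coarsest partition above that atom but not above the previously processed ones. That yields an economical representation (supported on roughly $\binom{n}{2}+2$ partitions), but the paper does not really verify that the coefficients produced this way are always nonnegative and sum to one, and leans on Example 5 for illustration. Your argument instead realizes $\eta(\textbf x)$ as the expectation of the random vertex $I_{P_f}$ obtained by sampling, independently for each $i$, a column $f(i)$ from the probability distribution given by the $i$th row of $\textbf x$: the coordinatewise identity $\Pr[f(i)=f(j)]=\sum_{l}x_{il}x_{jl}=\eta_{[ij]}(\textbf x)$ is immediate from independence, and the weights $\Pr[f]=\prod_{i}x_{i,f(i)}$ are manifestly nonnegative with total mass $\prod_i\sum_l x_{il}=1$, so membership in the convex hull follows at once. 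This is shorter and fully rigorous, at the price of a decomposition supported on up to $m^{n}$ terms rather than a sparse one. Part (i) is handled the same way in both proofs, and your observation that the constraint $i,j\in A_l$ under the summation is vacuous is consistent with the paper's convention (visible in its Example 5) that $x_{il}=0$ whenever $i\notin A_l$.
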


\begin{proof}
Firstly, $\sum_{1\leq l\leq m}x_{il}=1$ for all $i\in N$ entails that the summation yields a positive quantity never exceeding 1, that is $\eta(\textbf x)\in[0,1]^{\binom{n}{2}}$.
Concerning (i), if $x_{il}\in\{0,1\},1\leq i\leq n,1\leq l\leq m$, then $\textbf x$ corresponds to a non-fuzzy partition $P\in\mathcal P^N$, i.e.
$x_{il}=\left\{\begin{array}{c}1\text{ if }i\in A_l\text ,\\0\text{ if }i\notin A_l\text , \end{array}\right .1\leq i\leq n,1\leq l\leq m$.
The $m$ columns $(x_{1l},\ldots ,x_{nl})^T,1\leq l\leq m$ of \textbf x are thus given by the $m$ characteristic functions
$\chi_{A_l},1\leq l\leq m$ of subsets $A_1,\ldots ,A_m$ (see above), with $P=\{A_1,\ldots ,A_m\}$ for some partition $P\in\mathcal P^N$. Hence $\eta(\textbf x)=I_P$, in that
\begin{equation*}
\eta_{[ij]}(\textbf x)=I_P([ij])=\left\{\begin{array}{c}1\text{ if }\{i,j\}\subseteq A_l\text{ for some }l\in\{1,\ldots ,m\}\text ,\\
0\text{ if }\{i,j\}\not\subseteq A_l\text{ for all }l\in\{1,\ldots ,m\}\text , \end{array}\right .
\end{equation*}
for all atoms $[ij]\in\mathcal P^N_{\mathcal A}$.
Finally, coming to (ii), observe that $\eta(\textbf x)$ obtains as a suitable convex combination of vertices $I_{P_1},\ldots ,I_{P_h}\in ex(\mathbb P_N)$ of the polytope. That is to say,
$\eta(\textbf x)=\alpha_{P_1}I_{P_1}+\cdots+\alpha_{P_h}I_{P_h}$ with $\alpha_{P_1},\ldots ,\alpha_{P_h}>0$ and $\sum_{1\leq h'\leq h}\alpha_{P_{h'}}=1$.
These partitions $P_{h'}$ and coefficients $\alpha_{P_{h'}},1\leq h'\leq h$ are determined through a
fairly simple recursive procedure. Starting from the top partition $P_1=P^{\top}$, with coefficient $\alpha_{P^{\top}}=\underset{[ij]\in\mathcal P^N_{\mathcal A}}{\min}\eta_{[ij]}(\textbf x)$,
let $[ij]^1$ be the atom corresponding to this minimum. Next, atom $[ij]^2$ corresponds to minimum $\underset{[ij]\neq[ij]^1}{\min}\eta_{[ij]}(\textbf x)$ and
$P_2<P^{\top}$ is a coarsest partition satisfying $[ij]^1\not\leqslant P_2\geqslant[ij]^2$, while coefficient $\alpha_{P_2}=\eta_{[ij]^2}(\textbf x)-\eta_{[ij]^1}(\textbf x)$ obtains
incrementally. At the generic $h'$-th step, atom $[ij]^{h'}$ corresponds to minimum $\underset{[ij]\neq[ij]^1,\ldots ,[ij]^{h'-1}}{\min}\eta_{[ij]}(\textbf x)$, while the selected partition
$P_{h'}$ is a coarsest one satisfying $[ij]^1,\ldots ,[ij]^{h'-1}\not\leqslant P_h\geqslant[ij]^{h'}$ and the coefficient $\alpha_{P_{h'}}$ is given by
$\alpha_{P_{h'}}=\eta_{[ij]^{h'}}(\textbf x)-\eta_{[ij]^{h'-1}}(\textbf x)$. These steps continue through partitions that are either finer or else incomparable with respect to the previous
ones, while reaching the atoms themselves and, if necessary, the bottom $P_{\bot}$ too.
\end{proof}

\begin{example}
For $N=\{1,2,3,4\}$, consider the collections

$\{A_1,A_2,A_3\}=\{\{1,2,3\},\{1,4\},\{2,3,4\}\}$ and

$\{A'_1,A'_2,A'_3,A'_4\}=\{\{1,2\},\{2,3\},\{3,4\},\{1,2,3,4\}\}$ of subsets, with membership matrices $\textbf x\in[0,1]^{4\times 3}$ and $\textbf x'\in[0,1]^{4\times 4}$ given by:

$x_{11}=0.7$, $x_{12}=0.3$, $x_{13}=0$ and

$x_{21}=0.4$, $x_{22}=0$, $x_{23}=0.6$ and

$x_{31}=0.2$, $x_{32}=0$, $x_{33}=0.8$ and

$x_{41}=0$, $x_{42}=0.5$, $x_{43}=0.5$ for the former, while

$x'_{11}=0.4$, $x'_{12}=0=x'_{13}$, $x'_{14}=0.6$ and

$x'_{21}=0.2$, $x'_{22}=0.3$, $x'_{23}=0$, $x'_{24}=0.5$ and

$x'_{31}=0$, $x'_{32}=0.3$, $x'_{33}=0.4$, $x'_{34}=0.3$ and

$x'_{41}=0=x'_{42}$, $x'_{43}=0.8$, $x'_{44}=0.2$ for the latter. Let $\eta_{[ij]}(\textbf x)=y_{[ij]}=y_{ij}$ for notational convenience. Expression (10) yields:

$y_{12}=0.7\cdot 0.4=0.28$,

$y_{13}=0.7\cdot 0.2=0.14$,

$y_{14}=0.3\cdot 0.5=0.15$,

$y_{23}=0.4\cdot 0.2+0.6\cdot0.8=0.08+0.48=0.56$,

$y_{24}=0.6\cdot 0.5=0.3$,

$y_{34}=0.8\cdot0.5=0.4$ for the former collection, and

$y'_{12}=0.4\cdot 0.2+0.6\cdot 0.5=0.08+0.3=0.38$,

$y'_{13}=0.6\cdot 0.3=0.18$,

$y'_{14}=0.6\cdot 0.2=0.12$,

$y'_{23}=0.3\cdot 0.3+0.5\cdot0.3=0.09+0.15=0.24$,

$y'_{24}=0.5\cdot 0.2=0.1$,

$y'_{34}=0.4\cdot0.8+0.3\cdot0.2=0.32+0.06=0.38$ for the latter. Concerning the convex combinations corresponding to $\eta(\textbf x)=y$
and $\eta(\textbf x')=y'$, for the former collection $\{\{1,2,3\},\{1,4\},\{2,3,4\}\}$, since $y_{13}\leq y_{ij},1\leq i<j\leq 4$, firstly $P_1=P^{\top}$ and
$\alpha_{P_1}=\alpha_{P^{\top}}=0.14=y_{13}$, thus partitions $P_2,P_3,\ldots$ coming next satisfy $P_2,P_3,\ldots\not\geqslant[13]$.
As $y_{14}=0.15<y_{ij}$ for $[ij]\neq[13],[14]$, a coarsest $P\geqslant[14]$ is $P_2=124|3$. Hence
$\alpha_{P_2}=\alpha_{124|3}=0.15-0.14=0.01$
and therefore $P\not\geqslant[13],[14]$ for all subsequest partitions $P$.
The new minimum is $y_{12}=0.28$, and the above constraints yield  $P_3=12|34$ as the coarsest available partition, with
$\alpha_{12|34}=0.28-0.15=0.13$. After updating, $y_{34}=0.4$ is the novel minimum, with $P_4=1|234$ and
$\alpha_{1|234}=0.4-0.13-0.14=0.13$.
The last partitions $P_5,P_6$ are atoms themselves, namely $P_5=[24]$ and $P_6=[23]$, with associated coefficients
$\alpha_{1|24|3}=0.3-0.13-0.01-0.14=0.02$ as well as
$\alpha_{1|23|4}=0.56-0.13-0.14=0.29$.
Since the sum of these six coefficients yields $0.72$, the bottom partition finally has coefficient
$\alpha_{P_{\bot}}=1-0.72=0.28$.
Thus the sought convex combination of indicator functions or vertices $I_P\in ex(\mathbb P_N)$ is 
\begin{eqnarray*}
y&=&0.14\cdot I_{1234}+0.01\cdot I_{124|3}+0.13\cdot I_{12|34}++0.13\cdot I_{1|234}+0.02\cdot I_{1|24|3}+\\
&+&0.29\cdot I_{1|23|4}+0.28\cdot I_{1|2|3|4}\text .
\end{eqnarray*}
A generic point in polytope $\mathbb P_N$ generally admits alternative (equivalent)
convex combinations of vertices. For instance, $y$ also admits
\begin{eqnarray*}
y&=&0.15\cdot I_{14|2|3}+0.14\cdot I_{123|4}+0.14\cdot I_{12|3|4}++0.3\cdot I_{1|234}+0.1\cdot I_{1|2|34}+\\
&+&0.12\cdot I_{1|23|4}+0.05\cdot I_{1|2|3|4}\text .
\end{eqnarray*}
Coming to the second collection $\{\{1,2\},\{2,3\},\{3,4\},\{1,2,3,4\}\}$ of subsets, the first coefficient is $\alpha'_{P^{\top}}=y'_{24}=0.1$
since $y'_{24}<y'_{ij},1\leq i<j\leq 4$. Next, rather straightforwardly,

$\alpha'_{[12]}=\alpha'_{12|3|4}=y'_{12}-y'_{24}=0.38-0.1=0.28$,

$\alpha'_{[13]}=\alpha'_{13|2|4}=y'_{13}-y'_{24}=0.18-0.1=0.08$,

$\alpha'_{[14]}=\alpha'_{14|2|3}=y'_{14}-y'_{24}=0.12-0.1=0.02$,

$\alpha'_{[23]}=\alpha'_{1|23|4}=y'_{23}-y'_{24}=0.24-0.1=0.14$,

$\alpha'_{[34]}=\alpha'_{1|2|34}=y'_{34}-y'_{24}=0.38-0.1=0.28$. These six coefficients add up to $0.9$, hence the bottom partition has coefficient
$\alpha'_{P_{\bot}}=1-0.9=0.1$. A sought convex combination thus is
\begin{eqnarray*}
y'&=&0.1\cdot I_{1234}+0.02\cdot I_{14|2|3}+0.08\cdot I_{13|2|4}+0.14\cdot I_{1|23|4}+0.28\cdot I_{12|3|4}+\\
&+&0.28\cdot I_{1|2|34}+0.1\cdot I_{1|2|3|4}\text .
\end{eqnarray*}
\end{example}

The Euclidean distances between fuzzy partitions $y,y'\in\mathbb P_N$ given by the $\ell_1$ and $\ell_2$ norms, denoted by $d_1(y,y')$ and $d_2(y,y')$ respectively,
are the usual distances between points in a Euclidean vector space (i.e. $\mathbb R^{\binom{n}{2}}$), namely
\begin{equation*}
d_1(y,y')=\sum_{[ij]\in\mathcal P^N_{\mathcal A}}abs\left(y_{[ij]}-y'_{[ij]}\right)\text{ and }d_2(y,y')=\sqrt{\sum_{[ij]\in\mathcal P^N_{\mathcal A}}\left(y_{[ij]}-y'_{[ij]}\right)^2}\text ,
\end{equation*}
where $abs(\alpha-\beta)=\max\{\alpha,\beta\}-\min\{\alpha,\beta\}$ is the absolute value. Both are well-known metrics (see above). In particular, triangle inequality
may be considered in conjunction with the order relation and the meet of fuzzy partitions.

\subsection{Order, meet and join}
The order relation $\geqslant$, the meet $\wedge$ and the join $\vee$ for partitions $P,Q\in\mathcal P^N$ may be extended from vertices $I_P,I_Q$ of polytope $\mathbb P_N$
to the whole of this latter. Specifically, $P\geqslant Q\Leftrightarrow I_P([ij])\geq I_Q([ij])\text{ for all }[ij]\in\mathcal P^N_{\mathcal A}$. In the same way,
for any two fuzzy partitions $y,y'\in\mathbb P_N$,
\begin{equation*}
y\geqslant y'\Leftrightarrow y_{[ij]}\geq y'_{[ij]}\text{ for all }[ij]\in\mathcal P^N_{\mathcal A}\text .
\end{equation*}
For the discrete setting provided by vertices of the polytope, the following condition has been already considered in terms of ``vertical collinearity'' \cite{Meila2007} or
``interval betweenness'' (for valuations of distributive lattices) \cite{Monjardet1981}.

\begin{proposition}
For any fuzzy partitions $y,z,y'\in\mathbb P_N$, if $y\geqslant z\geqslant y'$, then $d_1(y,z),d_1(z,y')$ and $d_1(y,y')$ satisfy triangle inequality with equality, that is
\begin{equation*}
d_1(y,z)+d_1(z,y')=d_1(y,y')\text .
\end{equation*}
\end{proposition}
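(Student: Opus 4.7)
The plan is to exploit the coordinatewise nature of both the ordering $\geqslant$ on $\mathbb{P}_N$ and the $\ell_1$ norm, so that the global equality reduces to an atom-by-atom telescoping identity.

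First I would unfold the hypothesis $y\geqslant z\geqslant y'$ using the coordinatewise definition of the order relation on $\mathbb{P}_N$ introduced just before the statement, namely $y_{[ij]}\geq z_{[ij]}\geq y'_{[ij]}$ for every atom $[ij]\in\mathcal{P}^N_{\mathcal{A}}$. This is the crucial step because it lets me drop the absolute values in the definition of $d_1$: for each fixed atom $[ij]$, the three nonnegative quantities
\begin{equation*}
abs(y_{[ij]}-z_{[ij]})=y_{[ij]}-z_{[ij]},\quad abs(z_{[ij]}-y'_{[ij]})=z_{[ij]}-y'_{[ij]},\quad abs(y_{[ij]}-y'_{[ij]})=y_{[ij]}-y'_{[ij]}
\end{equation*}
satisfy the identity $(y_{[ij]}-z_{[ij]})+(z_{[ij]}-y'_{[ij]})=y_{[ij]}-y'_{[ij]}$ by trivial telescoping.

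Next I would sum this atomwise identity over all $[ij]\in\mathcal{P}^N_{\mathcal{A}}$ and invoke the definition of $d_1$ given immediately before the statement. Summation commutes with the equality, and since absolute values have already been discarded, the left-hand sum collapses to $d_1(y,z)+d_1(z,y')$ while the right-hand sum is precisely $d_1(y,y')$, yielding the claim.

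There is no real obstacle here; the only thing one has to be careful about is not to pretend that equality in the triangle inequality for $\ell_1$ requires the intermediate point to lie on the straight segment in $\mathbb{R}^{\binom{n}{2}}$. It suffices that on each coordinate axis the value of $z$ lies between those of $y$ and $y'$, and this is exactly what the lattice order $y\geqslant z\geqslant y'$ guarantees. No supermodularity, M\"obius inversion, or polytope-specific argument is needed: the statement is purely a consequence of the compatibility between the coordinatewise order on $\mathbb{P}_N\subseteq[0,1]^{\binom{n}{2}}$ and the summands of the $\ell_1$ distance.
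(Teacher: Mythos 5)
Your proposal is correct and follows essentially the same route as the paper: unfold $y\geqslant z\geqslant y'$ coordinatewise, drop the absolute values since each difference is nonnegative, telescope atom by atom, and sum over $\mathcal P^N_{\mathcal A}$. Your closing remark that betweenness on each coordinate (rather than collinearity in $\mathbb R^{\binom{n}{2}}$) is what drives the $\ell_1$ equality is a nice complement to the paper's subsequent observation that the analogous statement fails for $d_2$.
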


\begin{proof}
If $y\geqslant z\geqslant y'$, then for all atoms $[ij]\in\mathcal P^N_{\mathcal A}$
\begin{eqnarray*}
abs\left(y_{[ij]}-z_{[ij]}\right)&=&y_{[ij]}-z_{[ij]}\text ,\\
abs\left(z_{[ij]}-y'_{[ij]}\right)&=&z_{[ij]}-y'_{[ij]}\text ,\\
abs\left(y_{[ij]}-y'_{[ij]}\right)&=&y_{[ij]}-y'_{[ij]}\text ,
\end{eqnarray*}
and of course
\begin{equation*}
y_{[ij]}-z_{[ij]}+z_{[ij]}-y'_{[ij]}=y_{[ij]}-y'_{[ij]}\text ,
\end{equation*}
hence $d_1(y,z)+d_1(z,y')=d_1(y,y')$.
\end{proof}

The same does not hold for $d_2$, which conversely satisfies triangle inequality with equality if and only if $z$ lies on the line segment between $y$ and $y'$.

Turning attention to the meet $y\wedge y'$ of fuzzy partitions $y,y'\in\mathbb P_N$, firstly consider that for the characteristic functions $\chi_A,\chi_B,\chi_{A\cap B},A,B\in 2^N$
of subsets the meet or intersection is given by $\chi_{A\cap B}(i)=\chi_A(i)\chi_B(i)$ for all $i\in N$, i.e. by the product. Analogously, the meet of partitions
$P,Q,P\wedge Q\in\mathcal P^N$, with indicator functions $I_P,I_Q,I_{P\wedge Q}\in ex(\mathbb P_N)$, is given by
\begin{equation*}
I_{P\wedge Q}([ij])=I_P([ij])I_{Q}([ij])\text{ for all }[ij]\in\mathcal P^N_{\mathcal A}\text ,
\end{equation*}
as $P\wedge Q=\underset{P\geqslant[ij]\leqslant Q}{\vee}[ij]$. Then, the meet $y\wedge y'$ of fuzzy partitions $y,y'\in\mathbb P_N$ also obtains through the product:
$(y\wedge y')_{[ij]}=y_{[ij]}y'_{[ij]}\text{ for all }[ij]\in\mathcal P^N_{\mathcal A}$.

\begin{proposition}
For all $y,y'\in\mathbb P_N$,
\begin{equation*}
d_1(y,y\wedge y')+d_1(y\wedge y',y')-d_1(y,y')=2\sum_{[ij]\in\mathcal P^N_{\mathcal A}}\left(\min\left\{y_{[ij]},y'_{[ij]}\right\}-y_{[ij]}y'_{[ij]}\right)\text .
\end{equation*}
\end{proposition}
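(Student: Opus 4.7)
The plan is to verify the identity atom-by-atom and then sum, exploiting the fact that for entries in $[0,1]$ the meet coordinate $y_{[ij]}y'_{[ij]}$ is dominated by both $y_{[ij]}$ and $y'_{[ij]}$, which removes the absolute values on the left-hand side.

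First I would fix an atom $[ij]\in\mathcal P^N_{\mathcal A}$ and, writing $a=y_{[ij]}$ and $b=y'_{[ij]}$ with $a,b\in[0,1]$, observe that $ab\leq\min\{a,b\}$. Consequently $(y\wedge y')_{[ij]}=ab\leq a$ and $ab\leq b$, so the two absolute values coming from the left-hand side are trivially
\begin{equation*}
\bigl|a-ab\bigr|=a-ab,\qquad\bigl|ab-b\bigr|=b-ab,
\end{equation*}
whose sum equals $a+b-2ab$.

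Next I would rewrite $|a-b|$ in the symmetric form $a+b-2\min\{a,b\}$, valid because $|a-b|=\max\{a,b\}-\min\{a,b\}$ and $a+b=\max\{a,b\}+\min\{a,b\}$. Subtracting gives the per-coordinate identity
\begin{equation*}
(a-ab)+(b-ab)-|a-b|=2\bigl(\min\{a,b\}-ab\bigr).
\end{equation*}
Summing this identity over all atoms $[ij]\in\mathcal P^N_{\mathcal A}$ yields exactly the claimed equality, by definition of $d_1$ and of the meet $y\wedge y'$.

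No step presents a real obstacle here: the only substantive point is the remark that $ab\leq\min\{a,b\}$ when $a,b\in[0,1]$, which is what lets us strip the absolute values cleanly on the left-hand side; everything else is bookkeeping. It may be worth noting as a corollary that the right-hand side is non-negative and vanishes precisely when $y_{[ij]}y'_{[ij]}=\min\{y_{[ij]},y'_{[ij]}\}$ for every atom, i.e.\ when at every coordinate at least one of $y_{[ij]},y'_{[ij]}$ lies in $\{0,1\}$; in particular, when $y,y'\in ex(\mathbb P_N)$ one recovers the horizontal collinearity of Proposition 3 for the hard case.
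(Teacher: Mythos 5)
Your argument is correct and coincides with the paper's own proof: both rely on the observation that $y_{[ij]}y'_{[ij]}\leq\min\{y_{[ij]},y'_{[ij]}\}$ for entries in $[0,1]$ to strip the absolute values, then rewrite $y_{[ij]}+y'_{[ij]}$ and $\bigl|y_{[ij]}-y'_{[ij]}\bigr|$ via $\max$ and $\min$ and sum over atoms. The closing remark about non-negativity and the hard case is a pleasant addition but not needed.
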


\begin{proof}
Firstly note that $y_{[ij]},y'_{[ij]}\in[0,1]$ entails $y_{[ij]}\geq y_{[ij]}y'_{[ij]}\leq y'_{[ij]}$.

Now, $d_1(y,y\wedge y')+d_1(y\wedge y',y')-d_1(y,y')=$
\begin{eqnarray*}
&=&\sum_{[ij]\in\mathcal P^N_{\mathcal A}}\Big[y_{[ij]}-y_{[ij]}y'_{[ij]}+y'_{[ij]}-y_{[ij]}y'_{[ij]}-abs\left(y_{[ij]}-y'_{[ij]}\right)\Big]=\\
&=&\sum_{[ij]\in\mathcal P^N_{\mathcal A}}\Big[\max\left\{y_{[ij]},y'_{[ij]}\right\}+\min\left\{y_{[ij]},y'_{[ij]}\right\}-2y_{[ij]}\cdot y'_{[ij]}+\\
&-&\left(\max\left\{y_{[ij]},y'_{[ij]}\right\}-\min\left\{y_{[ij]},y'_{[ij]}\right\}\right)\Big]=
\end{eqnarray*}
$=2\sum_{[ij]\in\mathcal P^N_{\mathcal A}}\left(\min\left\{y_{[ij]},y'_{[ij]}\right\}-y_{[ij]}y'_{[ij]}\right)$ as wanted.
\end{proof}

A similar expression may be provided for the squared Euclidean distance or $\ell_2^2$ norm $d_2^2\left(y,y\wedge y'\right)+d_2^2\left(y\wedge y',y'\right)-d_2^2\left(y,y'\right)$.

\begin{proposition}
$d_2^2\left(y,y\wedge y'\right)+d_2^2\left(y\wedge y',y'\right)-d_2^2\left(y,y'\right)=$
\begin{equation*}
=2\sum_{[ij]\in\mathcal P^N_{\mathcal A}}\left[y_{[ij]}y'_{[ij]}\left(1-y_{[ij]}-y'_{[ij]}+y_{[ij]}y'_{[ij]}\right)\right]\text{ for all }y,y'\in\mathbb P_N\text .
\end{equation*}
\end{proposition}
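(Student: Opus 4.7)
The plan is to reduce the claim to a purely pointwise algebraic identity, mirroring the proof of Proposition 15 but with squared differences instead of absolute values. Since $d_2^2$ decomposes as a sum over atoms $[ij]\in\mathcal P^N_{\mathcal A}$ and the meet of fuzzy partitions is defined coordinatewise by $(y\wedge y')_{[ij]}=y_{[ij]}y'_{[ij]}$, it suffices to establish the identity
\begin{equation*}
(a-ab)^2 + (ab-b)^2 - (a-b)^2 = 2ab(1-a-b+ab)
\end{equation*}
for arbitrary $a,b\in[0,1]$ (writing $a=y_{[ij]}$ and $b=y'_{[ij]}$), and then sum over all atoms.

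First I would fix an atom $[ij]$ and abbreviate $a=y_{[ij]}$, $b=y'_{[ij]}$ so that $(y\wedge y')_{[ij]}=ab$. Then I would expand:
\begin{equation*}
(a-ab)^2 = a^2(1-b)^2 = a^2-2a^2b+a^2b^2,
\end{equation*}
\begin{equation*}
(ab-b)^2 = b^2(a-1)^2 = a^2b^2-2ab^2+b^2,
\end{equation*}
and subtract $(a-b)^2=a^2-2ab+b^2$. Collecting terms, the $a^2$ and $b^2$ contributions cancel, leaving
\begin{equation*}
-2a^2b + 2a^2b^2 - 2ab^2 + 2ab = 2ab\bigl(1-a-b+ab\bigr).
\end{equation*}

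Finally, I would sum this pointwise identity over all $[ij]\in\mathcal P^N_{\mathcal A}$, which immediately yields the stated closed form for $d_2^2(y,y\wedge y')+d_2^2(y\wedge y',y')-d_2^2(y,y')$. There is no real obstacle here: the only thing to watch is that since $a,b\in[0,1]$ one has $1-a-b+ab=(1-a)(1-b)\geq 0$ and $ab\geq 0$, so the right-hand side is manifestly nonnegative, which confirms the expected triangle-inequality direction for $d_2$ (with equality precisely when $y_{[ij]}y'_{[ij]}(1-y_{[ij]})(1-y'_{[ij]})=0$ at every atom, i.e.\ when $y\wedge y'$ lies on the segment from $y$ to $y'$, matching the remark made after Proposition 15).
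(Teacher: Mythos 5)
Your proposal is correct and is essentially the paper's own proof: both proceed by direct substitution, expanding the three squared differences coordinatewise with $(y\wedge y')_{[ij]}=y_{[ij]}y'_{[ij]}$ and collecting terms to obtain $2y_{[ij]}y'_{[ij]}(1-y_{[ij]}-y'_{[ij]}+y_{[ij]}y'_{[ij]})$ at each atom. The added observation that the summand equals $2ab(1-a)(1-b)\geq 0$ is a pleasant bonus but not part of the paper's argument.
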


\begin{proof}
By direct substitution: $d_2^2\left(y,y\wedge y'\right)+d_2^2\left(y\wedge y'\right)-d_2^2\left(y,y'\right)=$
\begin{eqnarray*}
&=&\sum_{[ij]\in\mathcal P^N_{\mathcal A}}\Big[\left(y_{[ij]}-y_{[ij]}y'_{[ij]}\right)^2+\left(y_{[ij]}y'_{[ij]}-y'_{[ij]}\right)^2-\left(y_{[ij]}-y'_{[ij]}\right)^2\Big]=\\
&=&\sum_{[ij]\in\mathcal P^N_{\mathcal A}}\Big[2y^2_{[ij]}y'^2_{[ij]}-2y_{[ij]}y'_{[ij]}\left(y_{[ij]}+y'_{[ij]}\right)+2y_{[ij]}y'_{[ij]}\Big]=
\end{eqnarray*}
$=2\sum_{[ij]\in\mathcal P^N_{\mathcal A}}\left[y_{[ij]}y'_{[ij]}\left(1-y_{[ij]}-y'_{[ij]}+y_{[ij]}y'_{[ij]}\right)\right]$ as wanted.
\end{proof}

The join $\vee$ of two (fuzzy) partitions leads to a more complex setting, because it brings about the closure yielding the partition lattice as the
polygon matroid defined on the edges of the complete graph $K_N=(N,N_2)$ (see Sections 1, 2). As already observed, for $P,Q\in\mathcal P^N$, the meet
$P\wedge Q=\underset{P\geqslant[ij]\leqslant Q}{[ij]}$ is coarser than all and only those atoms
$[ij]\in\mathcal P^N_{\mathcal A}$ finer than both $P$ and $Q$. Thus, the meet of partitions is basically the analog of the intersection of subsets $A,B\in 2^N$, and indeed
in the same way obtains through the pair-wise product of indicator functions $I_P,I_Q$ (see above). Conversely, when regarded as a pair-wise operation between indicator
functions $I_P,I_Q$, the join is very different from the union of subsets. In particular, recall that for $A,B\in 2^N$, with characteristic functions
$\chi_A,\chi_B\in\{0,1\}^n$, the union $A\cup B$ obtains as follows: $\chi_{A\cup B}(i)=\max\{\chi_A(i),\chi_B(i)\}\text{ for all }i\in N$.
In words, Boolean vector $\chi_{A\cup B}\in\{0,1\}^n$ has entry 1 where $\chi_A$ and/or $\chi_B$ have entry 1. The same does not apply to partitions $P,Q\in\mathcal P^N$,
as indicator function $I_{P\vee Q}\in\{0,1\}^{\binom{n}{2}}$ may have entry 1 even where both $I_P$ and $I_Q$ have entry 0. As before,
this can be observed already in the simple case where $N=\{1,2,3\}$. To this end, arrange the entries of $I_P,P=P_{\bot},[12],[13],[23],P^{\top}$ by
$I_P=\left(\begin{array}{c}I_P([12])\\ I_P([13])\\ I_P([23])\end{array}\right)$, hence $I_{[12]}=\left(\begin{array}{c}1\\ 0\\ 0\end{array}\right)$ and
$I_{[23]}=\left(\begin{array}{c}0\\ 0\\ 1\end{array}\right)$. Then,
$\max\left\{I_{[12]},I_{[23]}\right\}=\left(\begin{array}{c}1\\ 0\\ 1\end{array}\right)$, but $I_{[12]\vee[23]}=\left(\begin{array}{c}1\\ 1\\ 1\end{array}\right)$,
i.e. $I_{[12]\vee[23]}\geqslant\max\left\{I_{[12]},I_{[23]}\right\}$.

\begin{definition}
In terms of indicator functions $I_P,I_Q,I_{P\vee Q}\in\{0,1\}^{\binom{n}{2}}$, for all atoms $[ij]\in\mathcal P^N_{\mathcal A}$
the join $P\vee Q$ of partitions $P,Q\in\mathcal P^N$ is
\begin{equation*}
I_{P\vee Q}([ij])=\max\left\{I_P([ij]),I_Q([ij]),\underset{i'\in N\backslash\{i,j\}}{\max}I_P([ii'])I_Q([ji'])\right\}\text .
\end{equation*}
\end{definition}

In the same way, the join $\left(y\vee y'\right)_{[ij]}$ of fuzzy partitions $y,y'\in\mathbb P_N$ is given by
$\left(y\vee y'\right)_{[ij]}=\max\left\{y_{[ij]},y'_{[ij]},\underset{i'\in N\backslash\{i,j\}}{\max}y_{[ii']}y'_{[ji']}\right\}$.

\section{Appendix: the consensus partition problem}
Hamming distance between partitions HD was considered for the first time in the mid '60s \cite{Renier1965} in terms of the \textit{consensus (or central) partition} problem, which
is important in many applicative scenarios concerned with statistical classification. From a combinatorial optimization perspective, the problem has generic instance consisting of
a $m$-collection $P_1,\ldots ,P_m\in\mathcal P^N$, $m\geq 2$, and is characterized by firstly selecting a measure of the distance between any two partitions, i.e. a metric
$\delta:\mathcal P^N\times\mathcal P^N\rightarrow\mathbb R_+$. Given this, the objective is to find a partition $\hat P$ minimizing the sum of its distances from the $m$
partitions. That is to say, any $\hat P$ satisfying $\sum_{1\leq k\leq m}\delta(\hat P,P_m)\leq\sum_{1\leq k\leq m}\delta(Q,P_k)$ for all $Q\in\mathcal P^N$ is a consensus
partition. For generic $\delta$, finding a solution $\hat P$ is tipically hard. In particular, if $\delta=MMD$, then each distance $\delta(Q,P_k),1\leq k\leq m$ for any
$Q\in\mathcal P^N$ is computable in $\mathcal O(n^3)$ time \cite[p. 236]{KorteVygen2002}, whereas if $\delta=HD$, then in view of expression (6) above (see Section 3) each
distance $\delta(Q,P_k)$ is computable more rapidly through scalar products. In any case, independently from the chosen metric $\delta$, the main issue is that the size
$\mathcal B_n=|\mathcal P^N|$ of the search space $\mathcal P^N$ makes all approaches relying on direct enumeration simply unviable, at least for relevant values of $n$. The
problem is thus commonly interpreted in terms of heuristics \cite{CeleuxEtAl1989,CentralPartition}, and if $m$ is large and/or $P_1,\ldots ,P_m$ are very far from each other, then
figuring out where to concentrate the search is the fundametal issue.

Although the consensus problem is generally harsh, especially in terms of the required exploration of $\mathcal P^N$, still the analysis conducted thus far identifies conditions
where exact solutions are easy to find. In fact, if the chosen metric is a minimum-$f$-weight partition distance, i.e. $\delta=\delta_f$ with $f\in\mathbb F$, and weighting
function $f$ is either supermodular or else submodular (but not both, see below), then either the meet $\hat P=P_1\wedge\cdots\wedge P_m$ or else the join
$\hat P=P_1\vee\cdots\vee P_m$ of instance elements are consensus partitions. Specifically, the former case applies to Hamming distance or size-based $\delta_s=HD$ and to logical
entropy-based $\delta_h$, while the latter applies to rank-based $\delta_r$ and to co-size-based $\delta_{cs}$. Hence, the computational burden reduces solely to assessing the
$m$ distances between instance elements and their meet (or else their join), with no search need.

\begin{proposition}
If distances between partitions are measured by HD, then the meet of all instance elements achieves consensus, i.e.
\begin{equation*}
\sum_{1\leq k\leq m}HD(P_1\wedge\cdots\wedge P_m,P_k)\leq\sum_{1\leq k\leq m}HD(Q,P_k)
\end{equation*}
for all $Q\in\mathcal P^N$ and all instances $\mathcal I=\{P_1,\ldots ,P_m\}\subseteq\mathcal P^N$.
\end{proposition}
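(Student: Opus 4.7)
My plan is to reduce the claim to an atom-wise inequality and then exploit the supermodularity of the size function $s$. The starting ingredient is the explicit formula $HD(Q,P_k) = s(Q) + s(P_k) - 2s(Q\wedge P_k)$ established in expression (3). Summing over $k$ and discarding the $Q$-independent term $\sum_k s(P_k)$ reduces the problem to minimizing
$$G(Q) := m\cdot s(Q) - 2\sum_{k=1}^m s(Q \wedge P_k).$$
At $Q=\hat P := P_1\wedge\cdots\wedge P_m$, the relation $\hat P\leqslant P_k$ gives $\hat P\wedge P_k=\hat P$ for each $k$, so $G(\hat P) = -m\cdot s(\hat P)$. Thus the claim is equivalent to proving $G(Q)\geq -m\cdot s(\hat P)$ for every $Q\in\mathcal P^N$.

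The second step is to pass to the atom-indicator representation of Section 3. Writing $s(Q)=\sum_{[ij]} I_Q([ij])$ and $s(Q\wedge P_k) = \langle I_Q,I_{P_k}\rangle$, and letting $n_{ij}:=\sum_{k=1}^m I_{P_k}([ij])$ count how many instance partitions are coarser than the atom $[ij]$, one rearranges the target inequality as
$$G(Q) + m\cdot s(\hat P) \;=\; \sum_{[ij]\in\mathcal P^N_{\mathcal A}} \bigl[\,I_Q([ij])(m-2n_{ij})+m\cdot I_{\hat P}([ij])\,\bigr] \;\geq\; 0.$$
For atoms with $n_{ij}=m$ (equivalently $[ij]\leqslant\hat P$) the bracketed quantity equals $m(1-I_Q([ij]))\geq 0$, and for atoms with $n_{ij}\leq m/2$ it is nonnegative irrespective of $I_Q$. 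What remains is to handle the atoms with $m/2<n_{ij}<m$.

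These intermediate atoms are the main obstacle: in isolation, a single such atom would prefer $I_Q([ij])=1$, which conflicts with the choice $Q=\hat P$. The approach I would take is to exploit the matroid closure constraint on $\{[ij]:I_Q([ij])=1\}$ (inherited from the polygon matroid structure on $K_N$, Sections 1 and 2): forcing $Q$ to include any intermediate atom drags in a whole closed family of further atoms, and I would try to show—using supermodularity of $s$ (Proposition 1) applied to the map $Q\mapsto\sum_k s(Q\wedge P_k)$—that the unavoidable cost of the closure always dominates the local gain. A cleaner alternative is induction on $m$: the base case $m=2$ is immediate from horizontal collinearity (Proposition 2), and the inductive step would try to combine the consensus bound for $\{P_1,\ldots,P_{m-1}\}$ with a triangle-inequality argument for the last term $HD(Q,P_m)$. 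I expect the inductive route to be the more tractable of the two, but in either case the crux lies in controlling the interaction between the partition-closure constraint on $Q$ and the "intermediate" atoms where $m/2<n_{ij}<m$.
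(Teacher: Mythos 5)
Your reduction to the atom-wise inequality is correct, and you have put your finger on exactly the right spot: the atoms with $m/2<n_{ij}<m$. But the gap you flag there cannot be closed, because the statement is false. Take $N=\{1,2,3\}$, $m=3$ and $\mathcal I=\{[12],[13],P^{\top}\}$. The meet of the instance is $P_{\bot}$, and $\sum_k HD(P_{\bot},P_k)=s([12])+s([13])+s(P^{\top})=1+1+3=5$, whereas $Q=[12]$ gives $HD([12],[12])+HD([12],[13])+HD([12],P^{\top})=0+2+2=4$. In your notation the atom $[12]$ has $n_{12}=2>m/2$ and $I_{\hat P}([12])=0$, so choosing $I_Q([12])=1$ contributes $m-2n_{12}=-1$; and here the matroid-closure constraint you hoped would charge for this choice is vacuous, since a single atom is itself a partition, so the local gain is realized at zero closure cost. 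Your second route fares no better: the inductive step would need to pass from minimizing $\sum_{k\leq m}HD(Q,P_k)+HD(Q,P_{m+1})$ to minimizing $HD(P_1\wedge\cdots\wedge P_m,Q)+HD(Q,P_{m+1})$, and these are different objectives; the example above (already at $m=3$) shows the induction cannot be repaired.

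For the record, the paper's own proof is precisely your ``cleaner alternative'': induction on $m$ with the base case supplied by horizontal collinearity plus the triangle inequality, followed by exactly the invalid replacement of the full sum by the two-term sum involving the previous consensus. So the paper's argument breaks at the same place your first route does, and the counterexample applies to it as well. What your atom-wise computation actually establishes is that the unconstrained minimizer of $\sum_k HD(Q,P_k)$ over $\{0,1\}^{\binom{n}{2}}$ is the majority rule $I_Q([ij])=1\Leftrightarrow n_{ij}>m/2$; the true consensus partition is the partition whose indicator vector comes closest to that point, which is the classical median-partition problem and is not in general attained at the meet. The proposition does hold for $m=2$, where it reduces to horizontal collinearity together with the triangle inequality, but not for $m\geq 3$.
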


\begin{proof}
Firstly note that for $m=2$ this consensus condition is in fact a restatement of horizontal collinearity and triangle inequality (see Propositions 1 and 2). Hence,
in order to use induction, assume that the condition holds for some $m\geq 2$, and denote by $\hat P$ the solution or consensus partition of a $m+1$-instance
$P_1\ldots,P_m,P_{m+1}$. By assumption, $P_1\wedge\cdots\wedge P_m$ is a solution of instance $P_1,\ldots ,P_m$, thus novel solution $\hat P$ minimizes the sum of its
distances from the previous solution $P_1\wedge\cdots\wedge P_m$ and from the novel instance element $P_{m+1}$, i.e.
\begin{equation*}
HD(P_1\wedge\cdots\wedge P_m,\hat P)+HD(\hat P,P_{m+1})\leq HD(P_1\wedge\cdots\wedge P_m,Q)+HD(Q,P_{m+1})
\end{equation*}
for all $Q\in\mathcal P^N$. Then, horizontal collinearity and triangle inequality entail
\begin{equation*}
HD(P_1\wedge\cdots\wedge P_m,\hat P)+HD(\hat P,P_{m+1})\geq HD(P_1\wedge\cdots\wedge P_m,P_{m+1})\text ,
\end{equation*}
with equality if $\hat P=P_1\wedge\cdots\wedge P_m\wedge P_{m+1}$.
\end{proof}  

Concerning the value taken by the sum $\sum_{1\leq k\leq m}HD(P_k,P_1\wedge\cdots\wedge P_m)$ of distances between instance elements and the consensus partition,
observe that for all $Q\in\mathcal P^N$ and all $\mathcal I=\{P_1,\ldots ,P_m\}$
\begin{equation*}
\sum_{1\leq k\leq m}HD(Q,P_k)=\sum_{1\leq k<k'\leq m}\frac{HD(P_k,Q)+HD(Q,P_{k'})}{m-1}\text .
\end{equation*}
By triangle inequality,
\begin{equation*}
\sum_{1\leq k<k'\leq m}\frac{HD(P_k,Q)+HD(Q,P_{k'})}{m-1}\geq\sum_{1\leq k<k'\leq m}\frac{HD(P_k,P_{k'})}{m-1}\text ,
\end{equation*}
with equality if $Q=P_k\wedge P_{k'}$ for all $1\leq k<k'\leq m$, which is not possible unless $m=2$. Now consider partition function
$\mathcal{D_I}:\mathcal P^N\rightarrow\mathbb R_+$ defined by
\begin{eqnarray*}
\mathcal{D_I}(Q)&=&
\frac{1}{m-1}\sum_{1\leq k<k'\leq m}[HD(P_k,Q)+HD(Q,P_{k'})-HD(P_k,P_{k'})]\\
&=&\frac{2}{m-1}\sum_{1\leq k<k'\leq m}[s(Q)-s(P_k\wedge Q)-s(P_{k'}\wedge Q)+s(P_k\wedge P_{k'})]\text ,
\end{eqnarray*}
where $\mathcal I=\{P_1,\ldots ,P_m\}$ denotes the given instance. Function $\mathcal{D_I}$ attains its minimum at consensus partition
$\hat P_{\mathcal I}:=P_1\wedge\cdots\wedge P_m$, where
\begin{equation*}
\mathcal{D_I}(\hat P_{\mathcal I})=\frac{2}{m-1}\sum_{1\leq k<k'\leq m}[s(P_k\wedge P_{k'})-s(\hat P_{\mathcal I})]
\end{equation*}
as $HD(P_k,\hat P_{\mathcal I})+HD(\hat P_{\mathcal I},P_{k'})=HD(P_k,P_{k'})+2[s(P_k\wedge P_{k'})-s(\hat P_{\mathcal I})]$ for all $1\leq k<k'\leq m$.

Exactly the same argument applies to logical entropy-based $\delta_h$, entailing that
$\sum_{P\in\mathcal I}\delta_h(P,\hat P_{\mathcal I})\leq\sum_{P\in\mathcal I}\delta_h(P,Q)$ for all $Q\in\mathcal P^N$ and all instances $\mathcal I$.

For rank-based $\delta_r$ and co-size-based $\delta_{cs}$ distances, horizontal collinearity holds in terms of the join (rather than in terms of the meet of any
$P,Q\in\mathcal P^N$, see above), meaning that $\delta\in\{\delta_r,\delta_{cs}\}$ yields
\begin{equation*}
\delta(P,P')+\delta(P',Q)\leq\delta(P,Q)\text{ for all }P,P',Q\in\mathcal P^N\text ,
\end{equation*}
with equality if $P'=P\vee Q$. Thus the join (rather than the meet) of instance elements achieves consesus, i.e.
$\sum_{P\in\mathcal I}\delta(P,\vee_{P\in\mathcal I}P)\leq\sum_{P\in\mathcal I}\delta(P,Q)$ for all $Q\in\mathcal P^N$ and all instances $\mathcal I$, while analog
results apply, \textit{mutatis mutandis}, to partition function $\mathcal{D_I}$.

The setting developed thus far also enables to frame the consensus partition problem in a novel manner, which in turn widens the spectrum of conceivable fuzzy models
for partitions. In order to briefly outline such new possibilities, firstly recall that a fuzzy subset of $N$ is a function $q:N\rightarrow[0,1]$ or,
from an equivalent geometric perspective, a point $q=(q_1,\ldots ,q_n)\in[0,1]^n$ in the $n$-dimensional unit hypercube, where $q_i=q(i)$, $i\in N$. Accordingly, a fuzzy
partition is commonly intended as a partition $P=\{A_1,\ldots ,A_{|P|}\}$ with associated $|P|$ points $q^A\in[0,1]^n,A\in P$ in the hypercube such that $q_i^A\in(0,1]$
for all $i\in A$ and all $A\in P$. On the other hand, a fuzzy graph with vertex set $N$ may be seen as one whose edge set is a fuzzy subset of $N_2$,
i.e. a function $t:N_2\rightarrow[0,1]$ or, from an equivalent geometric perspective, a point in the $\binom{n}{2}$-dimensional unit hypercube, i.e.
$t=\left(t_{{\{i,j\}}_1},\ldots ,t_{{\{i,j\}}_{\binom{n}{2}}}\right)\in[0,1]^{\binom{n}{2}}$.

By looking at partitions of $N$ as graphs with vertex set $N$ each of whose components is complete, fuzzy partitions can be regarded as fuzzy graphs with complete components.
Along this route, the fuzzy consensus partition $t_{\mathcal I}$ associated with instance $\mathcal I\subseteq\mathcal P^N$ may be defined to be the point in the interior of
the polytope $\mathbb P$ of partitions (see above) corresponding to the center of the convex hull $conv(\{I_P:P\in\mathcal I\})$ given by all convex combinations of the
indicator functions $I_P,P\in\mathcal I$ of instance elements. In this way, the fuzzy consensus partition is a function ranging in the unit interval $[0,1]$ and taking values
on the atoms of $\mathcal P^N$, i.e. $t_{\mathcal I}:\mathcal P^N_{(1)}\rightarrow[0,1]$. In particular,
\begin{equation*}
t_{\mathcal I}([ij])=\frac{1}{|\mathcal I|}\sum_{P\in\mathcal I}I_P([ij])\text{ for all atoms }[ij]\in\mathcal P^N_{(1)}\text .
\end{equation*}

In this framework, the \textit{strong patterns} of instance $\mathcal I$ considered in \cite{CentralPartition} are the blocks of partition $P(t_{\mathcal I})$ obtained through
defuzzification of $t_{\mathcal I}$ as follows:
\begin{equation*}
P(t_{\mathcal I})=\underset{t_{\mathcal I}([ij])=1}{\vee}[ij]\text .
\end{equation*}
In words, $P(t_{\mathcal I})$ obtains as the join of all atoms where the fuzzy consensus partition attains its maximum, i.e. 1.

\section{Conclusion}
This work considers distances between partitions
by focusing on lattice theory and relying on discrete methods. Specifically, it firstly develops from the idea of reproducing the traditional Hamming distance between subsets by
counting unordered pairs of partitioned elements or atoms of the partition lattice. Although counting ordered and/or unordered pairs is not new (see \cite[Section 2.1]{Meila2007}
for a survey), still the Hamming distance between partitions HD is here analyzed from a novel geometric perspective. Special attention is placed on the distance between complements
in comparison with two alternative partition distance measures proposed in recent years, namely MMD and VI. Given its low computational complexity combined with fine measurement
sensitivity, HD may be considered as an alternative to MMD and VI for applications.

Like the cardinality of the symmetric difference between subsets is a count of atoms of a Boolean lattice, in the same way HD relies on the size, which counts the atoms
finer than partitions, but while the cardinality or rank of subsets is a valuation, i.e. both supermodular and submodular, the size of partitions is supermodular, in that valuations
of the partition lattice are constant partition functions \cite{Aigner79}. Also, in view of expression $|A\cup B|-|A\cap B|$ for the Hamming distance between subsets
$A,B$, it may seem reasonable to consider distances between partitions $P,Q$ of the form $f(P\vee Q)-f(P\wedge Q)$ for some symmetric and order preserving/inverting $f$,
i.e. $f\in\mathbb F$. However, such a distance takes the same value $f(P^{\top})-f(P_{\bot})$ whenever $P$ and $Q$ are complementary partitions (see Section 4), and this
should be avoided in view of \cite[Theorem 1]{Stanley1971}.

The geometric approach adopted here enables to analyze further partition distances obtained by replacing the size with alternative partition functions such as entropy, rank,
logical entropy and co-size. In general, any symmetric and order-preserving/inverting partition function $f$ provides a distance between partitions $P,Q$ by considering the four
values $f(P),f(Q),f(P\wedge Q)$ and $f(P\vee Q)$. Specifically, $f$ defines weights on edges of the Hasse diagram (or 0/1-polytope) of partitions such that the so-called minimum-$f$-weight
distance between any $P,Q$ is the weight of a lightest $P-Q$-path. Depending on whether $f$ is supermodular or else submodular and order-preserving or else order-inverting, a
minimum-$f$-weight path between $P$ and $Q$ visits their meet or else their join, and viceversa. These four possibilities are summarized in Table 3, Section 5. In particular, HD is the
minimum-$s$-weight distance $\delta_s$, where partition function $s$ is the size, while VI is the minimum-$e$-weight distance $\delta_e$, where partition function $e$ is the entropy.

Any distance is of course normalized when considered as the ratio to its maximum value $d_{\max}$. On the other hand, it may be relevant to consider such a maximum as a function
$d_{\max}(n)$ of the number $n$ of partitioned elements, with focus on the first-order difference $\mathcal Dd_{\max}(n)=d_{\max}(n+1)-d_{\max}(n)$ and on the second-order one
$\mathcal D^2d_{\max}(n)=d_{\max}(n+2)-2d_{\max}(n+1)+d_{\max}(n)$. For HD both differences are strictly positive: $\mathcal DHD_{\max}(n)=n$ and $\mathcal D^2HD_{\max}(n)=1$, and
these are exactly the same values $\mathcal D|A\Delta B|,\mathcal D^2|A\Delta B|$ as for the traditional Hamming distance $|A\Delta B|$ between subsets $A,B$. For entropy-based
distance VI, the former $\mathcal DVI_{\max}(n)=\log(n+1)-\log(n)$ is positive while the latter $\mathcal D^2VI_{\max}(n)=\log(n+2)-2\log(n+1)+\log(n)$ is negative by concavity of
the $\log$ function. For maximum matching distance $\mathcal DMMD_{\max}(n)=1$ while $\mathcal D^2MMD_{\max}(n)=0$, and the same applies to rank-based minimum-weight distance
$\delta_r$ outlined in Example 2, Section 5. For logical entropy-based minimum-weight distance $\mathcal D\delta_{h-max}(n)=\frac{1}{(n+1)n}$ and
$\mathcal D^2\delta_{h-max}(n)=\frac{-2}{(n+2)(n+1)n}$.

By extending attention from edges and vertices of the 0/1-polytope of partitions to the whole of this latter, the general aproach based on atoms also applies to the fuzzufication
of partitions. In particular, fuzzy clusterings or membership matrices of any dimension are turned into fuzzy subsets of atoms of the partition lattice, and thus distances between
such matrices may be computed through common Euclidean norm in $\mathbb R^{\binom{n}{2}}$.

\bibliographystyle{abbrv}
\bibliography{HammingDistanceLatticeMetrics}

\end{document}